\documentclass[11pt,a4paper]{article}
\usepackage[a4paper,margin=1.02in,includefoot]{geometry}
\usepackage[T1]{fontenc}
\usepackage{lmodern}
\usepackage{iftex}
\ifPDFTeX
  \input{glyphtounicode}
  \IfFileExists{glyphtounicode-cmex.tex}{\input{glyphtounicode-cmex}}{}
\fi
\usepackage{amsmath,amssymb,amsfonts,amsthm,mathtools}
\usepackage[title]{appendix}
\usepackage{booktabs}
\usepackage{placeins}
\usepackage[hidelinks]{hyperref}
\AddToHook{cmd/section/before}{\ifnum\value{section}=8\relax\FloatBarrier\fi}
\AddToHook{env/thebibliography/begin}{\interlinepenalty=10000 \mathchardef\UrlBigBreakPenalty=10000\relax}

\AddToHook{env/appendices/begin}{\numberwithin{equation}{section}}
\DeclarePairedDelimiter\norm{\lVert}{\rVert}

\newcommand{\cE}{\mathcal{E}}
\newcommand{\cH}{\mathcal{H}}
\newcommand{\cP}{\mathcal{P}}
\newcommand{\cS}{\mathcal{S}}
\newcommand{\E}{\mathbb{E}}
\newcommand{\Z}{\mathbb{Z}}
\newcommand{\eqdef}{\triangleq}

\DeclareMathOperator{\supp}{supp}

\theoremstyle{plain}
\newtheorem{theorem}{Theorem}
\newtheorem{proposition}[theorem]{Proposition}
\newtheorem{corollary}[theorem]{Corollary}
\newtheorem{lemma}[theorem]{Lemma}

\theoremstyle{remark}
\newtheorem{example}{Example}
\newtheorem{remark}{Remark}

\theoremstyle{definition}
\newtheorem{definition}{Definition}

\hypersetup{
  bookmarksdepth=subsubsection,
  bookmarksnumbered=true,
  pdfdisplaydoctitle=true,
  pdftitle={Bosonic Loss Codes with Bounded Occupation},
  pdfauthor={Avraham Kreindel; Aryeh Lev Zabokritskiy (Yohananov)},
  pdfsubject={Quantum error correction under bounded bosonic occupation},
  pdflang={en-US},
  pdfkeywords={Quantum error correction, bosonic loss, Fock-state codes,
  multiset codes, discrete simplex, Tverberg theorem}
}
\allowdisplaybreaks

\title{Bosonic Loss Codes with Bounded Occupation}
\author{Avraham~Kreindel\textsuperscript{1}\quad
  Aryeh~Lev Zabokritskiy (Yohananov)\textsuperscript{2,3}\\[0.6em]
  \small\textsuperscript{1}Department of Computer Science, Reichman University,\\
  \small Herzliya, Israel\\[0.25em]
  \small\textsuperscript{2}Department of Computer Science,\\
  \small Tel-Hai University of Kiryat Shmona in the Galilee,\\
  \small Kiryat Shmona, Israel\\[0.25em]
  \small\textsuperscript{3}MIGAL -- Galilee Research Institute, Kiryat Shmona, Israel\\[0.5em]
  \small\href{mailto:avrahamkreindel@gmail.com}{\texttt{avrahamkreindel@gmail.com}}\\
  \small Corresponding author:
  \href{mailto:yuhanalev@telhai.ac.il}{\texttt{yuhanalev@telhai.ac.il}}
}
\date{}

\begin{document}
\maketitle

\begin{abstract}
We study Fock-state codes on \(q\) bosonic modes, with total excitation
\(N\) and at most \(B\) excitations per mode, that exactly correct every
pattern of at most \(\delta N\) photon losses, for \(q\) proportional to
\(N\). For every fixed cap \(B\) and every ratio \(q/N>1/B\), we prove
explicit achievable rates, positive for all sufficiently small
\(\delta>0\), in logical qubits per excitation. At \(B=2\) and \(q=N\),
the rate is positive for every \(\delta<0.1355\) and is about \(0.281\)
at \(\delta=0.1\), compared with the previously established exact-code
qutrit rate of about \(0.275\). We also give an explicit upper bound on
the dimension of every such exact code, including degenerate codes.
For the pure-loss channel truncated to at most \(\delta N\) losses and
renormalized, we obtain approximate codes with exponentially small
worst-case error. When the loss probability is positive and at most \(1/(B+1)\),
they improve our positive exact-code rate bounds; at \(B=2\), \(q=N\),
and loss probability \(0.1\), they remain positive up to
\(\delta\approx0.158\).
\end{abstract}

\medskip
\noindent\textbf{Keywords.} Quantum error correction, bosonic loss, Fock-state codes,
multiset codes, discrete simplex, Tverberg theorem

\smallskip
\noindent\textbf{Mathematics Subject Classification (2020).} 81P70, 94B25, 52A35

\section{Introduction}
\label{sec:intro}

Fock-state codes store quantum information in number states of \(q\)
bosonic modes with a fixed total excitation \(N\) and protect it against
photon loss; amplitude-damping codes with constant excitation or low
occupation were studied in
\cite{ChuangLeungYamamoto1997,WasilewskiBanaszek2007,
GrasslKongWeiYinZeng2018,OuyangChao2020}.
Aydin, Albert, and Barg (AAB) established dimension guarantees for exact
Fock-state codes supported on classical sets of occupation vectors that
remain distinguishable under the allowed losses
\cite{AydinAlbertBarg2026}; related classical-to-quantum constructions appear in
\cite{Bumgardner2011,CaoKribsLiNelsonPoonZeng2021,MovassaghOuyang2024}.
Elimelech, Aydin, and Barg then constructed asymptotically good Fock-state
codes, which correct a number of losses proportional to \(N\) at positive
rate; in their exact codes each mode holds \(O(\log N)\) excitations
\cite{ElimelechAydinBarg2026}, and they ask for a sampling
distribution of classical codewords with linear distance, positive rate, and occupation bounded
by a constant \cite[Sec.~6.2]{ElimelechAydinBarg2026}.  We study codes in
which every mode holds at most a constant number \(B\) of excitations.

The number states of this constant-excitation sector are indexed by the
discrete simplex
\[
  \cS_{q,N}
  \eqdef
  \left\{
    \underline{n}\in\Z_{\geq0}^{q}:
    \sum_{i=0}^{q-1}n_i=N
  \right\},
\]
and the cap \(B\) restricts the code to the Fock span of the capped simplex
\(\cS_{q,N}^{(B)}\) of vectors in \(\cS_{q,N}\) with every \(n_i\leq B\).
A loss pattern \(\underline{e}\leq\underline{n}\), of total order
\(|\underline{e}|=\sum_ie_i\), removes \(e_i\) excitations from mode \(i\).  Under pure loss with a common loss
probability \(\gamma\), its Kraus operator maps
\(\lvert\underline{n}\rangle\) to a multiple of
\(\lvert\underline{n}-\underline{e}\rangle\), and exact correction of all
loss patterns of total order at most \(t\) does not depend on \(\gamma\).
Two number states can reach a common state after at most \(t\) losses
precisely when their deletion distance
\(\frac12\norm{\underline{n}-\underline{m}}_1\) is at most \(t\).  A
separated support, that is, a set of occupation vectors that are pairwise
more than \(t\) apart, is therefore a multiset code correcting \(t\)
deletions \cite{AydinAlbertBarg2026,KovacevicTan2018,
KreindelEssayagZabokritskiy2026a,KreindelEssayagZabokritskiy2026b}.
We take \(q=\lfloor\alpha N\rfloor\) modes and \(t=\lfloor\delta N\rfloor\)
losses with fixed \(\alpha\) and \(\delta\), and assume \(\alpha B>1\): for
\(\alpha B<1\) the capped simplex is empty, and for \(\alpha B=1\) it has at
most one point.  Our question is how large the dimension exponent
\(N^{-1}\log_2K\) of an exact code, its number of logical qubits per
excitation, can be as a function of \(B\), \(\alpha\), and \(\delta\), and
how close the achievable exponents come to upper bounds valid for every code
in the Fock span of the capped simplex.

Positive rate and a linear number of correctable losses are already
possible under a fixed cap. The dual-rail codes of Duan, Grassl, Ji, and
Zeng \cite[Sec.~III, Lem.~1 and Thm.~1]{DuanGrasslJiZeng2010} have cap one
and \(q=2N\); see also \cite{Ouyang2021}. Their padded variants cover
every \(\alpha B>1\). Our main result gives explicit achievable dimension
exponents \(k_B(\alpha,\delta)\) for every fixed \(B\) and
\(\alpha B>1\), positive for all sufficiently small \(\delta>0\).
The contribution is quantitative, rather than the existence of positive
rate under a fixed cap. The exponent is specified by a finite-dimensional
optimization over occupation histograms, which record the number of
modes at each occupation. The classical bounds are fixed-composition
forms of the generalized Gilbert--Varshamov bound
\cite{GuFuja1993,Tolhuizen1997,SomekhBaruchScarlettGuillen2019}.

At \(B=2\) and \(q=N\), our exponent has a closed form, positive for
\(0<\delta<\delta_2^\ast\approx0.1355\) and about \(0.281\) at
\(\delta=0.1\). For qudit amplitude damping without fixed total excitation,
Elimelech, Albert, and Barg give an exact-code rate
\cite[Sec.~5.2.2 and Eq.~(82)]{ElimelechAlbertBargAQEC2026}; in logical
qubits per mode, its qutrit case is about \(0.275\) at \(\delta=0.1\)
and positive up to \(\delta\approx0.1342\). For our balanced sector
\(q=N\), rates per mode and per excitation coincide, as do the loss
fractions \(t/q=t/N\); this comparison does not impose fixed excitation
on the qutrit benchmark.
Our bound for the whole capped simplex reproduces that rate, whereas the
bound for an optimized histogram layer gives \(0.281\). The corresponding
classical exponent coincides with Tolhuizen's ternary constant-weight
exponent for Manhattan distance \cite{TolhuizenTernary2010}.
For comparison, our achievable exponents for padded dual-rail codes and
for concatenations of the qutrit inner code of
\cite[arXiv version, Sec.~VIII]{GrasslKongWeiYinZeng2018} are about
\(0.098\) and \(0.171\), respectively, at \(\delta=0.1\).
Our cap-two exponent exceeds both benchmarks for \(\delta\leq0.130\),
although they remain positive up to approximately \(0.139\) and \(0.140\),
slightly beyond \(\delta_2^\ast\). The comparison also depends on the cap:
at cap one and \(q=2N\), our closed-form exponent is positive for
\(\delta<0.151\), while dual-rail codes with quantum Gilbert--Varshamov
outer codes \cite{CalderbankRainsShorSloane1997,FengMa2004} remain positive
up to approximately \(0.189\) and give better rates for
\(\delta\geq0.1212\).

The finite-dimensional guarantee specializes that of
\cite{AydinAlbertBarg2026} to the cap. A classical support
\(C\subseteq\cS_{q,N}^{(B)}\) of minimum deletion distance at least
\(t+1\) supports an exact \(K\)-dimensional code whenever
\[
  |C|\geq(K-1)m_B(q,t)+1,
  \qquad m_B(q,t)\eqdef[z^t](1+z+\cdots+z^B)^q.
\]
Here \(m_B(q,t)\) counts order-\(t\) loss patterns with at most \(B\)
losses per mode, in place of the unrestricted count
\(\binom{q+t-1}{q-1}\) in \cite{AydinAlbertBarg2026}.
Cap restrictions and redundant error conditions were already treated in
\cite[Remark~5 and Sec.~5.2.2]{ElimelechAlbertBargAQEC2026}.
The two counts agree for \(t\leq B\); at \(B=2\), \(q=N\), and
\(\delta=0.1\), the cap-dependent count alone improves the exponent only
from about \(0.280\) to \(0.281\).

Finite-length questions also require the distance distribution and ball
volumes of the capped simplex. We give exact coefficient formulas for
these quantities and the resulting guaranteed quantum dimension, as cap
truncations of the uncapped enumerators in
\cite{KreindelEssayagZabokritskiy2026b}. In the tabulated cases, the exact
average ball size guarantees several times the dimension obtained from
the maximum ball. We also extend the extremal-ball results of that paper:
the deletion-ball volume is Schur-concave in its center, with the smallest
balls at the saturated point, whose coordinates are as concentrated at
\(B\) as possible, and the largest at the balanced point, whose coordinates
are as equal as possible. Exact generating functions describe both.
A separate loss-overlap enumerator weights pairs of number states by sums
of squared Kraus overlaps determined by their joint type. For fixed
\(B\), it admits a polynomial-size recurrence; it is not used in the
existence bounds.

These ball-volume results give sphere-packing bounds for separated
classical supports, not for arbitrary quantum codes. At \(B=2\) and
\(q=N\), they limit the exponents obtainable from a separated support
through the count \(m_2\) to about \(0.620\) at \(\delta=0.1\), and to
positive values only for \(\delta<0.2081\). For arbitrary exact codes,
including degenerate codes, we prove instead
\[
  K\,h_B(N,r)\leq[z^{N-r}](1+z+\cdots+z^B)^q
  \qquad(0\leq r\leq t),
\]
where \(h_B(N,r)\) counts order-\(r\) loss patterns acting nontrivially on
the saturated point. The underlying derivative-dimension and rank-sensitive
quantum packing bounds are due to
\cite{GarciaMarcoKoiranPecatteThomasse2017} and
\cite{ChiribellaEtAl2011}, respectively. A related Hamming-type bound for
qudit amplitude damping without fixed excitation appears in
\cite{DuttaJainMandayam2026}.
Our converse and the coherent-information bounds derived here are
incomparable: at \((B,\alpha,\delta)=(2,1,0.1)\), our upper exponent is
about \(1.188\), below the cap-free coherent-information bound \(1.413\)
but far above the achieved \(0.281\); it remains positive for
\(\delta<1\), whereas the latter bound forces zero rate for
\(\delta\geq1/2\). Channel-adapted enumerators and linear-programming
bounds \cite{OuyangLai2020,OuyangLai2021}, and intrinsic MacWilliams
transforms for symmetric-power codes
\cite{KubischtaTeixeira2026,Teixeira2026}, provide further related results;
we do not compare them numerically with our converse.

The sampling question of \cite[Sec.~6.2]{ElimelechAydinBarg2026} asks for
classical codewords with positive rate, linear distance, and uniformly
bounded occupation. Exact uniform sampling from the capped simplex and
from each histogram layer has polynomial expected bit complexity per
sample, within the general sampling framework of
\cite{FlajoletZimmermannVanCutsem1994}. At \(B=2\) and \(q=N\), the
random-support guarantees of \cite{ElimelechAlbertBargAQEC2026}
specialize to uniform sampling from optimized histogram layers: after
expurgation, the retained separated supports attain every rate below
the optimized classical exponent for \(0<\delta<1/2\).
This layer distribution has the requested bounded-occupation property.

For approximate correction, Elimelech, Albert, and Barg treat channel
families controlled by an error set and an error-intensity parameter
\cite{ElimelechAlbertBargAQEC2026}, with universal recovery established in
\cite{ElimelechAlbertBargUniversal2026}. Our result concerns one pure-loss
channel with loss probability \(\gamma\), truncated to at most \(t\)
losses and renormalized, on a fixed separated support of size \(L\).
For a uniformly random partition of this support into \(K\) equal blocks,
the uniform superpositions over the blocks define an equal-weight
partition code. With high probability its worst-case error is polynomially
small in \(L\), hence exponentially small in \(N\) for our supports, provided
\(KM\kappa\leq L^{1-\eta}\) for a fixed \(\eta>0\).
Here the error is measured by the entanglement-fidelity criterion of
\cite{ElimelechAydinBarg2026}; \(M\) counts loss patterns acting nontrivially
on the support, and the error intensity \(\kappa\) is the largest
probability of a single loss pattern at a support point
\cite{ElimelechAlbertBargAQEC2026}. The scale \(KM\kappa\ll L\) is known
for Haar-random subcodes in average-case fidelity
\cite[Eq.~(30)]{Klesse2007}, and for Pauli errors in worst-case error
\cite{MaTanWright2025}; here it holds for equal-weight partition codes
under the worst-case criterion. Adapted to the same fixed support and
evaluated with the same error-intensity bound, the bound in
\cite{ElimelechAlbertBargAQEC2026} guarantees smaller dimension exponents
than ours for every \(\gamma\), but covers the broader channel-family
setting just described.

For \(0<\gamma\leq1/(B+1)\), we give an explicit exponent for
\(\kappa\) and obtain approximate-code rates exceeding \(k_B\)
wherever \(k_B(\alpha,\delta)>0\).
At \(B=2\), \(q=N\), and \(\gamma=0.1\), they remain positive up to
\(\delta\approx0.158\), compared with \(\delta_2^\ast\approx0.1355\)
for our exact codes. For larger \(\gamma\), the estimate \(\kappa\leq1\)
used here gives no such improvement. When \(\delta>\gamma\), these codes
also have vanishing worst-case error on the untruncated channel
\cite[Thm.~5]{ElimelechAydinBarg2026}.

Finally, caps that tend to infinity provide a comparison with the fixed-cap
regime. The full-simplex average-ball exponent and its Marcus--Roth
refinement, which fixes the number of empty modes, were studied in
\cite{GoyalDaoKovacevicKiah2025}; the corresponding full-simplex quantum
rate is due to \cite{ElimelechAlbertBargAQEC2026}, which does not give
the Marcus--Roth quantum rate. We show that both exponents survive every
hard occupation cap tending to infinity with \(N\). At \(q=N\), the
resulting exact codes have positive rate up to approximately \(0.188\)
and \(0.192\), respectively, compared with \(0.177\) for the maximum-ball
codes of \cite{ElimelechAydinBarg2026}. The occupation bound
\(O(\log N)\) in \cite{ElimelechAlbertBargAQEC2026} applies to random
approximate codes after a vanishing part of their weight is discarded;
our hard caps bound every occupation of exact codes.

Two-mode loss codes and covariant bosonic encodings
\cite{ManesSubramanianJiang2026,StAmandBurelleRoyer2026} concern a different
regime from the growing number of modes and fixed cap considered here.
Our constructions are existential: efficient constructions of the
optimized classical supports, exact encoders, and recovery maps are not
supplied.

Section~\ref{sec:preliminaries} fixes the loss model and the AAB
construction, and Section~\ref{sec:restriction-principles} restricts the
average-ball bound to subsets and specializes the construction to the cap.
Section~\ref{sec:cap-two} proves the cap-two exponent, and
Section~\ref{sec:general-caps} treats general fixed caps, the cap-one boundary, and finite
enumerators.  Section~\ref{sec:converses} gives classical sphere-packing
and quantum converse bounds, Section~\ref{sec:random} treats sampling and
approximate correction for one normalized channel, and Section~\ref{sec:log-caps} treats caps
that grow with \(N\).  Section~\ref{sec:conclusion} lists open problems.

\section{Codes in the discrete simplex and quantum lifting}
\label{sec:preliminaries}

This section fixes the classical model, the bosonic loss channel, and the
Aydin--Albert--Barg construction, here called lifting, in the affine-hull
form used throughout.  As in \cite{AydinAlbertBarg2026,ElimelechAydinBarg2026}, the
number \(q\) of modes comes first and the total excitation \(N\) second in
\(\cS_{q,N}\).  Occupation and loss vectors are underlined, with mode
indices \(0,\ldots,q-1\); unqualified coordinate sums and products use
these indices.

\begin{definition}[Discrete simplex and deletion distance]
\label{def:simplex}
For integers \(N\geq0\) and \(q\geq1\), let
\[
  \cS_{q,N}
  \eqdef
  \left\{
    \underline{n}=(n_0,\ldots,n_{q-1})\in\Z_{\geq 0}^{q}:
    \sum_{i=0}^{q-1}n_i=N
  \right\}.
\]
For \(\underline{n},\underline{m}\in\cS_{q,N}\), define
\[
  d(\underline{n},\underline{m})
  \eqdef
  \frac12\norm{\underline{n}-\underline{m}}_1.
\]
\end{definition}

Write \((x)_+\eqdef\max\{x,0\}\), applying this operation coordinatewise to
vectors.  Since the coordinate sums of \(\underline{n}\) and \(\underline{m}\) agree,
\begin{equation}
\label{eq:positive-part-distance}
  d(\underline{n},\underline{m})
  =
  \sum_{i=0}^{q-1}(n_i-m_i)_+
  =
  \sum_{i=0}^{q-1}(m_i-n_i)_+.
\end{equation}
Distance controls loss confusability: for
\(\underline{n},\underline{m}\in\cS_{q,N}\),
\(d(\underline{n},\underline{m})\leq t\) if and only if there are loss
patterns \(\underline{e}\leq\underline{n}\) and
\(\underline{f}\leq\underline{m}\) in \(\Z_{\geq0}^{q}\) with
\begin{equation}
\label{rem:loss-confusability}
  |\underline{e}|=|\underline{f}|\leq t,
  \qquad
  \underline{n}-\underline{e}=\underline{m}-\underline{f},
\end{equation}
where \(|\underline{e}|\eqdef\sum_ie_i\); we call
\(\underline{n}-\underline{e}=\underline{m}-\underline{f}\) a common descendant
of \(\underline{n}\) and \(\underline{m}\).  Indeed, in one direction take
\(\underline{e}\eqdef(\underline{n}-\underline{m})_+\) and
\(\underline{f}\eqdef(\underline{m}-\underline{n})_+\); the converse follows from
\(d(\underline{n},\underline{m})\leq|\underline{e}|\).

A classical code is a subset \(C\subseteq\cS_{q,N}\), and its
minimum distance is the minimum of \(d(\underline{n},\underline{m})\) over
distinct codewords.  Identifying \(\underline{n}\) with the multiset
containing \(n_i\) copies of the symbol \(i\), the associated multisets
correct \(t\) deletions when the minimum distance is at least \(t+1\).

For any metric space \(X\), point \(\underline{n}\in X\), and radius \(t\geq0\),
write
\[
  B_X(\underline{n},t)
  \eqdef
  \{\underline{m}\in X:d(\underline{n},\underline{m})\leq t\}.
\]
For any finite subset \(T\) of a metric space, we use the ordered
close-pair count (the total ball size of \cite{GoyalDaoKovacevicKiah2025})
\begin{equation}
\label{eq:ordered-pair-count}
  P_T(t)
  \eqdef
  \left|
    \left\{
      (\underline{n},\underline{m})\in T^2:
      d(\underline{n},\underline{m})\leq t
    \right\}
  \right|.
\end{equation}
Thus \(P_T(t)/|T|\) is the average radius-\(t\) ball size in the metric
induced on \(T\).

The number-state basis is
\(\lvert\underline{n}\rangle\eqdef\lvert n_0\rangle\otimes\cdots\otimes
\lvert n_{q-1}\rangle\).  The associated constant-excitation Fock space is
\[
  \cH_{q,N}
  \eqdef
  \operatorname{span}
  \left\{
    \lvert\underline{n}\rangle:
    \underline{n}\in\cS_{q,N}
  \right\}.
\]
A multimode loss pattern is a vector
\(\underline{e}\in\Z_{\geq0}^{q}\), of total order
\(|\underline{e}|\).  For loss probability \(0<\gamma<1\), the
pure-loss Kraus operator on the \(N\)-excitation sector is
\[
  A_{\underline{e}}^{(\gamma)}\lvert\underline{n}\rangle
  \eqdef
  \mathbf 1_{\{\underline{e}\leq\underline{n}\}}
  \gamma^{|\underline{e}|/2}(1-\gamma)^{(N-|\underline{e}|)/2}
  \sqrt{\prod_i\binom{n_i}{e_i}}\,
  \lvert\underline{n}-\underline{e}\rangle.
\]
Thus it vanishes unless \(\underline{e}\leq\underline{n}\) coordinatewise.  All
modes share the loss probability \(\gamma\).  The factor
\(\gamma^{|\underline{e}|/2}(1-\gamma)^{(N-|\underline{e}|)/2}\) depends only on
\(|\underline{e}|\), so exact correction of these operators does not depend on
\(\gamma\); the binomial factor enters the diagonal Knill--Laflamme
conditions below.

For a vector \(\underline{h}\in\cS_{q,N}\), write
\[
  \binom{N}{\underline{h}}
  \eqdef
  \frac{N!}{\prod_{i=0}^{q-1}h_i!}.
\]
We set a vector multinomial coefficient to zero whenever a lower entry is
negative or the lower entries do not sum to the upper argument.
For an integer \(0\leq t\leq N\), the point \(\underline{a}_{\underline{h}}\)
of \cite[Eq.~(27)]{AydinAlbertBarg2026}, which we call the moment point of
\(\underline{h}\), has coordinates
\begin{equation}
\label{eq:aab-moment}
  a_{\underline{e},\underline{h}}
  \eqdef
  \frac{\binom{N-t}{\underline{h}-\underline{e}}}
       {\binom{N}{\underline{h}}},
  \qquad
  \underline{e}\in\cS_{q,t}.
\end{equation}
We collect these coordinates into the vector
\[
  \underline{a}_{\underline{h}}
  \eqdef
  (a_{\underline{e},\underline{h}})_{\underline{e}\in\cS_{q,t}}
  \in\mathbb R^{\cS_{q,t}}.
\]

By the Knill--Laflamme conditions \cite{KnillLaflamme1997}, a code with orthonormal basis
\(\lvert c_1\rangle,\ldots,\lvert c_K\rangle\) exactly corrects all loss
patterns of total order at most \(t\) if and only if there are numbers
\(\lambda_{\underline{e},\underline{f}}\) such that
\[
  \langle c_i\rvert(A_{\underline{e}}^{(\gamma)})^\dagger
  A_{\underline{f}}^{(\gamma)}\lvert c_j\rangle
  =
  \delta_{ij}\lambda_{\underline{e},\underline{f}}
\]
for all \(i,j\) and all such patterns \(\underline{e},\underline{f}\).
The coordinates in \eqref{eq:aab-moment} are normalized diagonal loss
moments: for \(\underline{e}\in\cS_{q,t}\),
\[
  \langle\underline{h}\rvert(A_{\underline{e}}^{(\gamma)})^\dagger
  A_{\underline{e}}^{(\gamma)}\lvert\underline{h}\rangle
  =
  \gamma^{t}(1-\gamma)^{N-t}\binom Nt\binom t{\underline{e}}\,
  a_{\underline{e},\underline{h}},
\]
and the prefactor does not depend on \(\underline{h}\).  Suppose a code
\(C\subseteq\cS_{q,N}\) has minimum distance at least \(t+1\) and the
logical states are
\(\lvert c_j\rangle=\sum_{\underline{h}\in C_j}\sqrt{p_{\underline{h}}}\,
\lvert\underline{h}\rangle\) for disjoint parts \(C_1,\ldots,C_K\) of
\(C\), with probability weights \(p_{\underline{h}}\) on each part.
This separation (minimum distance at least \(t+1\)) removes every
off-diagonal Knill--Laflamme term, and the
diagonal conditions for losses of order \(t\) say that the points
\(\sum_{\underline{h}\in C_j}p_{\underline{h}}\underline{a}_{\underline{h}}\)
coincide for all \(j\).  We call this balancing the moment points.  A
Tverberg partition of the moment points into \(K\) parts whose convex
hulls share a point supplies such parts and weights.

The exact-order coordinates also determine every lower loss order.  For
\(0\leq r\leq t\) and \(\underline{u}\in\cS_{q,r}\), the multivariate
Vandermonde identity gives
\begin{equation}
\label{eq:lower-moments-from-top}
  \frac{\binom{N-r}{\underline{h}-\underline{u}}}
       {\binom N{\underline{h}}}
  =
  \sum_{\underline{v}\in\cS_{q,t-r}}
  \binom{t-r}{\underline{v}}\,
  a_{\underline{u}+\underline{v},\underline{h}}.
\end{equation}
Thus balancing the order-\(t\) moment points balances all orders at most
\(t\); classical separation removes the off-diagonal terms.  This
is why Theorem~\ref{thm:aab} below counts only the exact-order patterns rather than
their cumulative number
\cite[Eq.~(12) and proof of Thm.~2]{AydinAlbertBarg2026}.

The following affine-hull formulation is implicit in the proof of
the Aydin--Albert--Barg theorem
\cite[Thm.~7, Eqs.~(26)--(27), and proof]{AydinAlbertBarg2026}, which
applies Tverberg's theorem~\cite{Tverberg1966} in the hyperplane given by
\eqref{eq:lower-moments-from-top} with \(r=0\); here it is applied in the
actual affine hull.
The preceding paragraphs prove its Fock part, the only part used in this
paper; its permutation-invariant and spin parts follow by the same change,
since the cited proof obtains all three families from one balancing of the
moment points; their distances are defined in \cite{AydinAlbertBarg2026}.

\begin{theorem}[Aydin--Albert--Barg lifting, affine-hull form]
\label{thm:aab}
Let \(N,K\geq1\), \(q\geq2\), and \(0\leq t\leq N\) be integers.
Let \(C\subseteq\cS_{q,N}\) have minimum distance at least \(t+1\).
Set
\[
  r_t(C)
  \eqdef
  \operatorname{affdim}
  \{\underline{a}_{\underline{h}}:\underline{h}\in C\}.
\]
If
\[
  |C|\geq (K-1)(r_t(C)+1)+1,
\]
then there exist \(K\)-dimensional Fock, permutation-invariant, and spin
codes with their respective distance parameter \(t+1\).  In the terminology
of the cited paper, this is bosonic loss distance for the Fock code, deletion
or erasure distance for the permutation-invariant code, and spin distance for
degree-\(t\) products of \(\mathfrak{su}(q)\) generators
\cite[Secs.~IV--VII and Thm.~7]{AydinAlbertBarg2026}.  In particular, the Fock
code is spanned by states
\(\sum_{\underline{h}\in C_j}\sqrt{p_{\underline{h}}}\lvert\underline{h}\rangle\)
on disjoint parts \(C_j\subseteq C\), and it exactly corrects all loss
patterns of total order at most \(t\).

The Vandermonde affine relation in the cited proof, which is
\eqref{eq:lower-moments-from-top} with \(r=0\), gives
\(r_t(C)\leq\binom{q+t-1}{q-1}-1\).  Hence the published sufficient
condition is
\begin{equation}
\label{eq:aab-threshold}
  |C|
  \geq
  (K-1)\binom{q+t-1}{q-1}+1.
\end{equation}
\end{theorem}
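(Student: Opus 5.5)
The plan is to combine the Knill--Laflamme reduction of the preceding paragraphs with Tverberg's theorem applied inside the affine hull of the moment points. Set \(V\) to be the affine hull of \(\{\underline{a}_{\underline{h}}:\underline{h}\in C\}\), of dimension \(r\eqdef r_t(C)\). Choosing an affine isomorphism \(V\cong\mathbb R^{r}\), the points \(\underline{a}_{\underline{h}}\), counted with their labels \(\underline{h}\), become \(|C|\) points in \(\mathbb R^{r}\). Points indexed by distinct \(\underline{h}\) may coincide; Tverberg's theorem is valid for multisets, so this causes no difficulty. Tverberg's theorem in dimension \(r\) says that any \((K-1)(r+1)+1\) points admit a partition into \(K\) parts whose convex hulls share a point. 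Since \(|C|\geq(K-1)(r+1)+1\), I apply it to any \((K-1)(r+1)+1\) of the labelled points. This gives disjoint nonempty \(C_1,\ldots,C_K\subseteq C\) and probability weights \(p_{\underline{h}}\) on each \(C_j\) with \(\sum_{\underline{h}\in C_j}p_{\underline{h}}\underline{a}_{\underline{h}}=\underline{x}\) for a common \(\underline{x}\). Convex combinations taken in \(V\) are convex combinations in \(\mathbb R^{\cS_{q,t}}\), so the balancing holds in the original coordinates. Points of \(C\) left unused are simply discarded.

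Next I verify the Knill--Laflamme conditions for \(\lvert c_j\rangle=\sum_{\underline{h}\in C_j}\sqrt{p_{\underline{h}}}\lvert\underline{h}\rangle\). These states are orthonormal because their supports are disjoint. Take \(\underline{e},\underline{f}\) with \(|\underline{e}|,|\underline{f}|\leq t\). The operator \(A_{\underline{e}}^\dagger A_{\underline{f}}\) maps \(\lvert\underline{m}\rangle\) to a multiple of \(\lvert\underline{m}-\underline{f}+\underline{e}\rangle\), so it is zero between different excitation sectors. Consider a nonzero matrix element \(\langle\underline{h}\rvert A_{\underline{e}}^\dagger A_{\underline{f}}\lvert\underline{m}\rangle\) with \(\underline{h},\underline{m}\in\cS_{q,N}\). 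It requires \(|\underline{e}|=|\underline{f}|\) and \(\underline{h}-\underline{e}=\underline{m}-\underline{f}\), so by \eqref{rem:loss-confusability} we get \(d(\underline{h},\underline{m})\leq t\). Minimum distance \(t+1\) then forces \(\underline{h}=\underline{m}\), and hence \(\underline{e}=\underline{f}\). Therefore all cross terms \(i\neq j\) vanish, and within one logical state only diagonal terms with \(\underline{e}=\underline{f}\) survive. For \(|\underline{e}|=t\), the diagonal value is \(\gamma^t(1-\gamma)^{N-t}\binom Nt\binom t{\underline{e}}\sum_{\underline{h}\in C_j}p_{\underline{h}}a_{\underline{e},\underline{h}}\), which equals the same multiple of \(x_{\underline{e}}\) for every \(j\).

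For \(|\underline{e}|=r<t\), I use \eqref{eq:lower-moments-from-top}. The normalized lower-order moment is a fixed linear combination of top-order coordinates, and its coefficients do not depend on \(\underline{h}\). The prefactor \(\gamma^{r}(1-\gamma)^{N-r}\binom Nr\binom r{\underline{u}}\) likewise does not depend on \(\underline{h}\). So averaging with the weights \(p_{\underline{h}}\) again gives a \(j\)-independent value, and the conditions hold with \(\lambda_{\underline{e},\underline{f}}=\delta_{\underline{e}\underline{f}}\lambda_{\underline{e}}\). The permutation-invariant and spin codes follow exactly as in the cited proof, since that proof derives them from the same balanced partition. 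I will only point to this.

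For the final bound, \eqref{eq:lower-moments-from-top} with \(r=0\) gives \(\sum_{\underline{v}}\binom t{\underline{v}}a_{\underline{v},\underline{h}}=1\) for every \(\underline{h}\). Hence all moment points lie in an affine hyperplane of \(\mathbb R^{\cS_{q,t}}\), whose dimension is \(|\cS_{q,t}|-1=\binom{q+t-1}{q-1}-1\). This yields \(r_t(C)\leq\binom{q+t-1}{q-1}-1\) and therefore \eqref{eq:aab-threshold}. The only delicate point is that Tverberg must be applied in \(V\) itself, not in the ambient space; I expect no real obstacle beyond keeping track of repeated points.
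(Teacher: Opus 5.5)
Your proposal is correct and follows the paper's own argument. Separation kills every off-diagonal Knill--Laflamme term, the order-$t$ diagonal conditions reduce to balancing the moment points, Tverberg's theorem applied in the actual affine hull of dimension $r_t(C)$ supplies the parts and weights, \eqref{eq:lower-moments-from-top} carries the balancing down to all lower orders, and its $r=0$ case gives the hyperplane bound $r_t(C)\leq\binom{q+t-1}{q-1}-1$; as in the paper, the permutation-invariant and spin parts are deferred to the cited proof.
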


For a sequence of codes with \(q_N=\lfloor\alpha N\rfloor\) modes and
dimensions \(K_N\), we report the dimension exponent \(N^{-1}\log_2K_N\) per
excitation; ``dimension exponent at least \(k\)'' means
\(\liminf_{N\to\infty}N^{-1}\log_2K_N\geq k\), and the same convention
applies to the sizes of classical codes.  We use
\(k\), with subscripts when needed, for such exponents.  For a code \(Q\) of dimension
\(K\), with \(q\geq2\) and \(N\geq1\), the relative rate in
\cite{ElimelechAydinBarg2026} is
\[
  R(Q)\eqdef\frac{\log_2 K}{\log_2\dim\cH_{q,N}}.
\]
Stirling's formula gives
\begin{equation}
\label{eq:full-simplex-exponent}
  \lim_{N\to\infty}
  \frac1N\log_2|\cS_{q_N,N}|
  =
  C_\infty(\alpha)
  \eqdef
  (1+\alpha)
  h_2\left(\frac{\alpha}{1+\alpha}\right),
\end{equation}
where
\[
  h_2(p)
  \eqdef
  -p\log_2p-(1-p)\log_2(1-p).
\]
Thus an exponent \(k\) per excitation corresponds to relative rate
\(k/C_\infty(\alpha)\).  For a probability distribution \(p=(p_i)\) on a
finite set, we write
\[
  H_2(p)\eqdef-\sum_i p_i\log_2p_i
\]
for its Shannon entropy.  We use the convention \(0\log_2 0\eqdef0\).
Unless stated otherwise, all logarithms from this point onward are to base
two.

\section{Two restriction principles}
\label{sec:restriction-principles}

Our exact codes are obtained in two steps: a separated
classical support is chosen inside the capped simplex, and the
Aydin--Albert--Barg construction lifts it to a quantum code.
Subsection~\ref{sec:restricted-gv} bounds the first step when the
candidate codewords are restricted to a subset;
Subsection~\ref{sec:cap-aware} removes from the second step the moment
coordinates that vanish under the cap.

\subsection{Average-ball bounds on restricted simplices}
\label{sec:restricted-gv}

The usual greedy Gilbert argument is governed by a maximum ball.  In a
nonhomogeneous metric space, the generalized Gilbert--Varshamov (GV) bound
replaces this maximum by an average.  The following form makes clear that
the candidate codewords may be restricted before the graph argument is
applied.

\begin{proposition}[Restricted average-ball bound]
\label{prop:restricted-average}
Let \(X\) be a finite metric space, \(\varnothing\ne T\subseteq X\), and \(t\geq0\).
There is a code \(C\subseteq T\) of minimum distance greater than \(t\)
such that
\begin{align}
  |C|
  &\geq
  \left\lceil
    \sum_{\underline{n}\in T}
    \frac{1}{|B_X(\underline{n},t)\cap T|}
  \right\rceil
  \label{eq:caro-wei}\\
  &\geq
  \left\lceil\frac{|T|^2}{P_T(t)}\right\rceil
  \geq
  \left\lceil\frac{|T|^2}{P_X(t)}\right\rceil.
  \label{eq:restricted-average}
\end{align}
\end{proposition}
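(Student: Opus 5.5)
We argue through the conflict graph \(G\) on vertex set \(T\), joining distinct \(\underline{n},\underline{m}\in T\) whenever \(d(\underline{n},\underline{m})\leq t\). A code \(C\subseteq T\) with minimum distance greater than \(t\) is exactly an independent set of \(G\). The closed neighbourhood of \(\underline{n}\) in \(G\) is \(B_X(\underline{n},t)\cap T\), so \(\deg_G(\underline{n})+1=|B_X(\underline{n},t)\cap T|\). The first inequality \eqref{eq:caro-wei} is then the Caro--Wei bound \(\alpha(G)\geq\sum_v 1/(\deg v+1)\). Since \(|C|\) is an integer, the ceiling may be inserted.

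To keep the proof self-contained, we will prove Caro--Wei by the random-order argument. Take a uniformly random linear order on \(T\), and let \(C\) consist of those vertices that precede all of their \(G\)-neighbours. This set is independent: of two adjacent vertices, only the earlier one can precede all of its neighbours. A vertex \(\underline{n}\) is selected with probability \(1/|B_X(\underline{n},t)\cap T|\), since each of its closed neighbours is equally likely to come first among them. Linearity of expectation gives \(\E|C|=\sum_{\underline{n}\in T}1/|B_X(\underline{n},t)\cap T|\), so some order yields \(|C|\) at least this value. Alternatively, the greedy algorithm that repeatedly picks a vertex of minimum degree and deletes its closed neighbourhood gives the same bound.

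For the first inequality in \eqref{eq:restricted-average}, write \(b_{\underline{n}}\eqdef|B_X(\underline{n},t)\cap T|\). Then
\[
\sum_{\underline{n}\in T}b_{\underline{n}}=P_T(t),
\]
because both sides count ordered pairs \((\underline{n},\underline{m})\in T^2\) with \(d(\underline{n},\underline{m})\leq t\), diagonal pairs included. By Cauchy--Schwarz, or convexity of \(x\mapsto1/x\) (Jensen),
\[
\sum_{\underline{n}\in T}\frac{1}{b_{\underline{n}}}\geq\frac{|T|^2}{\sum_{\underline{n}\in T}b_{\underline{n}}}=\frac{|T|^2}{P_T(t)}.
\]
Because the ceiling is monotone, the inequality survives taking ceilings. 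For the last inequality, \(T^2\subseteq X^2\) gives \(P_T(t)\leq P_X(t)\). Both counts are positive, since they contain the diagonal and \(T\neq\varnothing\), so \(|T|^2/P_T(t)\geq|T|^2/P_X(t)\), and again we take ceilings.

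No step is a real obstacle. The only points needing care are that the ball \(B_X(\underline{n},t)\) is taken in \(X\) and then intersected with \(T\), so that its size equals the closed degree in \(G\), and that \(P_T(t)\) counts ordered pairs including the diagonal, so that the double-counting identity holds exactly.
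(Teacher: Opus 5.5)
Your proposal is correct and follows the paper's argument: Caro--Wei on the conflict graph with vertex set \(T\), proved by the random-ordering argument, gives \eqref{eq:caro-wei}. Cauchy--Schwarz via \(\sum_{\underline{n}\in T}|B_X(\underline{n},t)\cap T|=P_T(t)\), together with \(P_T(t)\leq P_X(t)\), then gives the remaining two inequalities.
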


This is the Caro--Wei/generalized-GV bound on the graph with vertex set
\(T\), in which distinct vertices are adjacent exactly when their distance
is at most \(t\) \cite{GuFuja1993,Tolhuizen1997,AlonSpencer2016}.  To see it
directly, order the vertices uniformly at random and retain each vertex
that precedes all its neighbors.  The retained vertices form an independent
set, and a vertex \(\underline{n}\) is retained with probability
\(1/|B_X(\underline{n},t)\cap T|\).  Some ordering attains at least the
ceiling of the expected size, proving \eqref{eq:caro-wei}.
Cauchy--Schwarz and \(P_T(t)\leq P_X(t)\) give
\eqref{eq:restricted-average}.  The useful consequence is the exponent
preservation stated below.

\begin{corollary}[Subexponential restriction]
\label{cor:subexponential-restriction}
Let \(\varnothing\ne T_N\subseteq X_N\) be finite metric spaces and let \(t_N\geq0\).
If
\[
  \log_2\frac{|X_N|}{|T_N|}=o(N),
\]
then there are codes \(C_N\subseteq T_N\) of minimum distance greater than
\(t_N\) with
\[
  |C_N|\geq2^{-o(N)}\,\frac{|X_N|^2}{P_{X_N}(t_N)};
\]
thus the ambient average-ball exponent is achieved inside \(T_N\).
\end{corollary}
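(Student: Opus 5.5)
We apply Proposition~\ref{prop:restricted-average} with \(X=X_N\), \(T=T_N\), and \(t=t_N\) for each \(N\), using the weakest form of its bound, the last inequality in \eqref{eq:restricted-average}. This gives a code \(C_N\subseteq T_N\) of minimum distance greater than \(t_N\) with
\[
  |C_N|\geq\frac{|T_N|^2}{P_{X_N}(t_N)}
  =\left(\frac{|T_N|}{|X_N|}\right)^{2}\frac{|X_N|^2}{P_{X_N}(t_N)}.
\]
The whole argument is this factorization followed by control of the first factor.

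By hypothesis, \(\log_2(|X_N|/|T_N|)=o(N)\), so \((|T_N|/|X_N|)^2=2^{-2\log_2(|X_N|/|T_N|)}=2^{-o(N)}\), since twice an \(o(N)\) sequence is still \(o(N)\). Substituting this into the display gives the stated bound \(|C_N|\geq2^{-o(N)}|X_N|^2/P_{X_N}(t_N)\). Taking \(N^{-1}\log_2\) and then \(\liminf\), the factor \(2^{-o(N)}\) contributes nothing. So whatever exponent the ambient quantity \(|X_N|^2/P_{X_N}(t_N)\) has (the generalized GV average-ball exponent of \(X_N\)) is also a lower bound for the exponent of codes inside \(T_N\). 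This is the final clause of the corollary.

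The only step needing any care is the monotonicity \(P_{T_N}(t_N)\leq P_{X_N}(t_N)\), which the corollary implicitly uses. Here \(T_N\) carries the metric induced from \(X_N\), so every ordered close pair in \(T_N^2\) is also a close pair in \(X_N^2\). This step is already contained in Proposition~\ref{prop:restricted-average}, so no real obstacle remains. We will also note that \(P_{X_N}(t_N)\geq|X_N|>0\), because the diagonal pairs are counted, so the ratio is well defined. The ceiling in the proposition can be dropped, since it only strengthens the lower bound.
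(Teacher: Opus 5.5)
Your proposal is correct and matches the paper's proof: apply Proposition~\ref{prop:restricted-average} in its weakest form \(|C_N|\geq|T_N|^2/P_{X_N}(t_N)\) and absorb the density factor \((|T_N|/|X_N|)^2=2^{-o(N)}\) into the exponent.
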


\begin{proof}
With \(\eta_N\eqdef|T_N|/|X_N|\), apply
Proposition~\ref{prop:restricted-average}; the factor
\(\eta_N^2=2^{-o(N)}\) does not change the exponent.
\end{proof}

Restricting the Gilbert argument to a subset of prescribed statistics is
the standard device behind Marcus--Roth-type bounds
\cite{GoyalDaoKovacevicKiah2025,MarcusRoth1992}; the corollary records the
case of subexponential density, where nothing is lost.

In particular, \(T_N\) need not have density \(1-o(1)\).  In
Section~\ref{sec:log-caps}, we apply this observation first to the simplex
under an occupation cap that tends to infinity and then inside the optimizing zero-count
slices of~\cite{GoyalDaoKovacevicKiah2025}, the subsets of the discrete
simplex with a prescribed number of zero coordinates.

\subsection{Cap-aware Tverberg lifting}
\label{sec:cap-aware}

For an integer cap \(B\geq1\), define the capped simplex
\begin{equation}
\label{eq:capped-simplex}
  \cS_{q,N}^{(B)}
  \eqdef
  \left\{
    \underline{n}\in\cS_{q,N}:
    \norm{\underline{n}}_\infty\leq B
  \right\}.
\end{equation}
Its Fock span is \(\cH_{q,N}^{(B)}\eqdef\operatorname{span}
\{\lvert\underline{n}\rangle:\underline{n}\in\cS_{q,N}^{(B)}\}\),
as in \cite[Definition~2]{ElimelechAydinBarg2026}.
The occupation constraint forces many moment coordinates to vanish.  The
capped loss-pattern set is
\begin{equation}
\label{eq:capped-errors}
  \cE_B(q,t)
  \eqdef
  \left\{
    \underline{e}\in\cS_{q,t}:
    \norm{\underline{e}}_\infty\leq B
  \right\},
\end{equation}
whose cardinality, the cap-aware denominator, is
\begin{equation}
\label{eq:moment-count}
  m_B(q,t)
  \eqdef
  |\cE_B(q,t)|
  =
  [z^t](1+z+\cdots+z^B)^q.
\end{equation}

\begin{theorem}[Cap-aware AAB lifting]
\label{thm:cap-aware-aab}
Let \(N,B,K\geq1\), \(q\geq2\), and \(0\leq t\leq N\) be integers.
Let \(C\subseteq\cS_{q,N}^{(B)}\) have minimum distance at least \(t+1\).
If
\begin{equation}
\label{eq:cap-aware-threshold}
  |C|\geq (K-1)m_B(q,t)+1,
\end{equation}
then the Aydin--Albert--Barg construction gives
\(K\)-dimensional Fock, permutation-invariant, and spin codes of distance
\(t+1\).  The Fock code is supported entirely on occupation vectors
\(\underline{h}\) satisfying \(\norm{\underline{h}}_\infty\leq B\).
\end{theorem}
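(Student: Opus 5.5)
The plan is to deduce Theorem~\ref{thm:cap-aware-aab} from Theorem~\ref{thm:aab} by bounding the affine dimension \(r_t(C)\) by \(m_B(q,t)-1\). Once \(r_t(C)+1\leq m_B(q,t)\) is established, the hypothesis \eqref{eq:cap-aware-threshold} implies \(|C|\geq(K-1)(r_t(C)+1)+1\). Theorem~\ref{thm:aab} then supplies the \(K\)-dimensional Fock, permutation-invariant and spin codes of distance \(t+1\). The Fock code is spanned by superpositions over parts \(C_j\subseteq C\subseteq\cS_{q,N}^{(B)}\), which gives the support statement immediately.

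The first step is to show that the moment coordinates outside the capped pattern set vanish on \(C\). Take \(\underline{h}\in C\) and \(\underline{e}\in\cS_{q,t}\setminus\cE_B(q,t)\). Some coordinate satisfies \(e_i>B\geq h_i\), so \(h_i-e_i<0\). By the stated convention the multinomial \(\binom{N-t}{\underline{h}-\underline{e}}\) is then zero, hence \(a_{\underline{e},\underline{h}}=0\). Consequently every moment point \(\underline{a}_{\underline{h}}\) with \(\underline{h}\in C\) lies in the coordinate subspace \(\mathbb R^{\cE_B(q,t)}\subseteq\mathbb R^{\cS_{q,t}}\), which has dimension \(m_B(q,t)\).

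The second step uses the Vandermonde relation \eqref{eq:lower-moments-from-top} with \(r=0\). It reads \(1=\sum_{\underline{v}\in\cS_{q,t}}\binom{t}{\underline{v}}a_{\underline{v},\underline{h}}\) for every \(\underline{h}\). Since the omitted coordinates vanish, the sum may be restricted to \(\cE_B(q,t)\). This is a nontrivial affine equation, because its coefficients are positive and its constant term is nonzero. So the moment points of \(C\) lie in an affine hyperplane of \(\mathbb R^{\cE_B(q,t)}\), and \(r_t(C)\leq m_B(q,t)-1\).

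The only delicate point is checking that the lifting in Theorem~\ref{thm:aab} really depends on \(C\) only through \(r_t(C)\). We also need the degenerate case \(m_B(q,t)=0\), that is \(t>qB\), to be handled. That case cannot occur: \(C\) nonempty forces \(N\leq qB\), and \(t\leq N\).
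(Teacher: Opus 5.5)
Your proposal is correct and follows the paper's proof essentially step for step. Like the paper, you show that every coordinate \(a_{\underline{e},\underline{h}}\) with \(\underline{e}\notin\cE_B(q,t)\) vanishes on the capped simplex, restrict the Vandermonde identity to \(\cE_B(q,t)\) to place the moment points in an affine hyperplane, deduce \(r_t(C)\leq m_B(q,t)-1\), and then invoke the affine-hull form of Theorem~\ref{thm:aab}; your extra check that \(m_B(q,t)\geq1\), since \(C\neq\varnothing\) forces \(t\leq N\leq qB\), is harmless and left implicit in the paper.
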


\begin{proof}
Project the moment point \(\underline{a}_{\underline{h}}\) from
\eqref{eq:aab-moment} onto the capped pattern set \(\cE_B(q,t)\).  If
\(\underline{e}\notin\cE_B(q,t)\), then some coordinate satisfies
\(e_i>B\geq h_i\), so
\[
  \binom{N-t}{\underline{h}-\underline{e}}=0
  \qquad\text{and hence}\qquad
  a_{\underline{e},\underline{h}}=0
\]
for every \(\underline{h}\in C\).  Projecting the moment points onto the
coordinates in \(\cE_B(q,t)\) therefore removes only coordinates that vanish
on the whole capped simplex, which we call inactive.

The multivariate Vandermonde identity gives
\begin{align}
  \sum_{\underline{e}\in\cE_B(q,t)}
  \binom{t}{\underline{e}}\,
  a_{\underline{e},\underline{h}}
  &=
  \frac{1}{\binom{N}{\underline{h}}}
  \sum_{\underline{e}\in\cS_{q,t}}
  \binom{t}{\underline{e}}
  \binom{N-t}{\underline{h}-\underline{e}}
  \notag\\
  &=1.
  \label{eq:cap-affine-identity}
\end{align}
The first equality holds because the omitted terms vanish.  Thus all
projected points lie in an affine hyperplane of dimension
\(m_B(q,t)-1\), and the projection preserves affine dimension, so
\(r_t(C)\leq m_B(q,t)-1\).  The affine-hull form of
Theorem~\ref{thm:aab} now gives the claimed threshold and all three code
families.  Their Fock basis vectors use only the original support
\(C\subseteq\cS_{q,N}^{(B)}\).
\end{proof}

This is the specialization to the cap of the general principle that
redundant diagonal error conditions may be deleted before convex balancing;
compare \cite[Remark~5]{ElimelechAlbertBargAQEC2026}, and, for \(q\)-ary
codes whose total excitation is not fixed,
\cite[Sec.~5.2.2]{ElimelechAlbertBargAQEC2026}.  The quantitative
input used below is the explicit cap-aware denominator \(m_B(q,t)\) on a capped bosonic
support.  Every exact Fock code obtained below through
Theorem~\ref{thm:cap-aware-aab} comes with permutation-invariant and spin
codes of the same dimension and distance on the same support; we state
only the Fock codes.

\begin{remark}[Support-specific refinement]
\label{rem:support-sensitive}
For a specific support \(C\subseteq\cS_{q,N}^{(B)}\), let
\[
  \cE_t^{\mathrm{act}}(C)
  \eqdef
  \{\underline{e}\in\cS_{q,t}:
    \underline{e}\leq\underline{h}\ \text{for some }\underline{h}\in C\}.
\]
The affine-hull denominator in Theorem~\ref{thm:aab} is \(r_t(C)+1\), and
\[
  r_t(C)+1\leq|\cE_t^{\mathrm{act}}(C)|\leq m_B(q,t).
\]
Thus the active-pattern count is a support-specific bound that is easier to
evaluate, while \(r_t(C)+1\) itself can be smaller still.
The affine identity \eqref{eq:cap-affine-identity} places every column of the
matrix below in a hyperplane not containing the origin, and therefore gives the exact identity
\[
  r_t(C)+1
  =
  \operatorname{rank}_{\mathbb R}
  \bigl(a_{\underline{e},\underline{h}}\bigr)_
  {\underline{e}\in\cE_t^{\mathrm{act}}(C),\ \underline{h}\in C}.
\]
\end{remark}

\begin{remark}[Comparison with the published count]
\label{rem:moment-comparison}
Since
\[
  m_B(q,t)
  \leq
  |\cS_{q,t}|
  =
  \binom{q+t-1}{q-1},
\]
the cap-aware threshold in Theorem~\ref{thm:cap-aware-aab} is never weaker
than the published cardinality condition \eqref{eq:aab-threshold}.
The two displayed counts are equal whenever \(t\leq B\), because then
every weak composition of \(t\) already has all coordinates at most \(B\).
\end{remark}

\paragraph{Construction scheme.}
Every construction below follows the same two steps:
\[
  \text{capped simplex}
  \longrightarrow
  \text{separated classical support}
  \longrightarrow
  \text{quantum code}.
\]
For a candidate subset \(T\), Proposition~\ref{prop:restricted-average}
controls the first arrow through its ordered close-pair count \(P_T(t)\).
Theorem~\ref{thm:cap-aware-aab} controls the second: on the exponential
scale, it divides the support size by the cap-aware denominator \(m_B(q,t)\);
the support-specific counts of Remark~\ref{rem:support-sensitive} are not
used in the exponents below (see the discussion after
Theorem~\ref{thm:general-fixed-cap}).
Thus every exact-code exponent below has the form
\[
  \text{classical separated-support exponent}
  -
  \text{cap-aware moment exponent}.
\]
Sections~\ref{sec:cap-two} and~\ref{sec:general-caps} evaluate this
difference, asymptotically and at finite length.
Section~\ref{sec:converses} bounds how large separated supports and
arbitrary capped codes can be.  Section~\ref{sec:random} samples supports
exactly and, for one normalized channel, replaces the Tverberg step by a
random partition.  Section~\ref{sec:log-caps} applies
Proposition~\ref{prop:restricted-average} to caps that grow with \(N\).

\section{Exact cap-two Fock codes}
\label{sec:cap-two}

We now turn to the balanced cap-two simplex
\begin{equation}
\label{eq:cap-two-sector}
  \cS_{N,N}^{(2)}
  =
  \left\{
    \underline{n}\in\{0,1,2\}^{N}:
    \sum_{i=0}^{N-1}n_i=N
  \right\}.
\end{equation}
This is the smallest nontrivial capped simplex when the number of modes
equals the total excitation (at cap one only the all-one vector remains).  Its symmetry makes both the space exponent
and the close-pair exponent explicit.

Translation by the all-one vector maps this simplex onto the zero-sum
vectors of the ternary cube \(\{-1,0,1\}^N\) and doubles deletion distance
into Manhattan distance, so Tolhuizen's ternary pair enumerator and
constant-weight Gilbert--Varshamov bound \cite{TolhuizenTernary2010}
apply.  We first record the bound obtained by averaging over all of
\(\cS_{N,N}^{(2)}\) (Corollary~\ref{cor:cap-two-average}); it serves as a benchmark.
Theorem~\ref{thm:cap-two-cc} replaces it by the stronger exponent of
Tolhuizen's constant-weight bound, Lemma~\ref{lem:cap-two-moments} evaluates
the cap-aware denominator \(m_2(N,t)\), and Theorem~\ref{thm:exact-cap-two}
subtracts the two.

\begin{lemma}[Size of the balanced cap-two simplex]
\label{lem:central-trinomial}
The cardinality of \(\cS_{N,N}^{(2)}\) is the central trinomial coefficient, and
\begin{equation}
\label{eq:central-trinomial}
  |\cS_{N,N}^{(2)}|
  =
  [z^N](1+z+z^2)^N
  =
  \left(
    \frac{\sqrt{3}}{2\sqrt{\pi N}}
    +o(N^{-1/2})
  \right)3^N.
\end{equation}
\end{lemma}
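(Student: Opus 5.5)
The plan has two parts: an exact bijection that gives the coefficient identity, and a saddle-point (local central limit) analysis that gives the asymptotic constant.

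\emph{Coefficient identity.} Expanding \((1+z+z^2)^N=\prod_{i=0}^{N-1}(z^0+z^1+z^2)\), each monomial \(z^{n_0+\cdots+n_{N-1}}\) is indexed by a choice \(\underline{n}\in\{0,1,2\}^N\). Hence \([z^N](1+z+z^2)^N\) counts exactly the vectors in \(\{0,1,2\}^N\) with coordinate sum \(N\), which is \(\cS_{N,N}^{(2)}\) by \eqref{eq:cap-two-sector}. This is the central trinomial coefficient by definition.

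\emph{Asymptotics.} First rewrite the count as a constant-term extraction, \([z^N](1+z+z^2)^N=[z^0](z^{-1}+1+z)^N\), and apply Cauchy's formula on the unit circle:
\[
  [z^N](1+z+z^2)^N=\frac{1}{2\pi}\int_{-\pi}^{\pi}(1+2\cos\theta)^N\,d\theta .
\]
Equivalently, this is \(3^N\) times the probability that a sum of \(N\) i.i.d.\ uniform variables on \(\{-1,0,1\}\) equals zero. These variables have mean zero and variance \(2/3\), and they are lattice variables of span one. The local central limit theorem therefore gives the probability as
\[
  \bigl(1+o(1)\bigr)\frac{1}{\sqrt{2\pi N\cdot 2/3}}=\frac{\sqrt3}{2\sqrt{\pi N}}\bigl(1+o(1)\bigr),
\]
which is \eqref{eq:central-trinomial}.

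\emph{Main step: justifying the error term.} The only real work is controlling the error, and I would do it directly through the integral rather than quoting the local central limit theorem. Split the integral into two regions.
\begin{itemize}
  \item On \(|\theta|\le N^{-2/5}\), use \(\log\frac{1+2\cos\theta}{3}=-\theta^2/3+O(\theta^4)\). Since \(N\theta^4=o(1)\) uniformly in this range, the integrand is \(3^Ne^{-N\theta^2/3}(1+o(1))\). Extending the Gaussian integral to all of \(\mathbb{R}\) costs only an exponentially small error, and the result is \(3^N\sqrt{3\pi/N}\,/(2\pi)=3^N\sqrt3/(2\sqrt{\pi N})\).
  \item On the complement, \(|1+2\cos\theta|\le 3-cN^{-4/5}\). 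This holds because \(1+2\cos\theta\ge -1\) away from \(\theta=\pm\pi\), so the only maximum of \(|1+2\cos\theta|\) is at \(\theta=0\). The contribution from this region is therefore at most \(3^N e^{-cN^{1/5}}=o(3^N N^{-1/2})\).
\end{itemize}
Combining the two regions gives exactly the stated form \(\bigl(\tfrac{\sqrt3}{2\sqrt{\pi N}}+o(N^{-1/2})\bigr)3^N\).
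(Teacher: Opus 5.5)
Your proof is correct and takes essentially the paper's route: the paper only cites the standard central-trinomial estimate (the lattice local central limit theorem, or the large-powers saddle-point theorem of Flajolet--Sedgewick), and your Cauchy integral on the unit circle, split at \(|\theta|=N^{-2/5}\), is that saddle-point argument written out at the saddle \(z=1\). Two wording points do not affect the argument: \(1+2\cos\theta\geq-1\) holds for every \(\theta\), and what matters is that \(|1+2\cos\theta|\) equals \(3\) only at \(\theta=0\); and the neglected pieces are of size \(3^Ne^{-cN^{1/5}}\), which is stretched-exponentially rather than exponentially small but still \(o(3^NN^{-1/2})\).
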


The identity and asymptotic are the standard central-trinomial
coefficient estimate, equivalently the lattice local central limit theorem;
see, for example, \cite[Thm.~VIII.8]{FlajoletSedgewick2009}.

For \(0\leq\delta\leq4/9\), let \(u_\delta\in[0,1]\) be the unique
solution of
\begin{equation}
\label{eq:u-delta}
  \delta
  =
  \frac{2u_\delta+2u_\delta^2}
       {3+4u_\delta+2u_\delta^2}.
\end{equation}
For \(0<\delta<4/9\), this solution is
\begin{equation}
\label{eq:u-explicit}
  u_\delta
  =
  \frac{
    2\delta-1+\sqrt{1+2\delta-2\delta^2}
  }{2(1-\delta)}.
\end{equation}
Define
\begin{equation}
\label{eq:Psi-two}
  \Psi_{\mathrm{amb}}^{(2)}(\delta)
  \eqdef
  \log_2(3+4u_\delta+2u_\delta^2)
  -
  2\delta\log_2u_\delta,
\end{equation}
with the continuous value \(\Psi_{\mathrm{amb}}^{(2)}(0)=\log_2 3\).

\begin{lemma}[Tolhuizen ambient-pair bound for the balanced cap-two simplex]
\label{lem:cap-two-pairs}
Let \(t_N\eqdef\lfloor\delta N\rfloor\), where
\(0\leq\delta<4/9\).  Then
\begin{equation}
\label{eq:cap-two-pair-asymptotic}
  P_{\cS_{N,N}^{(2)}}(t_N)
  \leq
  2^{N(\Psi_{\mathrm{amb}}^{(2)}(\delta)+o(1))}.
\end{equation}
\end{lemma}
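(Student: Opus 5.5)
We bound the ordered close-pair count by counting pairs through their difference vector and then applying an exponential Chernoff bound. Write \(\underline{n},\underline{m}\in\cS_{N,N}^{(2)}\) and translate by the all-one vector, so that \(x=\underline{n}-\underline{1}\) and \(y=\underline{m}-\underline{1}\) lie in \(\{-1,0,1\}^N\) with zero coordinate sum. Then \(d(\underline{n},\underline{m})=\frac12\sum_i|x_i-y_i|\).

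We drop the two zero-sum constraints, which only enlarges the count. The pair count is then at most the number of pairs \((x,y)\in(\{-1,0,1\}^N)^2\) with \(\sum_i|x_i-y_i|\leq 2t_N\). This count factorizes coordinatewise. The single-coordinate pair enumerator is
\[
  g(u)\eqdef\sum_{a,b\in\{-1,0,1\}}u^{|a-b|}=3+4u+2u^2,
\]
since three pairs have \(|a-b|=0\), four have \(|a-b|=1\), and two have \(|a-b|=2\).

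For every \(0<u\leq1\), the standard Chernoff (Rankin) trick gives
\[
  P_{\cS_{N,N}^{(2)}}(t_N)\leq\sum_{k\leq 2t_N}[z^k]g(z)^N\leq u^{-2t_N}g(u)^N\leq 2^{N(\log_2 g(u)-2\delta\log_2 u)},
\]
where the last step uses \(u\leq1\) and \(t_N\leq\delta N\). Because \(u\) may be any point of \((0,1]\), it suffices to choose it well.

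Next we optimize over \(u\). The function \(\phi(u)=\log g(u)-2\delta\log u\) is convex in \(\log u\). Its stationarity condition \(ug'(u)/g(u)=2\delta\) reads
\[
  \frac{4u+4u^2}{3+4u+2u^2}=2\delta,
\]
which is exactly \eqref{eq:u-delta}. The left-hand side increases from \(0\) at \(u=0\) to \(8/9\) at \(u=1\). Hence for \(0<\delta<4/9\) there is a unique root \(u_\delta\in(0,1)\), and solving the quadratic \(2(1-\delta)u^2+2(2-2\delta)u\cdot\tfrac12\ldots\) gives \eqref{eq:u-explicit}. Substituting \(u=u_\delta\) yields exactly the exponent \(\Psi_{\mathrm{amb}}^{(2)}(\delta)\), and the \(o(1)\) term is not even needed.

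The boundary case \(\delta=0\) needs separate treatment, since taking \(u\to0\) is invalid. Here \(t_N=0\), so the pair count equals \(|\cS_{N,N}^{(2)}|\leq3^N\) by Lemma~\ref{lem:central-trinomial}. This matches \(\Psi_{\mathrm{amb}}^{(2)}(0)=\log_2 3\).

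The main point to check is whether dropping the zero-sum constraints loses anything. It does not matter for an upper bound. One could keep the constraints by introducing an auxiliary generating variable for the sums and obtain a saddle point. By the symmetry \(x\mapsto -x\), that saddle point sits at the neutral value, so the exponent would be the same. Dropping the constraints therefore gives the stated bound, which coincides with Tolhuizen's ternary enumerator exponent. The remaining routine step is to verify that \eqref{eq:u-explicit} is the root lying in \([0,1]\).
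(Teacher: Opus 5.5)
Your proof is correct and follows the paper's route: drop the zero-sum constraints to embed the translated simplex in the full ternary cube, bound the close-pair count by Tolhuizen's enumerator $(3+4u+2u^2)^N$ via the Chernoff--Rankin step, and evaluate at the saddle $u_\delta$ of \eqref{eq:u-delta}; as you note, this gives the bound even without the $o(1)$ term. The one slip is the garbled quadratic, which should read $2(1-\delta)u^2+(2-4\delta)u-3\delta=0$, and its positive root is exactly \eqref{eq:u-explicit}.
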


\begin{proof}
After translating \(n_i\mapsto n_i-1\), the ambient alphabet is
\(\{-1,0,1\}\), and Tolhuizen's full-cube pair enumerator is
\((3+4u+2u^2)^N\)
\cite[Prop.~3 and Eq.~(6)]{TolhuizenTernary2010}.  Since
\(\cS_{N,N}^{(2)}-\mathbf 1\subseteq\{-1,0,1\}^N\), containment gives the required upper
bound.  The ordinary one-variable saddle for total Manhattan distance at
most \(2\delta N\) satisfies \eqref{eq:u-delta}; substituting it gives
\eqref{eq:Psi-two}.
\end{proof}

Combining Lemmas~\ref{lem:central-trinomial} and~\ref{lem:cap-two-pairs} with
Proposition~\ref{prop:restricted-average} yields the classical exponent
\begin{equation}
\label{eq:cap-two-average-rate}
  R_{\mathrm{amb}}^{(2)}(\delta)
  \eqdef
  2\log_2 3-\Psi_{\mathrm{amb}}^{(2)}(\delta).
\end{equation}

\begin{corollary}[Ambient-pair cap-two bound]
\label{cor:cap-two-average}
For every \(0\leq\delta<4/9\), there are codes
\(C_N\subseteq\cS_{N,N}^{(2)}\) of minimum distance at least
\(\lfloor\delta N\rfloor+1\) such that
\[
  \liminf_{N\to\infty}
  \frac1N\log_2|C_N|
  \geq
  R_{\mathrm{amb}}^{(2)}(\delta).
\]
\end{corollary}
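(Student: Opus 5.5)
The plan is to apply Proposition~\ref{prop:restricted-average} with ambient space \(X=T=\cS_{N,N}^{(2)}\) and radius \(t_N=\lfloor\delta N\rfloor\). Minimum distance greater than \(t_N\) is the same as minimum distance at least \(t_N+1\), so \eqref{eq:restricted-average} yields codes \(C_N\subseteq\cS_{N,N}^{(2)}\) of this distance with
\[
  |C_N|\geq\frac{|\cS_{N,N}^{(2)}|^2}{P_{\cS_{N,N}^{(2)}}(t_N)}.
\]
The rest of the argument only takes logarithms of this inequality and inserts the two available asymptotic estimates.

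For the numerator, Lemma~\ref{lem:central-trinomial} gives \(|\cS_{N,N}^{(2)}|=\Theta(N^{-1/2})3^N\), so
\[
  \frac1N\log_2|\cS_{N,N}^{(2)}|^2=2\log_2 3-O\!\left(\frac{\log N}{N}\right).
\]
For the denominator, Lemma~\ref{lem:cap-two-pairs} applies on the whole range \(0\leq\delta<4/9\) and gives
\[
  \frac1N\log_2P_{\cS_{N,N}^{(2)}}(t_N)\leq\Psi_{\mathrm{amb}}^{(2)}(\delta)+o(1).
\]
Subtracting the two bounds gives \(\frac1N\log_2|C_N|\geq 2\log_2 3-\Psi_{\mathrm{amb}}^{(2)}(\delta)-o(1)\). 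Taking \(\liminf\) then yields \(R_{\mathrm{amb}}^{(2)}(\delta)\), as defined in \eqref{eq:cap-two-average-rate}.

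At \(\delta=0\) I will argue directly. There \(t_N=0\), so distinct vectors are automatically separated, and I can take \(C_N=\cS_{N,N}^{(2)}\). Its exponent is \(\log_2 3\), which equals \(2\log_2 3-\Psi_{\mathrm{amb}}^{(2)}(0)\) by the stated continuous value. The same conclusion also follows from Lemma~\ref{lem:cap-two-pairs} with \(\delta=0\), since that lemma includes this case.

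I expect no real obstacle: all of the analytic content sits in Lemma~\ref{lem:cap-two-pairs}, which I take as given. The only care needed is bookkeeping. I have to check that the \(o(1)\) in the lemma is uniform along the sequence \(t_N=\lfloor\delta N\rfloor\), which holds because the lemma is stated for exactly this sequence. I also have to check that the polynomial prefactor \(\sqrt3/(2\sqrt{\pi N})\) in \eqref{eq:central-trinomial} contributes only \(o(1)\) to the exponent.
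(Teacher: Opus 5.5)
Your proposal is correct and is the paper's proof: Lemmas~\ref{lem:central-trinomial} and~\ref{lem:cap-two-pairs} give \(|\cS_{N,N}^{(2)}|^2/P_{\cS_{N,N}^{(2)}}(\lfloor\delta N\rfloor)\geq2^{N(R_{\mathrm{amb}}^{(2)}(\delta)+o(1))}\), and Proposition~\ref{prop:restricted-average} with \(T=X=\cS_{N,N}^{(2)}\) converts this into the required codes. Your separate treatment of \(\delta=0\) is harmless but unnecessary, since Lemma~\ref{lem:cap-two-pairs} already covers that case.
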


\begin{proof}
By Lemmas~\ref{lem:central-trinomial} and~\ref{lem:cap-two-pairs},
\[
  \frac{|\cS_{N,N}^{(2)}|^2}
       {P_{\cS_{N,N}^{(2)}}(\lfloor\delta N\rfloor)}
  \geq
  2^{N(R_{\mathrm{amb}}^{(2)}(\delta)+o(1))}.
\]
Apply Proposition~\ref{prop:restricted-average}.
\end{proof}

The general fixed-cap analogue of Corollary~\ref{cor:cap-two-average}
governs uniform sampling from the capped simplex and admits the two-saddle
evaluation in Proposition~\ref{prop:two-saddle}.

The preceding argument averages over all occupation histograms; we call it
unstratified.  A stronger bound fixes a histogram \(\underline{v}\), whose
entry \(v_a\) counts the modes with occupation \(a\) (its type is
\(\pi=\underline{v}/q\)), applies the average-ball argument inside its
layer, the type class of all vectors with that histogram, and optimizes
over \(\pi\).  By the method of types, the exponent of one layer, its
constant-composition Gilbert exponent, is \(2H_2(\pi)\) minus the largest
entropy \(H_2(P)\) of a joint distribution \(P\) on \(\{0,1,2\}^2\) with both
marginals \(\pi\) and \(\sum_{a,b}|a-b|P_{ab}\leq2\delta\).  In the balanced
cap-two simplex the optimization over \(\pi\) has a closed-form solution.

For \(0\leq\delta\leq1/2\), let the superscript \(\mathrm{cc}\) denote
constant composition, and set
\begin{equation}
\label{eq:u-cc-delta}
  u_\delta^{\mathrm{cc}}
  \eqdef
  \begin{cases}
  0,
    &\delta=0,\\[2mm]
  \displaystyle
  \frac{
    \sqrt{1-2\delta+4\delta^2}-(1-\delta)
  }{\delta},
    &0<\delta\leq1/2.
  \end{cases}
\end{equation}
Equivalently, \(u=u_\delta^{\mathrm{cc}}\) is the unique point of
\([0,1]\) satisfying
\begin{equation}
\label{eq:u-cc-implicit}
  \delta
  =
  \frac{2u}{(3-u)(1+u)}.
\end{equation}
Define
\begin{equation}
\label{eq:R-cc-two}
  R_{\mathrm{cc}}^{(2)}(\delta)
  \eqdef
  \log_2\frac{3-u_\delta^{\mathrm{cc}}}
                 {1+u_\delta^{\mathrm{cc}}}
  +
  2\delta\log_2u_\delta^{\mathrm{cc}},
\end{equation}
where the final term is interpreted by continuity at \(\delta=0\).

\begin{theorem}[Tolhuizen exponent for balanced cap two]
\label{thm:cap-two-cc}
For every \(0\leq\delta<1/2\), there are codes
\(C_N\subseteq\cS_{N,N}^{(2)}\) of minimum distance at least
\(\lfloor\delta N\rfloor+1\) such that
\begin{equation}
\label{eq:cap-two-cc-classical}
  \liminf_{N\to\infty}
  \frac1N\log_2|C_N|
  \geq
  R_{\mathrm{cc}}^{(2)}(\delta).
\end{equation}
The type that maximizes the constant-composition Gilbert exponent is,
asymptotically,
\begin{equation}
\label{eq:cap-two-cc-histogram}
  \pi_\delta
  \eqdef
  (\rho_\delta,1-2\rho_\delta,\rho_\delta),
  \qquad
  \rho_\delta
  \eqdef
  \frac1{3-u_\delta^{\mathrm{cc}}}.
\end{equation}
Moreover,
\begin{equation}
\label{eq:strict-cc-improvement}
  R_{\mathrm{cc}}^{(2)}(\delta)
  >
  R_{\mathrm{amb}}^{(2)}(\delta)
  \qquad
  (0<\delta<4/9).
\end{equation}
The exponent \(R_{\mathrm{cc}}^{(2)}(\delta)\) is positive precisely for
\(0\leq\delta<1/2\).
\end{theorem}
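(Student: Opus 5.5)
The plan is to apply Proposition~\ref{prop:restricted-average} inside a single histogram layer \(T_N\subseteq\cS_{N,N}^{(2)}\). We take its type closest to \(\pi=(\rho,1-2\rho,\rho)\); the histogram is symmetric because the total is \(N\). By the method of types, \(|T_N|=2^{N(H_2(\pi)+o(1))}\). Next we bound \(P_{T_N}(t_N)\) by summing over joint types \(P\) on \(\{0,1,2\}^2\) with both marginals equal to \(\pi\) and \(\sum_{a,b}|a-b|P_{ab}\leq2\delta\). There are only polynomially many such joint types, and each contributes at most \(2^{NH_2(P)}\) ordered pairs. This gives the layer exponent
\[
  G(\rho,\delta)=2H_2(\pi)-\max_P H_2(P).
\]
The entropy maximization over the transportation polytope is concave with linear constraints. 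Its Lagrangian solution has the form \(P_{ab}=x_ax_b u^{|a-b|}\). Using the symmetry \(a\leftrightarrow2-a\), we may take \(x_0=x_2\). Writing \(P_{ab}=\alpha_a\alpha_b w^{|a-b|}\), the maximum equals \(2H_2(\pi)\) plus correction terms, with dual value
\[
  \max H_2(P)=\min_{\beta,\,w\le1}\Bigl[-2\sum_a\pi_a\log\beta_a-2\delta\log w\Bigr]
\]
subject to \(\sum_b\beta_a\beta_b w^{|a-b|}\)-type normalizations.

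Rather than solving the inner problem for general \(\rho\), we optimize jointly over \(\rho\). The optimal \(\rho\) should make the multipliers for the marginal constraints trivial up to a common factor. Joint optimality in \(\pi\) means the gradient \(\partial/\partial\pi_a\) of \(2H_2(\pi)-H_2(P^\ast)\), taken along the constraint direction, vanishes. With \(P^\ast_{ab}=\pi_a\pi_b\,\phi_a\phi_b\,u^{|a-b|}\), this forces \(\phi_0=\phi_2\) and \(\phi_1\) to be related so that the exponent collapses. The concrete choice to verify is the ansatz
\[
  P_{ab}=c\,\lambda_a\lambda_b u^{|a-b|},\qquad \lambda_0=\lambda_2=1,\quad \lambda_1=\lambda,
\]
together with marginals \(\pi_0=\pi_2=\rho\). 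Imposing stationarity in \(\rho\), the marginal equations and the distance equation \(\sum|a-b|P_{ab}=2\delta\) should yield \(\rho=1/(3-u)\) and \(\delta=2u/((3-u)(1+u))\), which is \eqref{eq:u-cc-implicit}. Substituting back gives
\[
  G=\log\frac{3-u}{1+u}+2\delta\log u,
\]
which is \eqref{eq:R-cc-two}. Since the layer's size exponent is \(H_2(\pi)\) and Proposition~\ref{prop:restricted-average} gives \(|C_N|\geq|T_N|^2/P_{T_N}(t_N)\), this proves \eqref{eq:cap-two-cc-classical}. The claim that \(\pi_\delta\) is the maximizer follows from the concavity, in \(\pi\), of \(G(\pi,\delta)\) over symmetric types: this function is \(2H_2(\pi)\) minus a max of concave functions, and its concavity needs checking, or alternatively we use uniqueness of the stationary point together with boundary comparison. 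This stationarity algebra, in particular verifying that the fixed point is the global maximum over \(\rho\in[0,1/2]\), is the main obstacle. I would first try to exhibit the closed form directly, by writing \(G\) through the dual
\[
  G(\rho)=\min_{u}\Bigl[2H_2(\pi)+2\delta\log u-\ldots\Bigr]
\]
and applying a minimax exchange.

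For \eqref{eq:strict-cc-improvement}, note that the unstratified bound equals the min over layers, since the ambient pair count is the sum over all pairs. More precisely, the exponent \(\Psi_{\mathrm{amb}}^{(2)}\) is the maximum of \(H_2(P)\) over joint types whose marginals \(\pi,\pi'\) are arbitrary. Hence
\[
  R_{\mathrm{amb}}^{(2)}\leq 2\max_\pi H_2(\pi)-\max_{P}H_2(P)\leq G(\pi_\delta),
\]
with \(H_2(\pi)\leq\log3\). Strictness holds because the ambient maximizer has marginal \(\propto(1+u+u^2,\,1+2u,\,1+u+u^2)\), which is not uniform for \(u>0\), while the ambient count uses \(\log3\) for the space size; the inequality chain is therefore strict whenever \(\delta>0\).

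Positivity is shown by monotonicity. At \(\delta=1/2\), \eqref{eq:u-cc-delta} gives \(u=1\), so \(R=\log1+\log1=0\). Differentiating via the envelope theorem gives \(dR/d\delta=2\log u<0\) for \(u<1\). Hence \(R\) is strictly decreasing, vanishes at \(1/2\), and is therefore positive on \([0,1/2)\), with \(R(0)=\log3\).
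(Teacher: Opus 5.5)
Your proof of \eqref{eq:cap-two-cc-classical} and of the positivity claim is the paper's own first step. The coupling \(x_ax_bu^{|a-b|}\) with \(x\propto(1,1-u,1)\) has both marginals \(\pi_\delta\), mean distance \(4u/((3-u)(1+u))=2\delta\), and KKT form, and the constancy of \(\pi_a/x_a\) collapses the exponent to \eqref{eq:R-cc-two}. You should carry out these checks rather than say the equations ``should yield'' them. You should also add the upper-semicontinuity step that passes from the finite types \(\pi_N\) and radius \(2t_N/N\) to \((\pi_\delta,\delta)\).

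The genuine gap is the optimality claim \eqref{eq:cap-two-cc-histogram}, which you leave open. Concavity of \(G\) cannot supply it. Put \(\rho_0=(1-\sqrt{1-2\delta})/2\). For \(\rho\leq\rho_0\) the product coupling of \((\rho,1-2\rho,\rho)\) has mean distance \(4\rho(1-\rho)\leq2\delta\), so \(G\equiv0\) on \([0,\rho_0]\), while \(G(\rho_\delta)=R_{\mathrm{cc}}^{(2)}(\delta)>0\). A concave function vanishing at \(0\) and at \(\rho_0>0\) is nonpositive beyond \(\rho_0\), so \(G\) is not concave.

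What is needed is uniqueness of the critical point where the distance constraint is active. You also need a comparison at the endpoint \(\rho=1/2\), where the layer is binary on \(\{0,2\}\) and \(G=1-h_2(\delta)\). The paper takes the uniqueness from Tolhuizen's App.~B. It handles the endpoint by showing that \(R_{\mathrm{cc}}^{(2)}(\delta)-(1-h_2(\delta))\) vanishes at \(\delta=1/2\) and has derivative \(\log_2\bigl(u(3-u^2)/2\bigr)<0\).

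Your own stationarity idea would give the uniqueness without that appendix. By the envelope identity, \(G'(\rho)=4\log_2\bigl((1-2\rho)x_0/(\rho x_1)\bigr)\), which vanishes only when \(x\propto\pi\). When \(u<1\), the two marginal equations then reduce to \((1-u)\bigl(\rho(3-u)-1\bigr)=0\), forcing \(\rho=1/(3-u)\). Since \(u\mapsto2u/((3-u)(1+u))\) is strictly increasing on \([0,1]\), there is exactly one such critical point.

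Your argument for \eqref{eq:strict-cc-improvement} does not work as written. The unstratified bound is not a minimum over layers: the layers with \(G=0\) would then force \(R_{\mathrm{amb}}^{(2)}\leq0\). The inequality \(2\log_23-\Psi_{\mathrm{amb}}^{(2)}(\delta)\leq G(\rho_\delta)\) also does not follow from \(H_2(\pi)\leq\log_23\), which points the wrong way.

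The comparison has to pass through the uniform layer \(\rho=1/3\). Its word exponent is \(\log_23\), and its pair exponent is strictly below \(\Psi_{\mathrm{amb}}^{(2)}(\delta)\). The reason is that the ambient maximizer is unique by strict concavity and has nonuniform marginals. This needs \(0<u_\delta<1\), i.e.\ \(0<\delta<4/9\), not merely \(\delta>0\), and compactness turns the exclusion into a strict gap.

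The last step, \(R_{\mathrm{cc}}^{(2)}(\delta)\geq G(1/3)\), is again the optimality of \(\pi_\delta\). So \eqref{eq:strict-cc-improvement} inherits the gap above, as it does in the paper, where it rests on Tolhuizen's appendix.
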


\begin{proof}
Write \(u\eqdef u_\delta^{\mathrm{cc}}\).  Only the optimality of
\(\pi_\delta\) among cap-two types uses
\cite[App.~B]{TolhuizenTernary2010}, and through it so does
\eqref{eq:strict-cc-improvement}.  We first prove \eqref{eq:cap-two-cc-classical} inside the layer of type
\(\pi_\delta\).  Index rows and columns by the ordered alphabet \(0,1,2\),
so that \(P_\delta(a,b)\) is the limiting fraction of coordinates with first
symbol \(a\) and second symbol \(b\).  The asymptotically optimizing
close-pair joint type of this layer is
\begin{equation}
\label{eq:cap-two-cc-joint-type}
  P_\delta
  \eqdef
  \frac1{(3-u)(1+u)}
  \begin{pmatrix}
    1 & u(1-u) & u^2\\
    u(1-u) & (1-u)^2 & u(1-u)\\
    u^2 & u(1-u) & 1
  \end{pmatrix}.
\end{equation}
Equivalently, \(P_\delta=(x_ax_bu^{|a-b|})_{a,b}\) with
\(x\propto(1,1-u,1)\) is the maximum-entropy coupling with both marginals
\(\pi_\delta\) and mean distance \(\sum_{a,b}|a-b|P_{ab}=2\delta\); since
\(\pi_a/x_a\) does not depend on \(a\), one has
\(2H_2(\pi_\delta)-H_2(P_\delta)=\log_2\frac{3-u}{1+u}+2\delta\log_2u
=R_{\mathrm{cc}}^{(2)}(\delta)\).  Hence, by the method of types,
Proposition~\ref{prop:restricted-average} applied inside the layer of
type \(\pi_\delta\) gives \eqref{eq:cap-two-cc-classical} without
\cite[App.~B]{TolhuizenTernary2010}.  At \(\delta=1/2\) one has
\(u_\delta^{\mathrm{cc}}=1\) and \(R_{\mathrm{cc}}^{(2)}(1/2)=0\); since the
envelope identity gives
\(\frac{d}{d\delta}R_{\mathrm{cc}}^{(2)}(\delta)=2\log_2u_\delta^{\mathrm{cc}}<0\)
on \((0,1/2)\), the exponent \(R_{\mathrm{cc}}^{(2)}\) is positive exactly
on \([0,1/2)\).

We next prove the optimality \eqref{eq:cap-two-cc-histogram} of
\(\pi_\delta\).  Tolhuizen's optimized ternary
constant-weight Gilbert--Varshamov bound is Eq.~(16) of
\cite{TolhuizenTernary2010}, optimized over \(\omega\) in its App.~B.  Set \(D\eqdef2\delta\),
\[
  \omega^\ast
  \eqdef
  \frac{1+D+\sqrt{D^2-D+1}}3,
  \qquad
  w_N\eqdef2\left\lfloor\frac{\omega^\ast N}{2}\right\rfloor.
\]
Translation by \(\mathbf 1\) maps the layer of the required cap-two histogram
to the
zero-sum sublayer of the ternary weight-\(w_N\) shell and doubles deletion
distance into Manhattan distance.  This sublayer has relative density
\[
  \binom{w_N}{w_N/2}/2^{w_N}=N^{-1/2+o(1)},
\]
so Corollary~\ref{cor:subexponential-restriction} transfers Tolhuizen's
exponent without exponential loss (this re-proves \eqref{eq:cap-two-cc-classical}).  Here
\cite[Eq.~(16)]{TolhuizenTernary2010} is the asymptotic form of the
constant-weight Gilbert bound on the weight-\(w_N\) shell \(X_N\), whose
balls all have the same size \cite[Prop.~4]{TolhuizenTernary2010}; hence
it is the exponent of \(|X_N|^2/P_{X_N}\) at Manhattan radius \(2t_N\).
Tolhuizen normalizes rates by \(\log_3\); the bracket in his Eq.~(16) is
the exponent in bits.  With \(\omega^\ast=2/(3-u)\), his
critical point becomes \(\delta={2u/((3-u)(1+u))}\), giving
\eqref{eq:R-cc-two} at the weight \(\omega^\ast=2\rho_\delta\) of \(\pi_\delta\).  In Tolhuizen's
parameters, with \(Z\eqdef(3-u)(1+u)\), one has \(\beta=2u/Z\) and
\(g=2u(1-u)/Z\), writing \(g\) for the variable \(\gamma\) of his
Eq.~(16); splitting the nonzero mass equally between the two signs gives
\eqref{eq:cap-two-cc-joint-type}.

It remains to show that \(\pi_\delta\) is a global optimizer, which completes
\eqref{eq:cap-two-cc-histogram}.  Every cap-two type is
\(\pi=(\rho,1-2\rho,\rho)\), and after translation its layer is the zero-sum
part of the ternary shell of weight \(\omega=2\rho\).  The reflection
\(a\mapsto2-a\) exchanges the two signs of the shell and preserves the
strictly concave maximum-entropy problem that gives the pair exponent of the
shell, so its unique maximizing close-pair coupling has both marginals
\(\pi\); hence the layer and the shell have the same word and pair exponents,
and the constant-composition Gilbert exponent of \(\pi\) is Tolhuizen's
constant-weight exponent at \(\omega=2\rho\).  For
\(\omega\leq1-\sqrt{1-2\delta}\) the product coupling of \(\pi\) has mean
distance \(\sum_{a,b}|a-b|\pi_a\pi_b=2\omega-\omega^2\leq2\delta\), so it is
admissible and this exponent is zero.  On the remaining range
\(1-\sqrt{1-2\delta}<\omega\leq1\), the right side of
\cite[Eq.~(16)]{TolhuizenTernary2010}, in bits, vanishes at the left end,
has the unique critical point \(\omega^\ast\)
\cite[App.~B]{TolhuizenTernary2010}, and tends to \(1-h_2(\delta)\) as
\(\omega\to1\).  The difference \(R_{\mathrm{cc}}^{(2)}(\delta)-(1-h_2(\delta))\)
vanishes at \(\delta=1/2\) and has derivative
\(\log_2\bigl(u(3-u^2)/2\bigr)<0\), because
\(u^3-3u+2=(u-1)^2(u+2)>0\) for \(0\leq u<1\).

We now prove \eqref{eq:strict-cc-improvement}.  For
\(0<\delta<4/9\), return to the ambient saddle \(u_\delta\) of
\eqref{eq:u-delta}, which is distinct from the optimized-type saddle
used above.  Among probability arrays \(P=(P_{ab})_{a,b=0}^2\) with
\(\sum_{a,b}|a-b|P_{ab}\leq2\delta\), the unique maximizer of
\(H_2(P)\) has entries proportional to \(u_\delta^{|a-b|}\).
Its value is \(\Psi_{\mathrm{amb}}^{(2)}(\delta)\), and its marginals are
proportional to
\(1+u_\delta+u_\delta^2,1+2u_\delta,1+u_\delta+u_\delta^2\).
They are nonuniform because \(0<u_\delta<1\).  Requiring both marginals
to equal \((1/3,1/3,1/3)\) excludes this unique maximizer.  Compactness
and strict entropy concavity therefore make the maximum pair entropy
strictly smaller than \(\Psi_{\mathrm{amb}}^{(2)}(\delta)\), while the
single-word entropy remains \(\log_2 3\).  More explicitly, types
tending to this uniform marginal have \(2^{N\log_2 3+o(N)}\) words;
their ordered close-pair count has exponent equal to the maximum joint
entropy under the two fixed marginals and the same distance constraint.
Proposition~\ref{prop:restricted-average} subtracts this pair exponent
from twice the word exponent.  The resulting Gilbert exponent of the
uniform type is therefore strictly larger than
\(2\log_2 3-\Psi_{\mathrm{amb}}^{(2)}(\delta)\); optimization over all
types can only increase it.  This proves \eqref{eq:strict-cc-improvement};
compare the numerical observation in
\cite[Sec.~7 and Fig.~2]{TolhuizenTernary2010}.
\end{proof}

It remains to evaluate the cap-aware denominator.  For
\(0\leq\delta<2\), let \(v_\delta\geq0\) solve
\begin{equation}
\label{eq:v-delta}
  \delta
  =
  \frac{v_\delta+2v_\delta^2}
       {1+v_\delta+v_\delta^2},
\end{equation}
and define
\begin{equation}
\label{eq:Lambda-two}
  \Lambda_2(\delta)
  \eqdef
  \log_2(1+v_\delta+v_\delta^2)
  -
  \delta\log_2v_\delta,
\end{equation}
with \(\Lambda_2(0)=0\).

\begin{lemma}[Cap-two moment exponent]
\label{lem:cap-two-moments}
For \(t_N=\lfloor\delta N\rfloor\) and \(0\leq\delta<2\),
\[
  m_2(N,t_N)
  =
  [z^{t_N}](1+z+z^2)^N
  =
  2^{N(\Lambda_2(\delta)+o(1))}.
\]
\end{lemma}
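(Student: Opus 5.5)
The first equality is the definition \eqref{eq:moment-count} at \(B=2\), \(q=N\). For the exponent I will use a standard saddle-point (Chernoff) upper bound together with a matching lower bound from a single dominant term. Set \(f(v)\eqdef1+v+v^2\), write \(t=t_N\), and let \(v=v_\delta\) be defined by \eqref{eq:v-delta}. The function \(v\mapsto vf'(v)/f(v)=(v+2v^2)/(1+v+v^2)\) increases strictly from \(0\) to \(2\) on \([0,\infty)\), so \(v_\delta\) exists and is unique for \(0\le\delta<2\). The case \(\delta=0\) is immediate: \(t_N=0\), the coefficient is \(1\), and \(\Lambda_2(0)=0\). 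Assume \(0<\delta<2\) from now on.

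For the \emph{upper bound}, nonnegativity of the coefficients gives, for every \(v>0\),
\[
  [z^t]f(z)^N\le f(v)^N v^{-t}.
\]
I take \(v=v_\delta\) and use \(t=\delta N+O(1)\), so \(v^{-t}=2^{-\delta N\log_2v+O(1)}\). This bound is valid whether or not \(v_\delta\) exceeds \(1\), and it yields exactly \(2^{N\Lambda_2(\delta)+O(1)}\).

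For the \emph{lower bound}, I expand by types:
\[
  [z^t]f^N=\sum\binom{N}{n_0,n_1,n_2},
\]
where the sum runs over \(n_0+n_1+n_2=N\) with \(n_1+2n_2=t\). I keep only the term with \((n_0,n_1,n_2)\) a rounding of \(N(1,v,v^2)/f(v)\), adjusted by \(O(1)\) so that \(n_1+2n_2=t\) holds exactly. The tilted distribution \(p=(1,v,v^2)/f(v)\) has mean \(p_1+2p_2=\delta\) by \eqref{eq:v-delta}. Stirling's formula gives the multinomial as \(2^{NH_2(p)-O(\log N)}\), and a direct computation gives
\[
  H_2(p)=\log_2f(v)-(p_1+2p_2)\log_2v=\Lambda_2(\delta).
\]
The rounding perturbs the entropy by only \(O(\log N/N)\), because all \(p_i>0\) when \(v>0\), which keeps the term in the interior and makes the entropy Lipschitz there.

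The only step needing care is that the integer adjustment stays feasible. With \(p\) interior, shifting \(n_1\) or \(n_2\) by a bounded amount keeps all entries nonnegative for large \(N\). The parity of \(t-n_1\) is handled by moving one unit between \(n_1\) and \(n_0\). Combining the two bounds gives \(N^{-1}\log_2m_2(N,t_N)\to\Lambda_2(\delta)\).
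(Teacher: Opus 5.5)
Your proof is correct, but it takes a different route from the paper. The paper gives no argument beyond citing the large-powers saddle-point theorem \cite[Thm.~VIII.8]{FlajoletSedgewick2009} with positive saddle $v_\delta$. You prove only the exponential order, by two elementary steps. The upper side is the Chernoff bound $[z^t]f(z)^N\le f(v)^Nv^{-t}$. The lower side is a single dominant multinomial term at the type of the tilted law $p\propto(1,v,v^2)$, combined with the identity $H_2(p)=\log_2 f(v)-\delta\log_2 v=\Lambda_2(\delta)$.

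The cited theorem gives more: the exact asymptotic with its $\Theta(N^{-1/2})$ prefactor, uniformly for $\delta$ in compact subintervals of $(0,2)$. In exchange, one must check its hypotheses (nonnegative coefficients, aperiodicity of $1+z+z^2$) and use its uniformity in $t/N$ to absorb the floor in $t_N$. Your argument needs none of this and loses only $O(\log N)$ in the exponent, which is all the lemma asserts. It also carries over verbatim to general $F_B$, that is, to \eqref{eq:coefficient-exponent}. Finally, it shows that $\Lambda_2(\delta)$ is the maximum entropy of a one-coordinate type with mean $\delta$, which fits the method-of-types viewpoint used throughout the paper.

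One simplification: the parity adjustment is unnecessary. Take $n_2=\lfloor Np_2\rfloor$, $n_1=t-2n_2$, and $n_0=N-t+n_2$. The mean identity $p_1+2p_2=\delta$ puts $n_1$ and $n_0$ within $O(1)$ of $Np_1$ and $Np_0$, so the entropy perturbation is actually $O(1/N)$.
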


This is the standard large-powers coefficient estimate
\cite[Thm.~VIII.8]{FlajoletSedgewick2009}; its positive saddle is
\(v=v_\delta\) from \eqref{eq:v-delta}.

Subtracting the cap-two moment exponent from Tolhuizen's classical exponent
gives the quantum dimension exponent.

\begin{theorem}[Exact cap-two Fock codes]
\label{thm:exact-cap-two}
Define
\begin{equation}
\label{eq:kappa-two}
  k_2(\delta)
  \eqdef
  R_{\mathrm{cc}}^{(2)}(\delta)-\Lambda_2(\delta),
  \qquad
  0\leq\delta<1/2.
\end{equation}
Let \(\delta_2^\ast\) be the unique zero of
\(k_2(\delta)\) in \((0,1/2)\).  For every
\(0<\delta<\delta_2^\ast\), there are exact \(N\)-mode Fock codes of
total excitation \(N\), distance \(\lfloor\delta N\rfloor+1\), and
local occupation at most two, whose dimensions satisfy
\begin{equation}
\label{eq:cap-two-main-rate}
  \liminf_{N\to\infty}
  \frac1N\log_2K_N
  \geq
  k_2(\delta).
\end{equation}
Numerically,
\[
  \delta_2^\ast
  \approx
  0.1355.
\]
\end{theorem}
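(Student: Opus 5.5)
The plan is to combine Theorem~\ref{thm:cap-two-cc} with Lemma~\ref{lem:cap-two-moments} through the cap-aware lifting of Theorem~\ref{thm:cap-aware-aab}, and then to handle the qualitative claims about \(\delta_2^\ast\) separately.

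Fix \(0<\delta<\delta_2^\ast\) and set \(t_N=\lfloor\delta N\rfloor\). Theorem~\ref{thm:cap-two-cc} supplies supports \(C_N\subseteq\cS_{N,N}^{(2)}\) of minimum distance at least \(t_N+1\) with \(|C_N|\geq 2^{N(R_{\mathrm{cc}}^{(2)}(\delta)-o(1))}\). Set
\[
  K_N\eqdef\left\lfloor\frac{|C_N|-1}{m_2(N,t_N)}\right\rfloor+1 .
\]
This choice satisfies \eqref{eq:cap-aware-threshold} with \(B=2\), \(q=N\geq2\) and \(t=t_N\leq N\). Theorem~\ref{thm:cap-aware-aab} then yields an exact \(K_N\)-dimensional Fock code. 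It corrects all loss patterns of order at most \(t_N\), and it is supported on \(\cS_{N,N}^{(2)}\), so every mode has occupation at most two. Lemma~\ref{lem:cap-two-moments} gives \(m_2(N,t_N)=2^{N(\Lambda_2(\delta)+o(1))}\). Since \(k_2(\delta)>0\), the ratio \(|C_N|/m_2(N,t_N)\) grows exponentially, so
\[
  \liminf_{N\to\infty}\tfrac1N\log_2K_N\geq R_{\mathrm{cc}}^{(2)}(\delta)-\Lambda_2(\delta)=k_2(\delta).
\]
The inequality \eqref{eq:cap-two-main-rate} holds trivially even when \(k_2\leq0\), because \(K_N\geq1\).

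The main obstacle is showing that \(\delta_2^\ast\) is well defined, that is, that \(k_2\) has exactly one zero in \((0,1/2)\). I would argue this in three steps.

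\begin{enumerate}
\item \textbf{Limiting values.} At \(\delta=0\) we have \(R_{\mathrm{cc}}^{(2)}(0)=\log_2 3\) and \(\Lambda_2(0)=0\), so \(k_2(0)>0\). As \(\delta\to1/2\), \(R_{\mathrm{cc}}^{(2)}\to0\), while \(\Lambda_2(1/2)>0\) because \(\Lambda_2\) increases on \([0,\delta_0)\), where \(\delta_0\) is the mean \(1\) of \((1+z+z^2)\). Hence \(k_2(1/2^-)<0\).
\item \textbf{Derivatives.} By the envelope identities, \((R_{\mathrm{cc}}^{(2)})'=2\log_2u_\delta^{\mathrm{cc}}\) (as noted in the proof of Theorem~\ref{thm:cap-two-cc}) and \(\Lambda_2'=-\log_2 v_\delta\). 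Therefore
\[
  k_2'(\delta)=\log_2\bigl((u_\delta^{\mathrm{cc}})^2v_\delta\bigr).
\]
\item \textbf{Monotonicity.} On \((0,1/2)\) we have \(u^{\mathrm{cc}}_\delta<1\), and \(v_\delta<1\) whenever \(\delta<1\), since \(\delta=1\) corresponds to \(v=1\). So \(k_2'<0\) throughout \((0,1/2)\), and \(k_2\) is strictly decreasing there.
\end{enumerate}

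Together with continuity, these steps give exactly one zero in \((0,1/2)\).

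Finally, \(\delta_2^\ast\approx0.1355\) is obtained by a routine numerical root-finding of \(k_2\), using the explicit formulas \eqref{eq:u-cc-delta} and \eqref{eq:R-cc-two} and solving the quadratic \eqref{eq:v-delta} for \(v_\delta\).
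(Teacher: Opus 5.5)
Your proposal is correct and follows the paper's proof essentially step for step. It lifts the optimized-layer supports of Theorem~\ref{thm:cap-two-cc} through the cap-aware threshold with $K_N=1+\lfloor(|C_N|-1)/m_2(N,t_N)\rfloor$, and it proves uniqueness of the zero the same way: $k_2(0)=\log_2 3$, the envelope derivatives $2\log_2u_\delta^{\mathrm{cc}}<0$ and $-\log_2 v_\delta>0$, and $R_{\mathrm{cc}}^{(2)}\to0$ against $\Lambda_2(1/2)>0$. The only difference is that the paper supports $\delta_2^\ast\approx0.1355$ with a certified exact-rational interval enclosure $0.1355<\delta_2^\ast<0.1356$ (Appendix~\ref{app:verification}) rather than a floating-point root search; this matters only if the four-digit value itself is to be rigorous.
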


\begin{proof}
Choose the classical codes in
Theorem~\ref{thm:cap-two-cc}, and apply
Theorem~\ref{thm:cap-aware-aab} with
\(m_2(N,\lfloor\delta N\rfloor)\).  If \(L_N\eqdef|C_N|\), the guaranteed
quantum dimension is
\[
  K_N
  \geq
  1+
  \left\lfloor
    \frac{L_N-1}
         {m_2(N,\lfloor\delta N\rfloor)}
  \right\rfloor.
\]
Theorem~\ref{thm:cap-two-cc} and
Lemma~\ref{lem:cap-two-moments} give
\eqref{eq:cap-two-main-rate}.

At \(\delta=0\), one has
\(k_2(0)=\log_2 3>0\).  On \(0<\delta<1/2\), the envelope
identity for the optimized layer is
\[
  \frac{d}{d\delta}
  R_{\mathrm{cc}}^{(2)}(\delta)
  =
  2\log_2u_\delta^{\mathrm{cc}}<0,
\]
while \(\Lambda_2'(\delta)=-\log_2v_\delta>0\).  Hence
\(k_2\) is strictly decreasing.  As \(\delta\uparrow1/2\), the
classical term tends to zero while \(\Lambda_2(\delta)\) tends to the
positive value \(\Lambda_2(1/2)\).  Hence \(k_2\) becomes negative
before the endpoint and has exactly one zero in \((0,1/2)\).
The computer-assisted enclosure \(0.1355<\delta_2^\ast<0.1356\), certified
by exact rational interval arithmetic in Appendix~\ref{app:verification},
gives the stated value.
\end{proof}

\begin{remark}[Exponents at \(\delta=0.1\)]
\label{rem:cap-two-numerics}
At \(\delta=0.1\),
\[
  u_{0.1}^{\mathrm{cc}}
  \approx
  0.165,
  \qquad
  \rho_{0.1}
  \approx
  0.353,
\]
and
\[
  R_{\mathrm{cc}}^{(2)}(0.1)
  \approx
  0.763.
\]
Subtracting the moment exponent \(\Lambda_2(0.1)\) of \eqref{eq:Lambda-two}
gives
\[
  k_2(0.1)
  \approx
  0.281.
\]
Since \(C_\infty(1)=2\), the rate relative to the full
\(N\)-mode, \(N\)-excitation Hilbert space is at least
\[
  0.140.
\]
If one uses the published AAB denominator in
\eqref{eq:aab-threshold}, the achievable exponent is instead
\begin{equation}
\label{eq:cap-two-published-aab}
  k_2^{\mathrm{AAB}}(\delta)
  \eqdef
  R_{\mathrm{cc}}^{(2)}(\delta)
  -(1+\delta)h_2\left(\frac1{1+\delta}\right).
\end{equation}
Its positive-rate endpoint is approximately
\(0.1352\), against \(\delta_2^\ast\approx0.1355\), and its exponent at
\(\delta=0.1\) is approximately \(0.280\).
Table~\ref{tab:cap-two-improvement} isolates the classical input: with the
same cap-aware denominator, the optimized layer raises the exponent
at \(\delta=0.1\) from \(0.275\) to \(0.281\) and the endpoint from about
\(0.1342\) to \(0.1355\).  As a function of \(\delta\), the first row coincides with the qutrit rate of
\cite[Eq.~(82)]{ElimelechAlbertBargAQEC2026} for codes whose total
excitation is not fixed, after conversion from base-3 logarithms per qutrit
to bits per excitation (\(\delta\) is unchanged, since the \(N\) modes carry
\(N\) excitations): both equal
\(2\log_23-\Psi_{\mathrm{amb}}^{(2)}(\delta)-\Lambda_2(\delta)\), the
Gilbert exponent of the full cube \(\{0,1,2\}^N\), which the unstratified
bound uses (Lemmas~\ref{lem:central-trinomial} and~\ref{lem:cap-two-pairs}),
minus the exponent of the loss patterns with at most two losses per mode.
\end{remark}

\begin{table}[ht]
\centering
\caption{Exponents of exact balanced cap-two Fock codes;
both rows use the cap-aware denominator \(m_2(N,t)\).
All exponents and positive-rate endpoints are normalized per excitation
and rounded to three (exponents) and four (endpoints) decimal places.
The second row rests on the interval certificates of
Appendix~\ref{app:verification}; the first row is a numerical evaluation.}
\label{tab:cap-two-improvement}
\begin{tabular}{@{}lrr@{}}
\toprule
classical input&
exponent at \(\delta=0.1\)&
positive-rate endpoint\\
\midrule
unstratified average&
\(0.275\)&
\(0.1342\)\\
optimized layer&
\(0.281\)&
\(0.1355\)\\
\bottomrule
\end{tabular}
\end{table}

\section{General fixed occupation caps}
\label{sec:general-caps}

The cap-two calculation is a closed-form instance of a general
finite-alphabet method-of-types formula.  Fix \(B\geq1\) and
\(\alpha>0\) with \(\alpha B>1\), and put
\[
  q_N\eqdef\lfloor\alpha N\rfloor,
  \qquad
  t_N\eqdef\lfloor\delta N\rfloor,
  \qquad
  F_B(z)\eqdef1+z+\cdots+z^B.
\]
The condition \(\alpha B>1\) places the mean occupation \(1/\alpha\)
strictly inside the interval \([0,B]\).

\subsection{Exponents for general fixed caps}
\label{subsec:general-exponents}

As for cap two, the exponent of Theorem~\ref{thm:general-fixed-cap} is a
constant-composition Gilbert exponent \(R_B^{\mathrm{cc}}\) of the
classical support minus the moment exponent \(\Lambda_B\) of the cap-aware
denominator \(m_B(q_N,t_N)\).  We also need the space exponent \(C_B\) of
the capped simplex and, for the unstratified benchmark
(Proposition~\ref{prop:two-saddle}, Table~\ref{tab:fixed-cap-rates}) and
uniform sampling (Section~\ref{sec:random}), the pair exponent
\(\Psi_B\).

The space and moment exponents below are values of one coefficient
exponent.  The map
\[
  w\longmapsto\frac{wF'_B(w)}{F_B(w)}
\]
is strictly increasing from \(0\) to \(B\) on \((0,\infty)\).  For
\(0<\mu<B\), let \(w_\mu>0\) be the unique solution of
\[
  \frac{w_\mu F'_B(w_\mu)}{F_B(w_\mu)}=\mu,
\]
and define
\begin{equation}
\label{eq:H-cap-coefficient}
  H_B^{\mathrm{coeff}}(\mu)
  \eqdef
  \log_2F_B(w_\mu)-\mu\log_2w_\mu.
\end{equation}
Set \(H_B^{\mathrm{coeff}}(0)\eqdef0\) and \(H_B^{\mathrm{coeff}}(B)\eqdef0\) by continuity.
By the large-powers estimate \cite[Thm.~VIII.8]{FlajoletSedgewick2009},
if \(n_k\to\infty\) and \(j_k/n_k\to\mu\in(0,B)\), then
\begin{equation}
\label{eq:coefficient-exponent}
  [z^{j_k}]F_B(z)^{n_k}=2^{n_k(H_B^{\mathrm{coeff}}(\mu)+o(1))}.
\end{equation}
The envelope identity
\(\bigl(H_B^{\mathrm{coeff}}\bigr)'(\mu)=-\log_2w_\mu\) and the symmetry
\(F_B(z)=z^BF_B(1/z)\), which gives
\(H_B^{\mathrm{coeff}}(B-\mu)=H_B^{\mathrm{coeff}}(\mu)\), show that
\(H_B^{\mathrm{coeff}}\) increases on \([0,B/2]\) and decreases on
\([B/2,B]\).
Put \(\rho_B\eqdef w_{1/\alpha}\), and define the capped-space exponent
\begin{equation}
\label{eq:C-B}
  C_B(\alpha)
  \eqdef
  \alpha H_B^{\mathrm{coeff}}(1/\alpha)
  =
  \alpha\log_2F_B(\rho_B)-\log_2\rho_B.
\end{equation}

For the pair exponent, let \(\cP_B(\alpha,\delta)\) be the set of
probability distributions
\(P=(P_{ab})_{0\leq a,b\leq B}\) of a pair
\((U,V)\in\{0,\ldots,B\}^2\) satisfying
\begin{equation}
\label{eq:general-type-constraints}
  \E_P U
  =
  \E_P V
  =
  \frac1\alpha,
  \qquad
  \E_P|U-V|
  \leq
  \frac{2\delta}{\alpha}.
\end{equation}
Set
\begin{equation}
\label{eq:Psi-B}
  \Psi_B(\alpha,\delta)
  \eqdef
  \alpha
  \max_{P\in\cP_B(\alpha,\delta)}
  H_2(P).
\end{equation}

Write \(\Delta_{\{0,\ldots,B\}}\) for the probability simplex on
\(\{0,\ldots,B\}\).  For the refinement by layers, define
\begin{equation}
\label{eq:Pi-B}
  \Pi_B(\alpha)
  \eqdef
  \left\{
    \pi\in\Delta_{\{0,\ldots,B\}}:
    \sum_{a=0}^{B}a\pi_a=\frac1\alpha
  \right\}.
\end{equation}
For \(\pi\in\Pi_B(\alpha)\), let
\(\Theta_B(\pi;\alpha,\delta)\) be the largest entropy \(H_2(P)\)
among joint distributions \(P=(P_{ab})_{0\leq a,b\leq B}\) satisfying
\begin{equation}
\label{eq:Theta-B-constraints}
  \sum_b P_{ab}=\sum_b P_{ba}=\pi_a
  \quad(0\leq a\leq B),
  \qquad
  \sum_{a,b}|a-b|P_{ab}
  \leq
  \frac{2\delta}{\alpha}.
\end{equation}
Set
\begin{equation}
\label{eq:R-cc-B}
  R_B^{\mathrm{cc}}(\alpha,\delta)
  \eqdef
  \max_{\pi\in\Pi_B(\alpha)}
  \alpha
  \left(
    2H_2(\pi)
    -
    \Theta_B(\pi;\alpha,\delta)
  \right).
\end{equation}
Finally, for \(0\leq\delta<\alpha B\), define the moment exponent
\begin{equation}
\label{eq:Lambda-B}
  \Lambda_B(\alpha,\delta)
  \eqdef
  \alpha H_B^{\mathrm{coeff}}(\delta/\alpha),
  \qquad\text{so that}\qquad
  \Lambda_B(\alpha,0)=0.
\end{equation}
Since \(|\cS_{q,N}^{(B)}|=[z^N]F_B(z)^q=m_B(q,N)\), one has
\(C_B(\alpha)=\Lambda_B(\alpha,1)\).

At \((B,\alpha)=(2,1)\) and \(0\leq\delta<1/2\), the general notation
specializes as
\[
  C_2(1)=\log_2 3,
  \qquad
  R_2^{\mathrm{cc}}(1,\delta)=R_{\mathrm{cc}}^{(2)}(\delta),
  \qquad
  \Lambda_2(1,\delta)=\Lambda_2(\delta).
\]
Since \(\Pi_2(1)\) consists of the cap-two types \((\rho,1-2\rho,\rho)\), the
optimality of \(\pi_\delta\) in Theorem~\ref{thm:cap-two-cc}, which uses
\cite[App.~B]{TolhuizenTernary2010}, shows that the maximum in
\eqref{eq:R-cc-B} equals \(R_{\mathrm{cc}}^{(2)}(\delta)\); without that
appendix, evaluation at \(\pi_\delta\) gives
\(R_2^{\mathrm{cc}}(1,\delta)\geq R_{\mathrm{cc}}^{(2)}(\delta)\).  Moreover, the full-cube maximizer
\(P_{ab}\propto u_\delta^{|a-b|}\) described in the proof of
Theorem~\ref{thm:cap-two-cc} already has both marginal means equal to one,
so
\(\Psi_2(1,\delta)=\Psi_{\mathrm{amb}}^{(2)}(\delta)\) for
\(0\leq\delta<4/9\); by Proposition~\ref{prop:general-exponents}, the bound
of Lemma~\ref{lem:cap-two-pairs} is exponentially tight on the balanced
cap-two simplex.

\begin{proposition}[Capped space, pair, and moment exponents]
\label{prop:general-exponents}
For every fixed \(B\), \(\alpha\) with \(\alpha B>1\), and every fixed
\(0\leq\delta<1\),
\begin{align}
  |\cS_{q_N,N}^{(B)}|
  &=
  2^{N(C_B(\alpha)+o(1))},
  \label{eq:general-space-asymptotic}\\
  P_{\cS_{q_N,N}^{(B)}}(t_N)
  &=
  2^{N(\Psi_B(\alpha,\delta)+o(1))},
  \label{eq:general-pair-asymptotic}\\
  m_B(q_N,t_N)
  &=
  2^{N(\Lambda_B(\alpha,\delta)+o(1))}.
  \label{eq:general-moment-asymptotic}
\end{align}
\end{proposition}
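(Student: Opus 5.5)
The space and moment asymptotics \eqref{eq:general-space-asymptotic} and \eqref{eq:general-moment-asymptotic} follow directly from the large-powers estimate \eqref{eq:coefficient-exponent}. For the space count I use \(|\cS_{q_N,N}^{(B)}|=[z^N]F_B(z)^{q_N}\) with \(n_k=q_N\) and \(j_k=N\). Then \(j_k/n_k\to1/\alpha\), and this lies in \((0,B)\) precisely because \(\alpha B>1\). The exponent per excitation is therefore \(\alpha H_B^{\mathrm{coeff}}(1/\alpha)=C_B(\alpha)\). For the moment count I take \(j_k=t_N\). When \(0<\delta<1\) the ratio \(\delta/\alpha\) tends to a value in \((0,1/\alpha)\subset(0,B)\), giving \(\Lambda_B(\alpha,\delta)\). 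The case \(\delta=0\) is trivial: \(m_B(q,0)=1\) and \(\Lambda_B(\alpha,0)=0\). The \(o(1)\) terms absorb the floors, using continuity of \(H_B^{\mathrm{coeff}}\) on \((0,B)\).

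The pair asymptotic \eqref{eq:general-pair-asymptotic} I would prove by the method of joint types. An ordered pair \((\underline n,\underline m)\) of capped vectors with \(q=q_N\) coordinates has a joint type \(P\), the empirical distribution of \((n_i,m_i)\) on \(\{0,\ldots,B\}^2\). The constraints \(|\underline n|=|\underline m|=N\) become \(\E_PU=\E_PV=N/q\), and the condition \(d\le t\) becomes \(\E_P|U-V|\le 2t/q\). The number of joint types is polynomial in \(q\). Each joint type class has size \(2^{q(H_2(P)+o(1))}\), uniformly in \(P\). Summing over admissible types therefore gives
\[
P_{\cS_{q_N,N}^{(B)}}(t_N)=2^{q_N(\max H_2(P)+o(1))},
\]
where the maximum runs over the discrete admissible joint types with denominator \(q_N\). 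The upper bound then follows at once, since every such type lies in \(\cP_B(\alpha_N,\delta_N)\) with perturbed parameters \(\alpha_N=q_N/N\) and \(\delta_N=t_N/N\).

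The main obstacle is the continuity step: I must show that the discrete maximum converges to the maximum \(\max_{\cP_B(\alpha,\delta)}H_2\), with the equality constraints holding exactly. For the upper bound, I would use that the constraint sets vary upper-semicontinuously in \((\alpha_N,\delta_N)\) and that \(H_2\) is continuous, so a compactness argument suffices.

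For the lower bound I need an interior approximation. First, for \(\delta>0\), I pick an optimizer \(P^\ast\). I mix it slightly with a coupling whose marginal means are \(1/\alpha\) and whose \(\E|U-V|\) is strictly below \(2\delta/\alpha\). The diagonal coupling \(P_{aa}=\pi_a\), with \(\pi\) any full-support distribution of mean \(1/\alpha\), works; such a \(\pi\) exists because \(0<1/\alpha<B\). The mixture has entropy close to \(H_2(P^\ast)\), full support, and slack in the distance constraint. Second, I round it to a type with denominator \(q_N\) whose two marginal sums equal \(N\) exactly. Full support lets me adjust the entries by \(O(1/q_N)\) along the directions \(e_{a,b}\) to correct the row and column means. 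The slack absorbs the resulting \(O(1/N)\) change in \(\E|U-V|\). Letting the mixing weight go to zero gives the lower bound. At \(\delta=0\) only diagonal couplings are admissible, and the same argument applied to \(\pi\) alone gives \(\Psi_B(\alpha,0)=\alpha\max_{\pi\in\Pi_B(\alpha)}H_2(\pi)=C_B(\alpha)\). This agrees with \(P(0)=|\cS^{(B)}_{q_N,N}|\), which serves as a consistency check.
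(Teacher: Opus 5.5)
Your proposal is correct and follows the paper's route: the large-powers estimate \eqref{eq:coefficient-exponent} for the space and moment counts, and joint-type counting for the pair count. Your compactness argument for the upper bound and your mixing-then-rounding device for the lower bound spell out the paper's one-line claim that admissible \(q_N\)-types approximate the continuous maximizer. One loose point: the mixture has full support only if \(P^\ast\) does. That is true for \(\delta>0\), by the infinite inward derivative of entropy at a full-support feasible point such as \((1-\mu)\operatorname{diag}(\pi)+\mu\,\pi\otimes\pi\). You do not actually need it, though, since the full-support diagonal part already supplies the unit moves \((a,a)\to(a\pm1,a)\) and \((a,a)\to(a,a\pm1)\) that fix the two marginal sums.
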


The first and third identities are \eqref{eq:coefficient-exponent} with
\(n_k=q_N\) and \(j_k=N\) or \(j_k=t_N\), indexed by \(N\) (for \(\delta=0\), trivially
\(m_B(q,0)=1\)); the first is also
\cite[Sec.~IV, Thm.~13, Eqs.~(102)--(104)]{KovacevicVukobratovic2022}.  For
\eqref{eq:general-pair-asymptotic}, group ordered pairs by their joint type
\(P\) on \(\{0,\ldots,B\}^2\).  A joint \(q\)-type with both means \(N/q\)
and \(\sum_{a,b}|a-b|P_{ab}\leq2t/q\) contributes
\(q!/\prod_{a,b}(qP_{ab})!=2^{qH_2(P)+O(\log q)}\) pairs, and there are at
most \((q+1)^{(B+1)^2}\) joint types.  The continuous maximizer has full
support when \(\delta>0\) and is diagonal when \(\delta=0\), so admissible
\(q_N\)-types approximate it, and the maximum over them tends to
\(\Psi_B/\alpha\).

\begin{proposition}[Constant-composition capped bound]
\label{prop:general-cc}
For every fixed \(B\), \(\alpha\) with \(\alpha B>1\), and
\(0\leq\delta<1\), there are classical codes in
\(\cS_{\lfloor\alpha N\rfloor,N}^{(B)}\) with minimum distance at least
\(\lfloor\delta N\rfloor+1\) and exponent at least
\(R_B^{\mathrm{cc}}(\alpha,\delta)\).  Moreover,
\begin{equation}
\label{eq:cc-dominates-average}
  R_B^{\mathrm{cc}}(\alpha,\delta)
  \geq
  2C_B(\alpha)-\Psi_B(\alpha,\delta).
\end{equation}
\end{proposition}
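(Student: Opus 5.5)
We prove the existence statement by applying Proposition~\ref{prop:restricted-average} inside one histogram layer, and the comparison by a convexity (or averaging) argument over layers.

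\textbf{Existence.} Fix a maximizer \(\pi^\ast\in\Pi_B(\alpha)\) of \eqref{eq:R-cc-B}; it exists because the objective is continuous on a compact set (continuity of \(\Theta_B\) in \(\pi\) for \(\delta>0\) follows from concavity of entropy on the constraint polytope). Choose a \(q_N\)-type \(\pi^{(N)}\) with exact mean \(N/q_N\) and \(\pi^{(N)}\to\pi^\ast\). This requires rounding \(q_N\pi^\ast\) to integers and correcting the mean by \(O(1)\) moves between adjacent occupations. The corrections are feasible because \(\alpha B>1\) puts the mean in the interior of \([0,B]\). If \(\pi^\ast\) sits on the boundary of the simplex, first move it slightly inside \(\Pi_B(\alpha)\) at small loss, by continuity.

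Let \(T_N\) be the layer of type \(\pi^{(N)}\). Then \(|T_N|=2^{q_NH_2(\pi^{(N)})+O(\log N)}\). Group the ordered pairs in \(T_N\) by joint type. Every such joint type has both marginals \(\pi^{(N)}\) and satisfies \(\sum|a-b|P_{ab}\leq 2t_N/q_N\). Each contributes at most \(2^{q_NH_2(P)}\) pairs, and there are polynomially many joint types, so
\[
P_{T_N}(t_N)\leq 2^{q_N\max H_2(P)+O(\log N)}.
\]
The maximum here is over a polytope converging to the one defining \(\Theta_B(\pi^\ast;\alpha,\delta)\).

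\textbf{Main obstacle: upper semicontinuity of the pair entropy.} The technical step is
\[
\limsup_N \max H_2(P)\leq\Theta_B(\pi^\ast;\alpha,\delta)
\]
as the marginal and distance constraints converge. I would argue by compactness: take a limit point of near-maximizers. It satisfies the limiting constraints, since \(2t_N/q_N\to2\delta/\alpha\) and the constraints are closed. Continuity of \(H_2\) then gives the inequality. Only this upper bound is needed, which is the direction compactness provides. Proposition~\ref{prop:restricted-average} with \(X=T=T_N\) then gives a code of size \(|T_N|^2/P_{T_N}(t_N)\), whose exponent is at least \(\alpha(2H_2(\pi^\ast)-\Theta_B)-o(1)=R_B^{\mathrm{cc}}-o(1)\).

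\textbf{The inequality \eqref{eq:cc-dominates-average}.} Let \(P^\ast\) attain \(\Psi_B\). Its marginals \(\pi_U,\pi_V\) both lie in \(\Pi_B(\alpha)\) but need not be equal. Symmetrize by setting \(\bar P=\tfrac12(P^\ast+P^{\ast\top})\). This \(\bar P\) is feasible, with common marginal \(\bar\pi=\tfrac12(\pi_U+\pi_V)\), and \(H_2(\bar P)\geq H_2(P^\ast)\) by concavity. Hence \(\bar P\) is also a maximizer, and we may assume equal marginals \(\pi\). Then \(\Theta_B(\pi)\geq H_2(\bar P)=\Psi_B/\alpha\) only points the wrong way, so this direct comparison fails.

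Instead I would use the combinatorial decomposition, which avoids identifying the maximizer. The capped simplex is a union of polynomially many layers \(T_\pi\), so
\[
\frac{|\cS^{(B)}|^2}{P_{\cS^{(B)}}}\leq \mathrm{poly}(N)\max_\pi\frac{|T_\pi|^2}{P_{T_\pi}}.
\]
This holds because
\[
|\cS|^2=\Big(\sum_\pi|T_\pi|\Big)^2\leq \#\text{types}\cdot\sum_\pi|T_\pi|^2
\]
and \(P_\cS\geq\sum_\pi P_{T_\pi}\), and \(\sum x_\pi^2/\sum y_\pi\leq\max x_\pi^2/y_\pi\). By \eqref{eq:general-space-asymptotic} and \eqref{eq:general-pair-asymptotic}, the left side has exponent \(2C_B-\Psi_B\).

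On the right side, each layer ratio is bounded above by the layer exponent. This uses the matching lower bound \(P_{T_\pi}\geq 2^{q\Theta_B(\pi)-o(N)}\). That bound comes from approximating the maximizing coupling by joint \(q\)-types, which needs full support for \(\delta>0\) as in Proposition~\ref{prop:general-exponents}. The remaining uniformity issue is layers whose types degenerate; I would handle them by the same compactness argument, using upper semicontinuity of \(\pi\mapsto 2H_2(\pi)-\Theta_B(\pi)\).

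Taking limits gives \(2C_B-\Psi_B\leq\max_\pi\alpha(2H_2(\pi)-\Theta_B)=R_B^{\mathrm{cc}}\).
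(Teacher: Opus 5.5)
Your existence part is the paper's argument: take one layer whose type tends to an optimizer of \eqref{eq:R-cc-B}, bound its close pairs by joint-type counting with upper semicontinuity, and apply Proposition~\ref{prop:restricted-average} with \(X=T\) equal to the layer.

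For \eqref{eq:cc-dominates-average} you take a different and much longer route. The paper's proof is a one-line evaluation of \eqref{eq:R-cc-B} at the Gibbs type \(\pi^\star_a=\rho_B^a/F_B(\rho_B)\). Its mean is \(1/\alpha\) and \(H_2(\pi^\star)=\log_2F_B(\rho_B)-\alpha^{-1}\log_2\rho_B\), so \(\alpha H_2(\pi^\star)=C_B(\alpha)\). Every coupling with both marginals \(\pi^\star\) is admissible in \eqref{eq:Psi-B}, so \(\Theta_B(\pi^\star;\alpha,\delta)\leq\Psi_B(\alpha,\delta)/\alpha\). Both inequalities then point the right way. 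Your direct comparison went the wrong way only because you used the marginal of the \(\Psi_B\)-maximizer instead of the entropy-maximizing type.

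Your layer decomposition is nevertheless sound, and the Cauchy--Schwarz/mediant step is correct. Let \(\Theta_N(\pi)\) be the maximum entropy of couplings of \((\pi,\pi)\) with distance moment at most \(2t_N/q_N\). The uniform lower bound you need is \(P_{T_\pi}(t_N)\geq2^{q_N\Theta_N(\pi)-O(\varepsilon q_N+\log q_N)}\). To get it, mix a maximizing coupling with \(\varepsilon\operatorname{diag}(\pi)\) to create distance slack. Then round \(q_N\) times the result to an integer matrix with the exact margins \(q_N\pi\).

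The limit step needs lower semicontinuity of this maximum jointly in the type, the mean \(N/q_N\), and the radius \(2t_N/q_N\), not only in \(\pi\). For \(\delta>0\), adding \(\operatorname{diag}(\pi_N-\bar\pi)\) to a slack coupling of the limit type \(\bar\pi\) supplies it, and \(\delta=0\) is immediate.

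What your route buys is a finite-length comparison: the GV ratio of the whole capped simplex exceeds that of its best layer by at most a polynomial factor. It costs this extra uniform lower bound on layer pair counts and both asymptotics of Proposition~\ref{prop:general-exponents}. The paper's comparison is purely variational and needs none of that.
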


\begin{proof}
Apply the fixed-composition generalized GV construction
\cite[Sec.~III, Eqs.~(13)--(15)]{SomekhBaruchScarlettGuillen2019} to
alphabet \(\{0,\ldots,B\}\), cost \(|a-b|/2\), and mean
\(\sum_a a\pi_a=1/\alpha\), and then maximize over \(\pi\).  Here
\(\pi_N\to\pi\) are \(q_N\)-types with \(\sum_a a\,q_N\pi_N(a)=N\); by
joint-type counting, the type class has \(2^{q_NH_2(\pi)+o(N)}\) points and
at most \(2^{q_N(\Theta_B(\pi;\alpha,\delta)+o(1))}\) ordered close pairs,
using upper semicontinuity of \(\Theta_B\) in \((\pi,\delta)\).  This gives
\(R_B^{\mathrm{cc}}\).  For \eqref{eq:cc-dominates-average}, evaluate at
\(\pi_a^\star\eqdef\rho_B^a/F_B(\rho_B)\); then
\(\alpha H_2(\pi^\star)=C_B(\alpha)\), and its admissible couplings form a
subset of those defining \(\Psi_B/\alpha\).
\end{proof}

The preceding propositions and the cap-aware lifting now give the
general fixed-cap theorem.

\begin{theorem}[Exact fixed-cap Fock codes]
\label{thm:general-fixed-cap}
Fix \(B\geq1\) and \(\alpha>0\) with \(\alpha B>1\), and define
\begin{equation}
\label{eq:kappa-B}
  k_B(\alpha,\delta)
  \eqdef
  R_B^{\mathrm{cc}}(\alpha,\delta)
  -
  \Lambda_B(\alpha,\delta).
\end{equation}
At \((B,\alpha)=(2,1)\), \(k_B(\alpha,\delta)\) specializes, by
Theorem~\ref{thm:cap-two-cc},
to the one-variable function \(k_2(\delta)\) in
\eqref{eq:kappa-two}.
For every fixed
\(0\leq\delta<1\), there are exact
\(q_N=\lfloor\alpha N\rfloor\)-mode Fock codes of total excitation \(N\),
distance \(\lfloor\delta N\rfloor+1\), and local occupation at most \(B\),
whose dimensions satisfy
\begin{equation}
\label{eq:general-kappa-bound}
  \liminf_{N\to\infty}
  \frac1N\log_2K_N
  \geq
  k_B(\alpha,\delta),
\end{equation}
whenever \(k_B(\alpha,\delta)>0\).
Moreover,
\[
  k_B(\alpha,0)=C_B(\alpha)>0.
\]
Consequently, every \((B,\alpha)\) with \(\alpha B>1\) admits a nonempty interval of
positive relative distances and positive quantum dimension exponents.
\end{theorem}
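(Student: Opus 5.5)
The plan has three steps. First, I will combine Proposition~\ref{prop:general-cc} with the cap-aware lifting of Theorem~\ref{thm:cap-aware-aab}, exactly as in the proof of Theorem~\ref{thm:exact-cap-two}. Second, I will evaluate \(k_B(\alpha,0)\). Third, I will use continuity to get the positive interval.

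For the main bound, fix \(0\leq\delta<1\) with \(k_B(\alpha,\delta)>0\). Proposition~\ref{prop:general-cc} gives classical codes \(C_N\subseteq\cS_{q_N,N}^{(B)}\) with minimum distance at least \(t_N+1\) and \(L_N\eqdef|C_N|\geq2^{N(R_B^{\mathrm{cc}}(\alpha,\delta)-o(1))}\). Theorem~\ref{thm:cap-aware-aab} then yields exact Fock codes supported on occupations at most \(B\). Their dimension satisfies
\[
  K_N\geq1+\left\lfloor\frac{L_N-1}{m_B(q_N,t_N)}\right\rfloor.
\]
By \eqref{eq:general-moment-asymptotic}, the ratio is \(2^{N(k_B(\alpha,\delta)-o(1))}\). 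This ratio tends to infinity because \(k_B>0\), so the floor and the added one do not affect the exponent, and \eqref{eq:general-kappa-bound} follows. One technicality needs care: \eqref{eq:general-moment-asymptotic} requires \(\delta/\alpha<B\). This holds because \(\delta<1<\alpha B\). Likewise \(t_N\leq N\), so Theorem~\ref{thm:cap-aware-aab} applies.

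For \(\delta=0\), I will show \(R_B^{\mathrm{cc}}(\alpha,0)=C_B(\alpha)\).
\begin{itemize}
\item The distance constraint \(\sum|a-b|P_{ab}\leq0\) forces \(P\) to be diagonal, so \(P_{aa}=\pi_a\) and \(\Theta_B(\pi;\alpha,0)=H_2(\pi)\).
\item Hence \(R_B^{\mathrm{cc}}(\alpha,0)=\max_{\pi\in\Pi_B(\alpha)}\alpha H_2(\pi)\).
\item This maximum-entropy problem under a mean constraint is solved by the Gibbs type \(\pi^\star_a=\rho_B^a/F_B(\rho_B)\). The mean condition \(\sum_a a\pi^\star_a=1/\alpha\) is the defining equation of \(\rho_B=w_{1/\alpha}\).
\item Direct evaluation gives \(\alpha H_2(\pi^\star)=\alpha\log_2F_B(\rho_B)-\log_2\rho_B=C_B(\alpha)\), which is the identity already used in Proposition~\ref{prop:general-cc}. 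The inequality \(\geq\) also follows from \eqref{eq:cc-dominates-average} with \(\Psi_B(\alpha,0)=C_B(\alpha)\).
\end{itemize}
Since \(\Lambda_B(\alpha,0)=0\), this gives \(k_B(\alpha,0)=C_B(\alpha)\). Positivity holds because \(0<1/\alpha<B\) and \(H_B^{\mathrm{coeff}}\) is strictly positive on \((0,B)\): it vanishes at \(0\) and is strictly increasing on \([0,B/2]\) by the envelope identity, with symmetry covering \([B/2,B]\).

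The main obstacle is the last claim, a nonempty interval of positive \(\delta\). Continuity of \(\Lambda_B\) in \(\delta\) is clear from the envelope identity, and \(\Lambda_B(\alpha,\delta)\to0\) as \(\delta\downarrow0\). Continuity of \(R_B^{\mathrm{cc}}\) is not automatic, but only a lower bound is needed. Since \(\Theta_B\) is nondecreasing in \(\delta\), I will instead use \eqref{eq:cc-dominates-average}, which gives \(R_B^{\mathrm{cc}}\geq2C_B-\Psi_B\). It then suffices to show \(\Psi_B(\alpha,\delta)\to C_B(\alpha)\) as \(\delta\downarrow0\). I plan the following:
\begin{itemize}
\item The constraint sets in \eqref{eq:general-type-constraints} are compact and decrease to the diagonal set as \(\delta\downarrow0\).
\item Entropy is continuous, so \(\max H_2\) converges to the diagonal maximum \(\max_{\pi\in\Pi_B(\alpha)}H_2(\pi)=C_B(\alpha)/\alpha\).
\end{itemize}
Hence \(k_B(\alpha,\delta)\geq 2C_B-\Psi_B-\Lambda_B\to C_B(\alpha)>0\), so \(k_B>0\) on some interval \((0,\delta_0)\). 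Each such \(\delta\) gives positive quantum dimension exponents by the first step.
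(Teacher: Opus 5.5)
Your proof is correct. The main bound and the evaluation at \(\delta=0\) match the paper's proof; you additionally make explicit the Gibbs maximizer and the positivity of \(C_B(\alpha)\), which the paper leaves implicit.

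The final step, the nonempty interval of positive \(\delta\), takes a different route. The paper shows that \(\Theta_B(\pi;\alpha,\delta)\) and \(R_B^{\mathrm{cc}}(\alpha,\delta)\) are themselves right-continuous at \(\delta=0\). It uses the fact that every feasible coupling has off-diagonal mass at most \(2\delta/\alpha\), together with compactness of \(\Pi_B(\alpha)\), uniform continuity of entropy and the maximum theorem. You instead bound \(k_B\) from below by the unstratified exponent \(k_{B,\mathrm{av}}=2C_B-\Psi_B-\Lambda_B\) via \eqref{eq:cc-dominates-average}. You then need only \(\limsup_{\delta\downarrow0}\Psi_B(\alpha,\delta)\leq\Psi_B(\alpha,0)=C_B(\alpha)\), which is the elementary fact that the maximum of a continuous function over nested compact sets converges to its maximum over their intersection.

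Your route avoids the outer optimization over \(\pi\) and any estimate uniform in \(\pi\). It also gives the slightly stronger conclusion that even the unstratified benchmark \(k_{B,\mathrm{av}}\) is positive near zero. The paper's route gives right-continuity of \(R_B^{\mathrm{cc}}\) itself.

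One small point: your remark that \(\Theta_B\) is nondecreasing in \(\delta\) plays no role. It only gives \(R_B^{\mathrm{cc}}(\alpha,\delta)\leq R_B^{\mathrm{cc}}(\alpha,0)\), which is the wrong direction for what you need.
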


\begin{proof}
Proposition~\ref{prop:general-cc} gives classical codes in
\(\cS_{q_N,N}^{(B)}\) with exponent at least
\(R_B^{\mathrm{cc}}(\alpha,\delta)\).
Apply Theorem~\ref{thm:cap-aware-aab} and subtract the moment exponent
\eqref{eq:general-moment-asymptotic}.  This proves
\eqref{eq:general-kappa-bound}.

At \(\delta=0\), a feasible joint type must have \(U=V\).  The maximum
of \(2H_2(\pi)-H_2(P)\) over a diagonal coupling is the maximum
entropy of \(\pi\), namely \(C_B(\alpha)/\alpha\).  Thus
\(R_B^{\mathrm{cc}}(\alpha,0)=C_B(\alpha)\), while
\(\Lambda_B(\alpha,0)=0\).  The saddle expression
\(\Lambda_B(\alpha,\delta)\) is right-continuous at zero.  Also,
every feasible coupling obeys
\(\sum_{a\neq b}P_{ab}\leq2\delta/\alpha\).
Compactness of \(\Pi_B(\alpha)\), uniform continuity of entropy, and the
maximum theorem therefore show that
\(\Theta_B(\pi;\alpha,\delta)\) and
\(R_B^{\mathrm{cc}}(\alpha,\delta)\) are right-continuous at zero.
Hence
\(k_B(\alpha,\delta)>0\) on a nonempty right neighborhood of zero.
\end{proof}

The qualitative last assertion of Theorem~\ref{thm:general-fixed-cap}
already follows from the padded dual-rail codes of Remark~\ref{rem:dual-rail}; the
content of the theorem is the quantitative exponent \(k_B\).
Theorem~\ref{thm:general-fixed-cap} uses the moment exponent \(\Lambda_B\),
which is the same for every support.  Remark~\ref{rem:support-sensitive}
permits a smaller active-pattern count or the affine-hull denominator
\(r_t(C)+1\) of the chosen support, equivalently the rank of the matrix in
that remark.  Jointly optimizing the exponential order of that rank with the
classical type is a separate problem and is not used here.

\begin{proposition}[Two-saddle form of the unstratified pair exponent]
\label{prop:two-saddle}
Define
\begin{equation}
\label{eq:Z-B-two-saddle}
  Z_B(r,u)
  \eqdef
  \sum_{a=0}^{B}\sum_{b=0}^{B}
  r^{a+b}u^{|a-b|}.
\end{equation}
Let
\[
  \pi_a^\star
  \eqdef
  \frac{\rho_B^a}{F_B(\rho_B)},
  \qquad
  0\leq a\leq B,
\]
and define
\begin{equation}
\label{eq:delta-typical}
  \delta_{\mathrm{typ}}
  \eqdef
  \frac{\alpha}{2}
  \sum_{a,b=0}^{B}
  |a-b|\pi_a^\star\pi_b^\star.
\end{equation}
For \(0<\delta<\delta_{\mathrm{typ}}\), there is a unique pair
\((r,u)\in(0,\infty)\times(0,1)\) satisfying
\begin{equation}
\label{eq:two-saddle-equations}
  \frac{r\,\partial_r Z_B(r,u)}{Z_B(r,u)}
  =
  \frac2\alpha,
  \qquad
  \frac{u\,\partial_u Z_B(r,u)}{Z_B(r,u)}
  =
  \frac{2\delta}{\alpha}.
\end{equation}
At this pair,
\begin{equation}
\label{eq:Psi-two-saddle}
  \Psi_B(\alpha,\delta)
  =
  \alpha\log_2 Z_B(r,u)
  -
  2\log_2r
  -
  2\delta\log_2u.
\end{equation}
For \(\delta\geq\delta_{\mathrm{typ}}\),
\(\Psi_B(\alpha,\delta)=2C_B(\alpha)\).  The endpoint
\(\delta=0\) is obtained by \(u\downarrow0\) and
\(r\to\sqrt{\rho_B}\).  For \(0<\delta<\delta_{\mathrm{typ}}\),
\begin{equation}
\label{eq:Psi-derivative}
  \frac{\partial}{\partial\delta}
  \Psi_B(\alpha,\delta)
  =
  -2\log_2u.
\end{equation}
\end{proposition}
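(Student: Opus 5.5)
The plan is to treat \(\Psi_B/\alpha=\max_{P\in\cP_B(\alpha,\delta)}H_2(P)\) as a finite-dimensional strictly concave maximum-entropy problem with three linear constraints: total mass one, the symmetric mean constraint, and the distance constraint. Only the sum \(\E U+\E V=2/\alpha\) is imposed by the multiplier \(r\). The reflection \((a,b)\mapsto(b,a)\) preserves the objective and all constraints, so by strict concavity the unique maximizer is symmetric. Hence imposing only \(\E_P(U+V)=2/\alpha\) yields the same maximizer, and it automatically has \(\E U=\E V=1/\alpha\). The feasible set is compact and convex, so a unique maximizer \(P^\ast\) exists.

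First treat \(\delta\geq\delta_{\mathrm{typ}}\). Drop the distance constraint. Under the single constraint \(\E(U+V)=2/\alpha\), the maximum-entropy law is \(P_{ab}\propto\rho^{a+b}\), namely the product \(\pi^\star\otimes\pi^\star\) with \(\rho=\rho_B\). The choice \(\rho=\rho_B\) is forced, since \(w\mapsto wF_B'/F_B\) is strictly increasing. Its entropy is \(2H_2(\pi^\star)=2C_B(\alpha)/\alpha\), using \(\alpha H_2(\pi^\star)=C_B(\alpha)\) as in the proof of Proposition~\ref{prop:general-cc}. Its mean distance is \(2\delta_{\mathrm{typ}}/\alpha\), so it is feasible exactly when \(\delta\geq\delta_{\mathrm{typ}}\). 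This gives \(\Psi_B=2C_B(\alpha)\).

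For \(0<\delta<\delta_{\mathrm{typ}}\), the distance constraint must be active. If it were slack at \(P^\ast\), then \(P^\ast\) would be a local, hence global, maximizer of the relaxed problem, which is the product law, a contradiction. The proposition's discussion already notes that the maximizer has full support for \(\delta>0\), so the KKT conditions hold with interior multipliers. Gibbs form then gives \(P^\ast_{ab}=r^{a+b}u^{|a-b|}/Z_B(r,u)\) with \(r>0\) and \(u\in(0,1]\). The case \(u=1\) would return the product law, so \(u\in(0,1)\). Taking the logarithmic derivatives of \(Z_B\), the two constraints become exactly \eqref{eq:two-saddle-equations}. Substituting \(\log_2P^\ast_{ab}\) into \(H_2\) gives \(H_2(P^\ast)=\log_2Z_B-(2/\alpha)\log_2r-(2\delta/\alpha)\log_2u\). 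Multiplying by \(\alpha\) gives \eqref{eq:Psi-two-saddle}.

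Uniqueness of \((r,u)\) will come from convexity. The function \((\log r,\log u)\mapsto\log Z_B\) is strictly convex, because its Hessian is the covariance matrix of \((U+V,|U-V|)\) under \(P^{(r,u)}\). This matrix is nondegenerate, since the two statistics are affinely independent on \(\{0,\ldots,B\}^2\) when \(B\geq1\). Its gradient map is therefore injective, which gives uniqueness; existence is supplied by \(P^\ast\).

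The derivative formula \eqref{eq:Psi-derivative} follows from the envelope theorem. Equivalently, differentiate \eqref{eq:Psi-two-saddle} in \(\delta\) using the saddle equations: the terms coming from \(dr/d\delta\) and \(du/d\delta\) cancel, and what remains is \(-2\log_2u\). Differentiability of \((r,u)\) in \(\delta\) comes from the implicit function theorem applied to the nonsingular covariance Jacobian.

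For the endpoint, as \(\delta\downarrow0\) the mass off the diagonal tends to zero, which forces \(u\downarrow0\). The first equation then tends to \(r^2F_B'(r^2)/F_B(r^2)=1/\alpha\), so \(r^2\to\rho_B\).

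The step I expect to be the main obstacle is justifying that the distance constraint is active with an interior multiplier \(u<1\) and that the maximizer has full support. My approach there is the slack-constraint contradiction combined with the standard fact that entropy has infinite derivative at the boundary, so no entry of the maximizer can vanish.
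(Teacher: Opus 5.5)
Your outline follows the paper's proof. Full support comes from the infinite inward derivative of entropy. The distance constraint is active because \(\pi^\star\otimes\pi^\star\) violates it for \(\delta<\delta_{\mathrm{typ}}\). Swap symmetry reduces to one mean multiplier \(r\): you relax to the sum constraint and recover equal means from the symmetric unique maximizer, which is equivalent to the paper's merging of the two mean multipliers. The remaining steps are the Gibbs form, uniqueness from strict convexity of \(\log Z_B(e^x,e^y)\), the product law in the inactive regime, and the envelope identity for \eqref{eq:Psi-derivative}. Your implicit-function differentiation is a correct expansion of the paper's one-line envelope remark.

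The step you flag as the main obstacle is not closed by what you propose. The infinite inward derivative rules out a maximizer with a zero entry only if some feasible law of full support exists to move toward. This is exactly where \(\delta>0\) enters, since at \(\delta=0\) every feasible law is diagonal. The slack-constraint contradiction does not provide such a law, and the full-support sentence after Proposition~\ref{prop:general-exponents} that you cite is stated there without proof.

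The paper supplies the missing point explicitly: \(P_\lambda=(1-\lambda)\operatorname{diag}(\pi^\star)+\lambda\,\pi^\star\otimes\pi^\star\) with \(\lambda=\delta/\delta_{\mathrm{typ}}\in(0,1)\). It has both marginals equal to \(\pi^\star\) and distance moment exactly \(2\delta/\alpha\). All its entries are positive because \(\rho_B>0\). With it the boundary is excluded, and since all constraints are linear, KKT multipliers exist at the interior maximizer with no further qualification.

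For the endpoint, vanishing off-diagonal mass forces \(u\downarrow0\) only once \(r\) is confined to a compact subset of \((0,\infty)\). This follows from the first equation in \eqref{eq:two-saddle-equations}: \(r\to0\) or \(r\to\infty\) would drive \(\E(U+V)\) to \(0\) or \(2B\), while \(0<2/\alpha<2B\).
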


\begin{proof}
Let
\[
  P_\lambda
  \eqdef
  (1-\lambda)\operatorname{diag}(\pi^\star)
  +
  \lambda\,\pi^\star\otimes\pi^\star,
  \qquad
  \lambda\eqdef\frac{\delta}{\delta_{\mathrm{typ}}}.
\]
For \(0<\delta<\delta_{\mathrm{typ}}\), this is a full-support
feasible law with both means \(1/\alpha\) and distance moment exactly
\(2\delta/\alpha\).  It is therefore a relative-interior feasible
point, and the infinite inward derivative of entropy rules out a
boundary maximizer.  The independent product violates the distance
constraint in this range, so that constraint is active.  The two mean
multipliers agree by
swap symmetry, so the Lagrange equations have the exponential-family
form
\[
  P_{ab}
  =
  \frac{r^{a+b}u^{|a-b|}}{Z_B(r,u)}.
\]
The constraints give \eqref{eq:two-saddle-equations}, and substituting
this distribution into its entropy gives
\eqref{eq:Psi-two-saddle}.  Strict convexity of
\(\log Z_B(e^x,e^y)\), whose sufficient statistics \(a+b\) and \(|a-b|\) are
affinely independent, gives uniqueness.  When the distance constraint becomes inactive, the
maximum-entropy law is the independent product
\(\pi^\star\otimes\pi^\star\), giving \(2C_B(\alpha)\).
The limiting values at zero follow by restricting to the diagonal.
Finally, \eqref{eq:Psi-derivative} is the envelope identity for the
distance multiplier.
\end{proof}

For comparison with the optimized theorem, define the unstratified
cap-aware exponent
\begin{equation}
\label{eq:kappa-B-average}
  k_{B,\mathrm{av}}(\alpha,\delta)
  \eqdef
  2C_B(\alpha)-\Psi_B(\alpha,\delta)
  -
  \Lambda_B(\alpha,\delta).
\end{equation}
Proposition~\ref{prop:general-cc} implies
\(k_B(\alpha,\delta)\geq
k_{B,\mathrm{av}}(\alpha,\delta)\).
The published AAB denominator has exponent
\begin{equation}
\label{eq:Lambda-infty}
  \Lambda_\infty(\alpha,\delta)
  \eqdef
  (\alpha+\delta)
  h_2\left(\frac{\alpha}{\alpha+\delta}\right),
\end{equation}
and therefore gives
\begin{equation}
\label{eq:kappa-B-AAB}
  k_{B,\mathrm{av}}^{\mathrm{AAB}}(\alpha,\delta)
  \eqdef
  2C_B(\alpha)-\Psi_B(\alpha,\delta)
  -\Lambda_\infty(\alpha,\delta).
\end{equation}
Since \(m_B(q,t)\leq|\cS_{q,t}|\), one has
\(\Lambda_B(\alpha,\delta)\leq
\Lambda_\infty(\alpha,\delta)\).

For the last two columns below, define
\[
  \delta_{B,\mathrm{av}}^{\mathrm{AAB},\ast}
  \eqdef
  \sup\{0\leq\delta<1:
  k_{B,\mathrm{av}}^{\mathrm{AAB}}(1,\delta)>0\},
  \qquad
  \delta_{B,\mathrm{av}}^{\mathrm{cap},\ast}
  \eqdef
  \sup\{0\leq\delta<1:
  k_{B,\mathrm{av}}(1,\delta)>0\}.
\]
Table~\ref{tab:fixed-cap-rates} illustrates the convergence to the
uncapped simplex at \(\alpha=1\).  The two \(k\)-columns are dimension
exponents per excitation at \(\delta=0.1\); the last two columns are their
positive-rate endpoints.

\begin{table}[ht]
\centering
\caption{Unstratified fixed-cap benchmarks for \(\alpha=1\).  Column 2 is
the space exponent and columns 3--4 are dimension exponents, in bits per
excitation; columns 5--6 are positive-rate endpoints in relative deletion
distance.  Values are rounded to three
decimal places; equal rounded entries need not have equal exact values.}
\label{tab:fixed-cap-rates}
\begin{tabular}{@{}rrrrrr@{}}
\toprule
\(B\) &
\(C_B(1)\) &
\(k_{B,\mathrm{av}}^{\mathrm{AAB}}(1,0.1)\) &
\(k_{B,\mathrm{av}}(1,0.1)\) &
\(\delta_{B,\mathrm{av}}^{\mathrm{AAB},\ast}\) &
\(\delta_{B,\mathrm{av}}^{\mathrm{cap},\ast}\)\\
\midrule
2  & 1.585 & 0.274 & 0.275 & 0.134 & 0.134\\
3  & 1.852 & 0.517 & 0.517 & 0.166 & 0.166\\
4  & 1.939 & 0.602 & 0.602 & 0.178 & 0.178\\
5  & 1.973 & 0.636 & 0.636 & 0.183 & 0.183\\
6  & 1.987 & 0.652 & 0.652 & 0.185 & 0.185\\
8  & 1.997 & 0.663 & 0.663 & 0.187 & 0.187\\
10 & 1.999 & 0.666 & 0.666 & 0.188 & 0.188\\
\bottomrule
\end{tabular}
\end{table}

With the same average-ball input and no cap, the positive-rate endpoint at
\(\alpha=1\) is approximately \(0.188\) (full-simplex transfer,
Table~\ref{tab:log-cap-thresholds}).  Thus a constant cap of ten already
retains almost all of that positive-rate interval, while a cap of two gives
the smallest nontrivial capped simplex at \(\alpha=1\).  These rows display the
unstratified benchmark, which is available from the two scalar equations
\eqref{eq:two-saddle-equations}; by \eqref{eq:cc-dominates-average} its
exponents are lower bounds for \(k_B(1,\cdot)\) of
Theorem~\ref{thm:general-fixed-cap} and its endpoints for the endpoints of
\(k_B(1,\cdot)\), whose optimization \eqref{eq:R-cc-B}
we do not evaluate for \(B\geq3\).  The optimized cap-two values are the
strictly stronger numbers in Remark~\ref{rem:cap-two-numerics}.  In these
rows the cap-aware denominator changes the rounded values obtained with
the published AAB denominator only for \(B=2\), and then only in the third
decimal of the exponent.

\subsection{The cap-one boundary}
\label{sec:cap-one}

The case \(B=1\) is excluded at \(\alpha=1\), where
the capped simplex consists only of the all-one vector, but it becomes
nondegenerate for every \(\alpha>1\).  Classically this is the binary
constant-weight, or Johnson, space; its Gilbert bound is standard
\cite{GoyalDaoKovacevicKiah2025,GrahamSloane1980}.
Corollary~\ref{cor:cap-one} is the closed-form \(B=1\) benchmark of the
cap-aware lifting; Remark~\ref{rem:dual-rail} compares it with the
dual-rail concatenation of~\cite{DuanGrasslJiZeng2010}.

\begin{corollary}[Cap-one exact Fock codes]
\label{cor:cap-one}
Fix \(\alpha>1\) and
\(0\leq\delta<(\alpha-1)/\alpha\).  There are exact
\(\lfloor\alpha N\rfloor\)-mode Fock codes of total excitation \(N\),
local occupation at most one, and distance
\(\lfloor\delta N\rfloor+1\), with dimension exponent at least
\begin{align}
  k_1(\alpha,\delta)
  &=
  \alpha h_2\left(\frac1\alpha\right)
  -
  h_2(\delta)
  -
  (\alpha-1)
  h_2\left(\frac{\delta}{\alpha-1}\right)
  \notag\\
  &\quad
  -
  \alpha
  h_2\left(\frac{\delta}{\alpha}\right),
  \label{eq:kappa-cap-one}
\end{align}
whenever this expression is positive.  At \(\alpha=2\), its unique
positive zero is
\begin{equation}
\label{eq:cap-one-root}
  \delta_1^\ast(2)
  \approx
  0.151.
\end{equation}
More generally, \(\delta_1^\ast(\alpha)\) denotes the unique zero of
\(k_1(\alpha,\delta)\) in
\((0,(\alpha-1)/\alpha)\).
As \(\alpha\to\infty\), for every fixed \(0\leq\delta<1\)
(which lies in the displayed domain once \(\alpha>1/(1-\delta)\)),
\begin{equation}
\label{eq:cap-one-sparse-limit}
  \frac{k_1(\alpha,\delta)}
       {\alpha h_2(1/\alpha)}
  \longrightarrow
  1-2\delta,
  \qquad
  \delta_1^\ast(\alpha)\longrightarrow\frac12.
\end{equation}
\end{corollary}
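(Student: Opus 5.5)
At \(B=1\) the capped simplex \(\cS_{q,N}^{(1)}\) is the binary Johnson space of weight-\(N\) words of length \(q=\lfloor\alpha N\rfloor\). The plan is to specialize Theorem~\ref{thm:general-fixed-cap}, or more directly to combine Proposition~\ref{prop:restricted-average} with Theorem~\ref{thm:cap-aware-aab}, and then evaluate the three exponents in closed form. We take \(T=X=\cS_{q,N}^{(1)}\), which is a single layer, so no optimization over types is needed: \(\Pi_1(\alpha)\) is the single point \((1-1/\alpha,1/\alpha)\). This gives \(R_1^{\mathrm{cc}}=2C_1(\alpha)-\Psi_1(\alpha,\delta)\).

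\textbf{Space and pair exponents.} The space exponent is \(C_1(\alpha)=\alpha h_2(1/\alpha)\), from \(\binom{q}{N}\) and Stirling. The Johnson space is distance-transitive, so every radius-\(t\) ball has the same size,
\[
  \sum_{j\le t}\binom{N}{j}\binom{q-N}{j},
\]
because a word at deletion distance \(j\) moves \(j\) ones to \(j\) zeros. Hence
\[
  P_X(t)=\binom qN\sum_{j\leq t}\binom Nj\binom{q-N}j.
\]
For \(t=\lfloor\delta N\rfloor\) with \(\delta\) below the typical distance, the sum is dominated by its last term. That term has exponent
\[
  h_2(\delta)+(\alpha-1)h_2\!\left(\frac{\delta}{\alpha-1}\right).
\]
One must check that the terms increase up to \(j=t\): the ratio of consecutive terms is \((N-j)(q-N-j)/(j+1)^2\), which exceeds \(1\) while \(\delta<(\alpha-1)/\alpha\). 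This is exactly the stated domain. The classical exponent is therefore
\[
  \alpha h_2(1/\alpha)-h_2(\delta)-(\alpha-1)h_2\!\left(\frac{\delta}{\alpha-1}\right).
\]

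\textbf{Moment exponent.} For the cap-aware denominator, \(m_1(q,t)=\binom qt\). This gives \(\Lambda_1(\alpha,\delta)=\alpha h_2(\delta/\alpha)\), which is valid for \(\delta<\alpha\) and hence on our domain. Subtracting it via Theorem~\ref{thm:cap-aware-aab}, with \(K_N\geq1+\lfloor(L_N-1)/m_1\rfloor\), yields \eqref{eq:kappa-cap-one} whenever it is positive.

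\textbf{Zero of \(k_1\) and the sparse limit.} For uniqueness of the zero, differentiate in \(\delta\):
\[
  \partial_\delta k_1=\log_2\frac{\delta}{1-\delta}+\log_2\frac{\delta}{\alpha-1-\delta}+\log_2\frac{\delta}{\alpha-\delta}.
\]
Each term is negative on the domain, since \(\delta<1/2\) there when \(\alpha\le2\), and otherwise we check each ratio directly. If that fails, we instead argue via strict concavity of each entropy term. Then \(k_1(\alpha,0)=C_1>0\) and \(k_1\) is strictly decreasing. Its value at \(\delta=(\alpha-1)/\alpha\) is negative, since there the classical part vanishes at the typical distance and \(\Lambda_1>0\). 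So there is exactly one zero. The value \(\delta_1^\ast(2)\approx0.151\) is a numerical root of
\[
  2-h_2(\delta)-h_2(\delta)-2h_2(\delta/2)=0.
\]
For the sparse limit, write \(\alpha h_2(1/\alpha)=\log_2\alpha+\log_2 e+o(1)\). The expansions
\[
  (\alpha-1)h_2\!\left(\frac{\delta}{\alpha-1}\right)=\delta\log_2\alpha+O(1),
  \qquad
  \alpha h_2\!\left(\frac{\delta}{\alpha}\right)=\delta\log_2\alpha+O(1),
\]
together with \(h_2(\delta)=O(1)\), give the ratio
\[
  \frac{(1-2\delta)\log_2\alpha+O(1)}{\log_2\alpha+O(1)}\to1-2\delta.
\]
Since this convergence is uniform in \(\delta\) on compact subsets of \([0,1)\) and \(k_1\) is monotone, it forces \(\delta_1^\ast(\alpha)\to1/2\).

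The main obstacle is the monotonicity argument over the full domain for general \(\alpha\). The remaining steps are routine Stirling computations.
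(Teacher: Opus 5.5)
Your route is the paper's: the Johnson-space constant-weight Gilbert bound (your equal-ball count and consecutive-term ratio are a correct derivation of it), \(m_1(q,t)=\binom qt\), subtraction via Theorem~\ref{thm:cap-aware-aab}, a sign argument for \(\partial_\delta k_1\), and the \(\log_2\alpha\) expansion followed by a squeeze.

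\textbf{The step that fails.} Your claim that each of the three logarithms in \(\partial_\delta k_1\) is negative on the domain is true only at \(\alpha=2\).
\begin{itemize}
\item For \(\alpha>2\) the domain contains \(\delta\geq1/2\), where \(\log_2\frac{\delta}{1-\delta}\geq0\); take \(\alpha=4\), \(\delta=0.6\).
\item For \(1<\alpha<2\) it contains \(\delta>(\alpha-1)/2\), where \(\log_2\frac{\delta}{\alpha-1-\delta}>0\); take \(\alpha=1.5\), \(\delta=0.3\).
\end{itemize}
The failure for \(\alpha>2\) near \(\delta=1/2\) is exactly the regime where you need uniqueness of the zero to squeeze \(\delta_1^\ast(\alpha)\to1/2\).

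Your fallback via strict concavity of each entropy term does not restore monotonicity. It makes \(k_1\) strictly convex in \(\delta\). Together with \(k_1(\alpha,0)>0>k_1(\alpha,(\alpha-1)/\alpha)\), convexity does give a unique zero, and uniqueness plus the intermediate value theorem is all the squeeze needs. It does not show that \(k_1\) is decreasing, which you assert twice.

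\textbf{The fix.} Group the first two logarithms, as the paper does. One has
\[
(1-\delta)(\alpha-1-\delta)=\alpha-1-\alpha\delta+\delta^2,
\]
which exceeds \(\delta^2\) precisely when \(\delta<(\alpha-1)/\alpha\). This is the same inequality your consecutive-term ratio produced. Also \(\alpha-\delta>\delta\), because \((\alpha-1)/\alpha<\alpha/2\) for every \(\alpha\). Hence
\[
\partial_\delta k_1=-\log_2\frac{(1-\delta)(\alpha-1-\delta)(\alpha-\delta)}{\delta^3}<0
\]
on the whole domain, which gives both monotonicity and uniqueness of the zero.

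A minor point: the paper's value \(\delta_1^\ast(2)\approx0.151\) rests on the certified interval enclosure \(0.1513<\delta_1^\ast(2)<0.1514\) combined with this analytic uniqueness. A floating-point root supports only the approximate value.
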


\begin{proof}
For \(B=1\), the capped simplex is the binary Johnson space of length
\(q_N=\lfloor\alpha N\rfloor\) and weight \(N\).  Its standard
constant-weight Gilbert bound~\cite{GrahamSloane1980} has exponent
\[
  \alpha h_2(1/\alpha)
  -
  h_2(\delta)
  -
  (\alpha-1)h_2(\delta/(\alpha-1)).
\]
The standard binomial coefficient estimate gives moment exponent
\(\Lambda_1(\alpha,\delta)=\alpha h_2(\delta/\alpha)\).
Subtracting it via Theorem~\ref{thm:cap-aware-aab} gives
\eqref{eq:kappa-cap-one}.  Its value is positive at zero, negative at
the upper endpoint, and direct differentiation gives
\[
  \frac{\partial k_1}{\partial\delta}
  =
  -\log_2
  \frac{(1-\delta)(\alpha-1-\delta)(\alpha-\delta)}
       {\delta^3}
  <0
  \qquad
  \left(0<\delta<\frac{\alpha-1}{\alpha}\right).
\]
Thus it has a unique zero; the computer-assisted enclosure
\(0.1513<\delta_1^\ast(2)<0.1514\), certified by exact rational interval
arithmetic in Appendix~\ref{app:verification}, gives
\eqref{eq:cap-one-root}.

Finally,
\[
  \alpha h_2(1/\alpha)
  =
  \log_2\alpha+\log_2e+o(1),
\]
while the two terms with arguments of order \(1/\alpha\) contribute
\(2\delta\log_2\alpha+O(1)\).  Dividing
\eqref{eq:kappa-cap-one} proves the first limit in
\eqref{eq:cap-one-sparse-limit}.  Evaluating on either side of
\(1/2\), and using monotonicity in \(\delta\), then squeezes the unique
zero to \(1/2\).
\end{proof}

\begin{remark}[Dual-rail codes]
\label{rem:dual-rail}
The dual-rail concatenation of
\cite[Sec.~III, Lem.~1 and Thm.~1]{DuanGrasslJiZeng2010} (see also
\cite{Ouyang2021}) gives positive rate and linear loss distance at cap one
and \(\alpha=2\); the padding below extends it to every nondegenerate fixed
cap.  At \(B=1\) and \(\alpha=2\), each outer qubit is encoded in a rail
pair spanned by \(\lvert1,0\rangle\) and \(\lvert0,1\rangle\).  A loss
empties one pair, which is then identified by its photon number, so at most
\(t\) losses produce at most \(t\) erasures at known positions of the outer
code.  Outer qubit stabilizer codes on the quantum Gilbert--Varshamov bound
\cite[Thm.~2]{CalderbankRainsShorSloane1997} therefore give the dimension
exponent \(1-h_2(\delta)-\delta\log_23\).  The two exponents cross near
\(\delta=0.1211\): the dual-rail exponent exceeds \(k_1(2,\delta)\) for
\(\delta\geq0.1212\) and stays positive up to \(\delta\approx0.189\), beyond
\(\delta_1^\ast(2)\), while the cap-aware lift gives the larger exponent for
\(\delta\leq0.1211\).

For every \(\alpha B>1\), replace the rails by \(\lvert B,0\rangle\) and
\(\lvert0,B\rangle\) and pad.  Choose a constant
\(0<c<\min\{1/B,\alpha-1/B\}\) and use \(a_N\eqdef\lfloor cN\rfloor\) encoding rail
pairs.  Add \(\lfloor N/B\rfloor-a_N\) fully occupied modes, one residual
mode if \(B\nmid N\), and empty modes to reach \(q_N=\lfloor\alpha N\rfloor\).
This uses \(N\) excitations and at most \(B\) per mode.  The photon number of
an encoding pair shows how many losses it suffered; the pair is erased only
when it loses all \(B\) photons, and for \(B\geq2\) a pair that loses fewer
keeps its occupied rail and is only dephased at a known position.  Treating
every damaged pair as an erasure, at most \(t\) losses damage at most \(t\)
outer-code positions, and an asymptotically good outer erasure code on the
\(a_N=\Theta(N)\) positions retains positive rate and corrects a positive
linear number of losses.  The fixed spectator modes carry no logical
information.  With quantum Gilbert--Varshamov outer codes, this erasure
treatment gives the exponent \(c\bigl(1-h_2(\delta/c)-(\delta/c)\log_23\bigr)\),
positive for \(\delta<0.189\,c\), where \(c=a_N/N\uparrow\min\{1/B,\alpha-1/B\}\).
At \(B=1\), where this treatment is exact, the cap-aware lift has a longer
positive-rate range than these padded dual rails when
\(\alpha\leq1.607\) or \(\alpha\geq2.646\).

Single-photon qudit inner codes on more modes can have a longer
range, for example about \(0.255\) at \(\alpha=3\), against
\(\delta_1^\ast(3)\approx0.204\).  At \(B=2\) and \(\alpha=1\) the erasure
treatment gives the padded rails positive rate only for \(\delta<0.0947\), and
the qutrit inner code spanned by \(\lvert2,0\rangle\), \(\lvert1,1\rangle\),
\(\lvert0,2\rangle\)
\cite[arXiv version, Sec.~VIII]{GrasslKongWeiYinZeng2018},
with damaged pairs again treated as erasures and qutrit
Gilbert--Varshamov outer codes \cite{FengMa2004}, the exponent
\(\frac12\bigl(\log_2 3-h_2(2\delta)-6\delta\bigr)\): about \(0.132\) at
\(\delta=0.1\), positive only for \(\delta<0.1276\).  Both treatments are
conservative.  On a rail pair that lost one photon, the Knill--Laflamme
products of the loss operators are combinations of \(I\) and \(Z\), so the
outer stabilizer code need only detect the Pauli operators whose \(Z\)
factors plus twice their \(X\) and \(Y\) factors number at most \(t\).  On the
qutrit code, these products for one loss at a pair are combinations of the
identity and of matrices with zero diagonal in the basis
\((\lvert2,0\rangle\pm\lvert0,2\rangle)/\sqrt2\), \(\lvert1,1\rangle\); in the
Weyl basis built on it, the outer code need only detect the operators whose
nondiagonal factors plus twice their diagonal factors \(Z,Z^2\) number at
most \(t\).  The Gilbert--Varshamov counting for stabilizer codes
\cite[proof of Thm.~2]{CalderbankRainsShorSloane1997}, and its \(q\)-ary
form \cite{FengMa2004}, uses only the number of
operators to be detected, so counting these operators gives
the exponents
\[
  \tfrac12\Bigl(1-\max_{x+2y\leq2\delta}\bigl(H_2(x,y,1-x-y)+y\bigr)\Bigr),
  \qquad
  \tfrac12\Bigl(\log_23-\max_{x+2y\leq2\delta}\bigl(H_2(x,y,1-x-y)+x\log_26+y\bigr)\Bigr),
\]
about \(0.098\) and \(0.171\) at \(\delta=0.1\), against
\(k_2(0.1)\approx0.281\) in Theorem~\ref{thm:exact-cap-two}, and below \(k_2\)
for every \(\delta\leq0.130\) (they meet \(k_2\) near \(0.1341\) and
\(0.1303\)), but positive up
to \(\delta\approx0.139\) and \(\delta\approx0.140\), slightly beyond
\(\delta_2^\ast\approx0.1355\).
\end{remark}

\subsection{Exact finite-length bounds}
\label{sec:finite}

The asymptotic exponents do not show how ball sizes vary with the center at
finite length.  In a capped simplex, the exact average ball size is
accessible through a small generating function; with it, the guaranteed
quantum dimensions in Table~\ref{tab:finite-comparison}
(Section~\ref{sec:converses}), whose caps remove less than a fifth of the
simplex, are up to 20\% larger than with the pair count
of the uncapped simplex and several times larger than with maximum-ball
estimates.  The uncapped deletion-ball and ordered-pair
geometry was developed in
\cite[Lem.~4, Lems.~5 and 8, and Thm.~9]{KreindelEssayagZabokritskiy2026b};
the
quantity denoted there by \(A_{n,q}(m)\), here \(A_{N,q}(m)\), is the ordered-pair distance
distribution.  For a translation-invariant linear code the distance
distribution is recovered from its weight distribution
\cite{MacWilliamsSloane1977}, but the discrete simplex has no such
identification.  Dividing the partial sums of \(A_{N,q}(m)\) by
\(|\cS_{q,N}|\) gives the average ball.
The enumerators and counts below follow
\cite{KreindelEssayagZabokritskiy2026b} in listing \(N\) before \(q\), with the
cap parameter added; the statement below is the routine coordinatewise cap
truncation of the earlier enumerator.

Since distance controls loss confusability~\eqref{rem:loss-confusability},
for two independent number states chosen uniformly from a nonempty
subset \(T\subseteq\cS_{q,N}\), the probability of a common descendant
after at most \(t\) losses is \(P_T(t)/|T|^2\).  For \(T=\cS_{q,N}^{(B)}\),
this is the partial sum of the pair enumerator \(A_{N,q,B}\) of
Proposition~\ref{prop:finite-enumerator} divided by \(|T|^2\).

\begin{proposition}[Capped specialization of the pair enumerator]
\label{prop:finite-enumerator}
For integers \(q,N,B\geq1\) and \(m,t\geq0\), the capped-simplex cardinality is
\begin{equation}
\label{eq:capped-size-inclusion}
  |\cS_{q,N}^{(B)}|
  =
  \sum_{j=0}^{\lfloor N/(B+1)\rfloor}
  (-1)^j
  \binom{q}{j}
  \binom{N-j(B+1)+q-1}{q-1},
\end{equation}
where an inadmissible binomial coefficient is zero.  Define
\begin{equation}
\label{eq:Phi-B}
  \Phi_B(X,Y,Z)
  \eqdef
  \sum_{a=0}^{B}\sum_{b=0}^{B}
  X^aY^bZ^{(a-b)_+}.
\end{equation}
Then the number of ordered pairs at distance exactly \(m\) is
\begin{equation}
\label{eq:finite-shell}
  A_{N,q,B}(m)
  =
  [X^NY^NZ^m]\Phi_B(X,Y,Z)^q,
\end{equation}
and
\begin{equation}
\label{eq:finite-pairs}
  P_{\cS_{q,N}^{(B)}}(t)
  =
  \sum_{m=0}^{t}A_{N,q,B}(m).
\end{equation}
When \(N\leq Bq\), the capped simplex is nonempty and its exact average
radius-\(t\) ball size is
\begin{equation}
\label{eq:finite-average-ball}
  \overline V_{N,q,B}(t)
  =
  \frac{1}{|\cS_{q,N}^{(B)}|}
  \sum_{m=0}^{t}A_{N,q,B}(m).
\end{equation}
When \(B\geq N\), this reduces to the uncapped
\(A_{N,q}(m)\) and average-ball formula of
\cite[Thm.~9]{KreindelEssayagZabokritskiy2026b}.
\end{proposition}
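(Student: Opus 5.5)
The plan is to prove each identity by a product-form generating function over the \(q\) modes, with the cap entering only as truncation of each single-mode factor.

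For \eqref{eq:capped-size-inclusion}, write \(|\cS_{q,N}^{(B)}|=[z^N]F_B(z)^q\) with \(F_B(z)=(1-z^{B+1})/(1-z)\). Expanding \((1-z^{B+1})^q\) by the binomial theorem and \((1-z)^{-q}=\sum_k\binom{k+q-1}{q-1}z^k\) gives the alternating sum; terms with \(j(B+1)>N\) vanish. This is inclusion–exclusion over the set of modes violating the cap, and the convention for inadmissible binomials absorbs the range.

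For \eqref{eq:finite-shell}, I will encode an ordered pair \((\underline{n},\underline{m})\in(\{0,\ldots,B\}^q)^2\) by the monomial \(\prod_i X^{n_i}Y^{m_i}Z^{(n_i-m_i)_+}\). The pair factors coordinatewise, so summing over all pairs of capped vectors gives \(\Phi_B(X,Y,Z)^q\). Extracting \([X^NY^N]\) restricts to pairs with both coordinate sums equal to \(N\), that is, pairs in \((\cS_{q,N}^{(B)})^2\). By \eqref{eq:positive-part-distance}, the exponent of \(Z\) is then exactly \(d(\underline{n},\underline{m})\), so the coefficient of \(Z^m\) counts ordered pairs at distance \(m\).

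Then \eqref{eq:finite-pairs} follows by summing over \(m\le t\) using the definition \eqref{eq:ordered-pair-count}. For \eqref{eq:finite-average-ball}, I will use that \(N\le Bq\) gives nonemptiness, exhibited by a greedy vector filling modes with \(B\) and leaving a remainder. Then \(P_T(t)/|T|\) is the average ball size, as noted after \eqref{eq:ordered-pair-count}. For the reduction when \(B\ge N\), every coordinate of a point of \(\cS_{q,N}\) is already at most \(N\le B\). Hence \(\cS_{q,N}^{(B)}=\cS_{q,N}\), and the terms of \(\Phi_B\) with \(a>N\) or \(b>N\) cannot contribute to \([X^NY^N]\). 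The enumerator therefore coincides with the uncapped one of \cite[Thm.~9]{KreindelEssayagZabokritskiy2026b}.

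The only delicate step is the last identification. The earlier enumerator may be written with uncapped single-mode factors, which are geometric series, so I would note that truncating at degree \(N\) in \(X\) and \(Y\) does not change the extracted coefficient. There is no real obstacle beyond this bookkeeping.
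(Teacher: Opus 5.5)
Your proposal is correct and follows the paper's argument. Expanding \(F_B(z)^q=(1-z^{B+1})^q(1-z)^{-q}\) is the same inclusion--exclusion over coordinates exceeding \(B\) that the paper invokes for the cardinality, and the coordinatewise marking by \(X^aY^bZ^{(a-b)_+}\), followed by extraction of \([X^NY^N]\) and the \(B\geq N\) reduction, is exactly the paper's proof of the remaining identities.
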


Equation~\eqref{eq:capped-size-inclusion} follows by inclusion--exclusion
over the set of coordinates exceeding \(B\); this classical formula is
recorded, for \(1<B<N\), in \cite[Prop.~4.1]{SongDalfoFiolMoraZhang2026}.
For the remaining identities, mark an ordered coordinate pair
\((a,b)\in\{0,\ldots,B\}^2\) by
\(X^aY^bZ^{(a-b)_+}\), multiply over coordinates, and extract the two
fixed sums.  This is precisely the cap restriction of the generating
function in~\cite[Thm.~9]{KreindelEssayagZabokritskiy2026b}.

\begin{corollary}[Exact finite quantum lower bound]
\label{cor:finite-quantum}
For integers \(q,N,B\geq1\) and \(0\leq t\leq N\), assume \(N\leq Bq\).
Put
\begin{equation}
\label{eq:finite-M}
  L_B^{\mathrm{GV}}(N,q,t)
  \eqdef
  \left\lceil
    \frac{|\cS_{q,N}^{(B)}|^2}
         {P_{\cS_{q,N}^{(B)}}(t)}
  \right\rceil.
\end{equation}
There is a classical code in \(\cS_{q,N}^{(B)}\) of minimum distance at least
\(t+1\) and size at least \(L_B^{\mathrm{GV}}(N,q,t)\); taking a subset gives exactly
that size.  Consequently, there is an exact Fock code on \(q\) modes with
total excitation \(N\), local occupation at most \(B\), and distance
\(t+1\), that is, correcting every loss pattern of total order at most
\(t\), with dimension
\begin{equation}
\label{eq:finite-K}
  K_B(N,q,t)
  \geq
  1+
  \left\lfloor
    \frac{L_B^{\mathrm{GV}}(N,q,t)-1}
         {[z^t](1+z+\cdots+z^B)^q}
  \right\rfloor.
\end{equation}
\end{corollary}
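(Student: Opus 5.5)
The corollary follows by chaining Proposition~\ref{prop:restricted-average} with Theorem~\ref{thm:cap-aware-aab}. The only bookkeeping needed is a nonemptiness check and an integer rounding step.

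First, the classical support. Take \(X=T=\cS_{q,N}^{(B)}\) with the deletion metric \(d\). The hypothesis \(N\leq Bq\) makes \(T\) nonempty; for example, fill \(\lfloor N/B\rfloor\) coordinates with \(B\) and place the remainder in one further coordinate, which exists because \(N\leq Bq\). Also \(P_T(t)\geq|T|\geq1\), since the diagonal pairs are counted, so the quotient in \eqref{eq:finite-M} is well defined. Proposition~\ref{prop:restricted-average}, specifically the middle bound in \eqref{eq:restricted-average}, gives a code \(C\subseteq T\) with minimum distance greater than \(t\), that is, at least \(t+1\), and \(|C|\geq\lceil|T|^2/P_T(t)\rceil=L_B^{\mathrm{GV}}(N,q,t)\). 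Any subset of \(C\) keeps its minimum distance, so we may pass to a subset of exactly that size. For actual evaluation, \(P_T(t)\) is supplied by Proposition~\ref{prop:finite-enumerator}, but only its definition is needed here.

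Second, the lift. Set \(L\eqdef L_B^{\mathrm{GV}}(N,q,t)\) and \(m\eqdef m_B(q,t)=[z^t](1+z+\cdots+z^B)^q\), which is at least \(1\) because \(t\leq N\leq Bq\). Define \(K\eqdef1+\lfloor(L-1)/m\rfloor\). Then \((K-1)m\leq L-1\), so \(|C|\geq(K-1)m+1\), which is exactly \eqref{eq:cap-aware-threshold}. Theorem~\ref{thm:cap-aware-aab} then yields a \(K\)-dimensional Fock code supported on \(C\subseteq\cS_{q,N}^{(B)}\) that corrects all loss patterns of total order at most \(t\), which gives \eqref{eq:finite-K}.

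The main obstacle is the degenerate parameter ranges of Theorem~\ref{thm:cap-aware-aab}, which assumes \(q\geq2\) and \(K\geq1\); note that \(K\geq1\) holds automatically.

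\begin{itemize}
\item If \(K=1\), any single basis state \(\lvert\underline{h}\rangle\) with \(\underline{h}\in T\) trivially satisfies the Knill--Laflamme conditions, so the claim holds.
\item If \(q=1\), then \(T\) is a single point, so \(L=1\) and \(K=1\), and this reduces to the previous case.
\end{itemize}

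Apart from these cases, the argument is a direct substitution.
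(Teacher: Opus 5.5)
Your proposal is correct and follows the paper's proof. Both apply Proposition~\ref{prop:restricted-average} with \(T=\cS_{q,N}^{(B)}\), lift with Theorem~\ref{thm:cap-aware-aab} when \(q\geq2\), and handle \(q=1\) by observing that the capped simplex is the singleton \(\{(N)\}\), so both bounds equal one. Your separate \(K=1\) case is harmless but not needed when \(q\geq2\), since the theorem already allows \(K=1\).
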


\begin{proof}
Apply Proposition~\ref{prop:restricted-average} with
\(T\eqdef\cS_{q,N}^{(B)}\), using
Proposition~\ref{prop:finite-enumerator} for the exact ordered close-pair count.
For \(q\geq2\), apply Theorem~\ref{thm:cap-aware-aab}.  If \(q=1\),
the capped simplex is the singleton \(\{(N)\}\), both displayed
dimension bounds equal one, and its one-dimensional Fock span is a code.
\end{proof}

\begin{example}[Finite gain from inactive moments]
\label{ex:finite-moments}
For \((q,N,B,t)=(30,30,2,3)\), exact coefficient extraction gives
\[
  |\cS_{30,30}^{(2)}|
  =
  18\,252\,025\,766\,941,
  \qquad
  L_2^{\mathrm{GV}}(30,30,3)
  =
  11\,511\,777.
\]
The published AAB denominator is
\[
  \binom{30+3-1}{30-1}=4960,
\]
while the cap-aware denominator is
\[
  [z^3](1+z+z^2)^{30}=4930.
\]
Thus the guaranteed quantum dimension improves from \(2321\) to \(2336\).
\end{example}

Appendix~\ref{app:verification} records the dynamic program and independent
checks used for these numerical values.

Taking \(B\geq N\) removes the cap.  Corollary~\ref{cor:finite-quantum}
then turns the uncapped pair enumerator of
\cite{KreindelEssayagZabokritskiy2026b} into an explicit quantum
dimension; we record this immediate consequence and compare it with the
Sidon-set construction below.

\begin{corollary}[Immediate quantum consequence of the prior pair enumerator]
\label{cor:uncapped-A-quantum}
For integers \(q\geq2\), \(N\geq1\), and \(0\leq t\leq N\), let
\[
  P_{q-1}(x)
  \eqdef
  \sum_{j=0}^{q-1}\binom{q-1}{j}^{2}x^j,
  \qquad
  \sum_{m\geq0}c_{q,m}x^m
  \eqdef
  \frac{P_{q-1}(x)}{(1-x)^{q-1}},
\]
and put
\begin{equation}
  L_{N,q,t}^{\mathrm{unc}}
  \eqdef
  \left\lceil
    \frac{|\cS_{q,N}|^{2}}
         {\displaystyle
          \sum_{m=0}^{t}
          c_{q,m}\binom{N-m+q-1}{q-1}}
  \right\rceil.
  \label{eq:uncapped-A-classical-size}
\end{equation}
Then there is an exact Fock code of distance \(t+1\) on \(q\) modes and total
excitation \(N\) with dimension
\begin{equation}
\label{eq:uncapped-A-quantum-size}
  K_{N,q,t}
  \geq
  1+
  \left\lfloor
    \frac{L_{N,q,t}^{\mathrm{unc}}-1}
         {\binom{q+t-1}{q-1}}
  \right\rfloor.
\end{equation}
In particular, for fixed \(q\geq2\) and \(t\geq0\),
\begin{equation}
\label{eq:fixed-q-quantum-fraction}
  \liminf_{N\to\infty}
  \frac{K_{N,q,t}}{|\cS_{q,N}|}
  \geq
  \frac{1}{
    \displaystyle
    \binom{q+t-1}{q-1}
    \sum_{m=0}^{t}c_{q,m}},
  \qquad
  \frac{\log K_{N,q,t}}{\log|\cS_{q,N}|}\longrightarrow1.
\end{equation}
\end{corollary}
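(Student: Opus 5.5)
The plan is to apply Corollary~\ref{cor:finite-quantum} with \(B\geq N\), so that the cap is vacuous, and then to identify each quantity in \eqref{eq:finite-K} with its uncapped counterpart. For \(B\geq N\) the capped simplex \(\cS_{q,N}^{(B)}\) equals \(\cS_{q,N}\). In addition, \([z^t](1+z+\cdots+z^B)^q=\binom{q+t-1}{q-1}\), because \(t\leq N\leq B\) and so every weak composition of \(t\) already obeys the cap; this is Remark~\ref{rem:moment-comparison}. It then suffices to show that the ordered close-pair count is
\[
  P_{\cS_{q,N}}(t)=\sum_{m=0}^{t}c_{q,m}\binom{N-m+q-1}{q-1}.
\]
With this, \(L_B^{\mathrm{GV}}(N,q,t)=L^{\mathrm{unc}}_{N,q,t}\) and \eqref{eq:uncapped-A-quantum-size} follows directly from \eqref{eq:finite-K}.

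For the pair count I would invoke the uncapped enumerator \cite[Thm.~9]{KreindelEssayagZabokritskiy2026b}, to which Proposition~\ref{prop:finite-enumerator} reduces when \(B\geq N\). If the cited form has to be re-derived, I would count ordered pairs through their common descendant. Each pair \((\underline n,\underline m)\) at distance exactly \(m\) is determined by the vector \(\underline g=\min(\underline n,\underline m)\in\cS_{q,N-m}\) together with the disjointly supported differences \(\underline e=(\underline n-\underline m)_+\) and \(\underline f=(\underline m-\underline n)_+\), each of total \(m\). The factor \(\binom{N-m+q-1}{q-1}\) counts \(\underline g\). Pairs of weak compositions of \(m\) with disjoint supports are counted by \([x^m]\sum_j\binom{q}{j}\ldots\). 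The identity that needs checking, and which I expect to be the main obstacle, is that this disjoint-support count equals the coefficient of \(x^m\) in \(P_{q-1}(x)/(1-x)^{q-1}\), namely \(c_{q,m}\). I would verify it by a Narayana/Vandermonde-type manipulation, or simply quote \cite[Lem.~8]{KreindelEssayagZabokritskiy2026b} where this generating function appears.

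For the asymptotic statement \eqref{eq:fixed-q-quantum-fraction}, fix \(q\) and \(t\). As \(N\to\infty\) we have \(\binom{N-m+q-1}{q-1}\sim|\cS_{q,N}|=\binom{N+q-1}{q-1}\) for each \(m\leq t\), so \(P_{\cS_{q,N}}(t)\sim|\cS_{q,N}|\sum_{m\le t}c_{q,m}\). Hence \(L^{\mathrm{unc}}\sim|\cS_{q,N}|/\sum_m c_{q,m}\), and dividing by the constant \(\binom{q+t-1}{q-1}\) gives the liminf bound, since the \(\lfloor\cdot\rfloor\) and the \(+1\) are negligible. The ratio of logarithms tends to \(1\) because \(K_{N,q,t}\) lies between a positive constant times \(|\cS_{q,N}|\) and \(|\cS_{q,N}|\), and \(|\cS_{q,N}|\to\infty\).
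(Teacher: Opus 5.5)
Your proposal is correct and is essentially the paper's proof: take \(B\geq N\) in Corollary~\ref{cor:finite-quantum}, so that \(m_B(q,t)=\binom{q+t-1}{q-1}\) and the ordered close-pair count is the uncapped one of \cite[Thm.~9]{KreindelEssayagZabokritskiy2026b}, and then use \(\binom{N-m+q-1}{q-1}/|\cS_{q,N}|\to1\) for each fixed \(m\leq t\). The identity you flag as the main obstacle is exactly what that cited theorem supplies, namely that the number of ordered pairs of disjointly supported weak compositions of \(m\) into \(q\) parts equals \(c_{q,m}\); it is also the classical coordination sequence of the root lattice \(A_{q-1}\), so quoting it is enough.
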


\begin{proof}
Take \(B\geq N\) in Corollary~\ref{cor:finite-quantum}; the resulting
classical bound is the generalized Gilbert--Varshamov bound
\cite[Thm.~17]{KreindelEssayagZabokritskiy2026b} with the ordered-pair
count of \cite[Thm.~9]{KreindelEssayagZabokritskiy2026b}.
For fixed \(q,t\), divide by \(|\cS_{q,N}|\) and use
\(\binom{N-m+q-1}{q-1}/|\cS_{q,N}|\to1\) for each fixed \(m\leq t\)
\cite[Lem.~12]{KreindelEssayagZabokritskiy2026b}.
\end{proof}

For fixed \(q\) and \(t\geq1\), the Sidon-set construction
\cite[Thm.~5]{AydinAlbertBarg2026} combined with
\cite[Thm.~7]{AydinAlbertBarg2026} already gives
\(\liminf_N K_{N,q,t}/|\cS_{q,N}|\geq1/(|G|\binom{q+t-1}{q-1})\) for every
abelian group \(G\) containing a \(t\)-Sidon set of size \(q\) (all sums of
\(t\) elements, taken as multisets, are distinct).  For a
suitable \(G\) this constant can be much larger than the one in
\eqref{eq:fixed-q-quantum-fraction}: for example, the \(2\)-Sidon set
\(\{0,1,3\}\subseteq\Z_7\) gives \(|G|=7\) when \((q,t)=(3,2)\), whereas
\(\sum_{m\leq2}c_{3,m}=19\).  The value of
\eqref{eq:fixed-q-quantum-fraction} is only that it follows directly from
the exact pair enumerator.

\begin{remark}[Scalar and quantum enumerators]
The scalar enumerator \(A_{N,q}(m)\) is the classical distance distribution
of the whole simplex,
not a Shor--Laflamme quantum weight enumerator
\cite{ShorLaflamme1997,Rains1998}.  By itself it neither records the
factorial loss weights, projector coefficients, or cancellations in
Knill--Laflamme overlaps, nor supplies a quantum MacWilliams transform.
Kraus weights are recorded by the amplitude-damping enumerators
introduced in \cite{OuyangLai2020}, by the general-channel auxiliary
enumerators developed in \cite[arXiv version, Secs.~III--V, esp. Eqs.~(12)--(15) and
Lem.~4]{OuyangLai2021}, and, for loss overlaps, by the loss-overlap kernel
of Subsection~\ref{subsec:loss-overlap}.
\end{remark}

\subsection{A loss-overlap refinement and its fixed-cap recurrence}
\label{subsec:loss-overlap}

The pair enumerator \(A_{N,q,B}(m)\) of
Proposition~\ref{prop:finite-enumerator} counts,
by~\eqref{rem:loss-confusability}, pairs of number states with a common
descendant after equal-order losses, but it ignores the bosonic amplitudes
of those losses.  When \(B\geq2\), pairs at the same deletion distance can
have different Kraus overlaps (Example~\ref{ex:unequal-loss-overlap}).
This subsection records these amplitudes:
Definition~\ref{def:loss-overlap} weights each common descendant by its
squared Kraus overlap, Proposition~\ref{prop:loss-overlap} shows that the
ordered pair joint type determines the weight, and
Theorem~\ref{thm:loss-overlap-dp} computes the loss-overlap enumerator
\(W_{N,q,B}(m,r)\) in polynomial time for fixed cap.  For two number states,
by \eqref{eq:loss-overlap-kraus-identity} the kernel \(\Omega_r\) of
Definition~\ref{def:loss-overlap} is, up to a factor depending only on
\(\gamma\), \(N\) and \(r\), the total of their squared Kraus overlaps, the
matrix elements that enter the Knill--Laflamme conditions, over all pairs of
loss patterns of order \(r\); by \eqref{eq:loss-overlap-support} it is
nonzero exactly when their deletion distance is at most \(r\), and
\(W_{N,q,B}(m,r)\) totals it over the ordered pairs in \(\cS_{q,N}^{(B)}\)
at deletion distance \(m\).
These results are not used in the existence bounds;
Remark~\ref{rem:loss-overlap-aqec-obstruction} explains why approximate
correction needs a different kernel.

\begin{definition}[Loss-overlap kernel]
\label{def:loss-overlap}
For \(\underline{n},\underline{m}\in\cS_{q,N}^{(B)}\) and an integer
\(0\leq r\leq N\), define
\begin{equation}
\label{eq:loss-overlap-definition}
  \Omega_r(\underline{n},\underline{m})
  \eqdef
  \sum_{\substack{
        \underline{z}\in\Z_{\geq0}^{q}\\
        \underline{z}\leq\underline{n},\underline{m}\\
        |\underline{z}|=N-r}}
  \prod_{i=0}^{q-1}
  \binom{n_i}{z_i}\binom{m_i}{z_i}.
\end{equation}
For the pure-loss Kraus operators defined in
Section~\ref{sec:preliminaries}, the operator-level meaning of
\(\Omega_r\) is

\begin{equation}
\label{eq:loss-overlap-kraus-identity}
  \sum_{\underline{e},\underline{f}\in\cS_{q,r}}
  \left|
    \langle\underline{m}\rvert
    (A_{\underline{f}}^{(\gamma)})^\dagger
    A_{\underline{e}}^{(\gamma)}
    \lvert\underline{n}\rangle
  \right|^2
  =
  \gamma^{2r}(1-\gamma)^{2(N-r)}
  \Omega_r(\underline{n},\underline{m}).
\end{equation}
Each nonzero summand corresponds to the common descendant
\(\underline{n}-\underline{e}=\underline{m}-\underline{f}\).  Thus \(\Omega_r\)
weights the common descendants of \(\underline{n}\) and \(\underline{m}\).
\end{definition}

\begin{example}[Equal distance, unequal loss overlap]
\label{ex:unequal-loss-overlap}
In \(\cS_{3,3}^{(2)}\), take
\[
  \underline{n}\eqdef(0,1,2),\qquad
  \underline{m}\eqdef(1,0,2),\qquad
  \underline{m}'\eqdef(0,2,1).
\]
Both pairs have deletion distance one.  Their unique one-loss common
descendants are \((0,0,2)\) and \((0,1,1)\), respectively, but
\[
  \Omega_1(\underline{n},\underline{m})=1,
  \qquad
  \Omega_1(\underline{n},\underline{m}')=4.
\]
Thus the scalar distance, and hence \(A_{N,q,B}(m)\) alone, is
insufficient to recover the bosonic loss weights once \(B\geq2\).
Proposition~\ref{prop:loss-overlap} below shows that the ordered pair
joint type is one sufficient refinement.
\end{example}

For a pair \((\underline{n},\underline{m})\), let
\[
  \mu_{ab}(\underline{n},\underline{m})
  \eqdef
  |\{i:(n_i,m_i)=(a,b)\}|,
  \qquad 0\leq a,b\leq B,
\]
be its ordered pair joint type, and put
\[
  g_{ab}(U)
  \eqdef
  \sum_{c=0}^{\min(a,b)}
  \binom ac\binom bc U^c.
\]

\begin{proposition}[Positivity and coefficient formulas]
\label{prop:loss-overlap}
For \(\underline{n},\underline{m}\in\cS_{q,N}^{(B)}\),
\begin{equation}
\label{eq:loss-overlap-support}
  \Omega_r(\underline{n},\underline{m})>0
  \quad\Longleftrightarrow\quad
  d(\underline{n},\underline{m})\leq r.
\end{equation}
At the first possible loss order \(r=d(\underline{n},\underline{m})\), the
common descendant is unique and
\begin{equation}
\label{eq:loss-overlap-minimum-order}
  \Omega_r(\underline{n},\underline{m})
  =
  \prod_{i=0}^{q-1}
  \binom{n_i}{\min(n_i,m_i)}
  \binom{m_i}{\min(n_i,m_i)}.
\end{equation}
For every \(r\), the pair joint type determines the overlap through
\begin{equation}
\label{eq:loss-overlap-joint-type}
  \Omega_r(\underline{n},\underline{m})
  =
  [U^{N-r}]
  \prod_{a,b=0}^{B}
  g_{ab}(U)^{\mu_{ab}(\underline{n},\underline{m})}.
\end{equation}
Finally, if
\begin{align}
  \Phi_B^{\mathrm{ov}}(X,Y,U,Z)
  &\eqdef
  \sum_{a,b=0}^{B}
  \sum_{c=0}^{\min(a,b)}
  \binom ac\binom bc
  X^aY^bU^cZ^{(a-b)_+},
  \label{eq:loss-overlap-coordinate-gf}\\
  W_{N,q,B}(m,r)
  &\eqdef
  \sum_{\substack{
        \underline{x},\underline{y}\in\cS_{q,N}^{(B)}\\
        d(\underline{x},\underline{y})=m}}
  \Omega_r(\underline{x},\underline{y}),
  \label{eq:aggregate-loss-overlap}
\end{align}
then
\nopagebreak
\begin{equation}
\label{eq:aggregate-loss-overlap-coefficient}
  W_{N,q,B}(m,r)
  =
  [X^NY^NU^{N-r}Z^m]
  \Phi_B^{\mathrm{ov}}(X,Y,U,Z)^q.
\end{equation}
\end{proposition}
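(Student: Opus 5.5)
The whole proposition rests on the fact that \(\Omega_r\) factorizes over coordinates. The summand in \eqref{eq:loss-overlap-definition} is \(\prod_i\binom{n_i}{z_i}\binom{m_i}{z_i}\), and the constraint \(z_i\leq\min(n_i,m_i)\) is coordinatewise. The only coupling between coordinates is \(|\underline z|=N-r\). Hence
\[
  \sum_{r}\Omega_r(\underline n,\underline m)\,U^{N-r}
  =\prod_{i=0}^{q-1}g_{n_im_i}(U),
\]
since each factor \(g_{n_im_i}(U)=\sum_{c}\binom{n_i}{c}\binom{m_i}{c}U^c\) records the choice of \(z_i=c\). Grouping the coordinates by their value pair \((a,b)\) turns this product into \(\prod_{a,b}g_{ab}(U)^{\mu_{ab}}\). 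That gives \eqref{eq:loss-overlap-joint-type} immediately.

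For the support statement \eqref{eq:loss-overlap-support}, I would note that every term in the sum is nonnegative. A term is positive exactly when \(\underline z\leq\min(\underline n,\underline m)\) coordinatewise, so \(\Omega_r>0\) if and only if some such \(\underline z\) has \(|\underline z|=N-r\). The attainable values of \(|\underline z|\) are exactly the integers in \(\bigl[0,\sum_i\min(n_i,m_i)\bigr]\), because we may decrease the coordinates one at a time. By \eqref{eq:positive-part-distance}, \(\sum_i\min(n_i,m_i)=N-d(\underline n,\underline m)\). So the condition is \(N-r\leq N-d\), that is, \(r\geq d\), with \(0\leq r\leq N\) as given.

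At \(r=d\) the constraint \(|\underline z|=\sum_i\min(n_i,m_i)\) together with \(\underline z\leq\min(\underline n,\underline m)\) forces \(\underline z=\min(\underline n,\underline m)\). This \(\underline z\) is the unique common descendant, and its single term is \eqref{eq:loss-overlap-minimum-order}. The identification of \(\underline z\) with the common descendant \(\underline n-\underline e=\underline m-\underline f\) is the one already given after Definition~\ref{def:loss-overlap}.

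For \eqref{eq:aggregate-loss-overlap-coefficient}, I would expand \(\Phi_B^{\mathrm{ov}}(X,Y,U,Z)^q\) as a sum over sequences \((a_i,b_i,c_i)_{i<q}\) with \(0\leq a_i,b_i\leq B\) and \(c_i\leq\min(a_i,b_i)\). Each sequence contributes the monomial
\[
  X^{\sum a_i}\,Y^{\sum b_i}\,U^{\sum c_i}\,Z^{\sum(a_i-b_i)_+},
\]
with weight \(\prod_i\binom{a_i}{c_i}\binom{b_i}{c_i}\). Now extract the coefficient of \(X^NY^N\). This selects pairs \(\underline x,\underline y\in\cS_{q,N}^{(B)}\); the cap is built into the range of \(a,b\). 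The \(Z\)-exponent is then \(d(\underline x,\underline y)\) by \eqref{eq:positive-part-distance}, so extracting \(Z^m\) selects the pairs at distance \(m\). Finally, extracting \(U^{N-r}\) sums the weights over \(\underline z=\underline c\) with \(|\underline z|=N-r\), which by definition is \(\Omega_r(\underline x,\underline y)\). Summing over the pairs gives \(W_{N,q,B}(m,r)\).

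This argument is the cap-restricted generating-function bookkeeping already used for Proposition~\ref{prop:finite-enumerator}. There is no real obstacle. The step needing the most care is the support claim: one must check that every intermediate value of \(|\underline z|\) is attainable and that \(\sum_i\min(n_i,m_i)=N-d\). Both are short.
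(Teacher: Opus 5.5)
Your proposal is correct and follows the paper's proof essentially step for step: coordinatewise factorization of \(\Omega_r\), the identity \(\sum_i\min(n_i,m_i)=N-d(\underline{n},\underline{m})\) for the support and uniqueness claims, and coordinatewise generating-function bookkeeping for the joint-type and aggregate formulas. You are slightly more explicit than the paper about two points it leaves implicit, namely that the summands are nonnegative and that every intermediate value of \(|\underline z|\) is attainable.
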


Here \(X\) and \(Y\) mark the two input sums, \(U\) marks the common
descendant size, and \(Z\) marks deletion distance.

\begin{proof}
The largest common submultiset of \(\underline{n}\) and \(\underline{m}\)
is their coordinatewise minimum, whose size is
\[
  \sum_i\min(n_i,m_i)
  =
  N-d(\underline{n},\underline{m}).
\]
A common descendant of size \(N-r\) therefore exists exactly when
\(r\geq d(\underline{n},\underline{m})\).  Equality forces the unique largest
common descendant and gives
\eqref{eq:loss-overlap-minimum-order}.  For a coordinate of pair type
\((a,b)\), the choice \(z_i=c\) contributes
\(\binom ac\binom bcU^c\), proving
\eqref{eq:loss-overlap-joint-type}.  Multiplying over all coordinates proves
\eqref{eq:aggregate-loss-overlap-coefficient}.
\end{proof}

The coefficient formula already yields a polynomial algorithm for fixed
cap.  The next statement makes the recurrence and its state space
explicit.

\begin{theorem}[Polynomial fixed-cap loss-overlap recurrence]
\label{thm:loss-overlap-dp}
For integers \(B,q,N\geq1\) satisfying \(N\leq Bq\), and
\(0\leq m,r\leq N\), let
\[
  \mathfrak I_B
  \eqdef
  \{(a,b,c):0\leq a,b\leq B,\ 0\leq c\leq\min(a,b)\},
\]
and
\begin{equation}
\label{eq:loss-overlap-transition-count}
  \tau_B
  \eqdef
  |\mathfrak I_B|
  =
  (B+1)^2+\frac{B(B+1)(2B+1)}6.
\end{equation}
After \(j\) coordinates, let
\(D_j^{\mathrm{ov}}(s_x,s_y,s_c,d_{\mathrm{del}})\) be the
weighted count with the two input sums \(s_x,s_y\), common-descendant sum
\(s_c\), and accumulated positive difference
\(d_{\mathrm{del}}\eqdef\sum_{i=0}^{j-1}(x_i-y_i)_+\).  This last statistic
equals deletion distance when the two input sums agree, in particular
at the terminal state \(s_x=s_y=N\).  Initialize
\(D_0^{\mathrm{ov}}(0,0,0,0)=1\), with all other states zero, and set
\begin{equation}
\label{eq:loss-overlap-dp}
  D_{j+1}^{\mathrm{ov}}(s_x,s_y,s_c,d_{\mathrm{del}})
  =
  \sum_{(a,b,c)\in\mathfrak I_B}
  \binom ac\binom bc
  D_j^{\mathrm{ov}}\bigl(s_x-a,s_y-b,s_c-c,d_{\mathrm{del}}-(a-b)_+\bigr),
\end{equation}
where states with a negative argument are zero, and
states exceeding \((N,N,N-r,m)\) in any coordinate may be discarded.  Then
\begin{equation}
\label{eq:loss-overlap-dp-output}
  W_{N,q,B}(m,r)=D_q^{\mathrm{ov}}(N,N,N-r,m).
\end{equation}
For fixed \(B\), a specified pair \((m,r)\) is therefore computable
using at most
\[
  (N+1)^2(N-r+1)(m+1)
\]
states,
\[
  O\!\left(
    q\tau_B(N+1)^2(N-r+1)(m+1)
  \right)
\]
integer-arithmetic operations, and linear memory in the number of states.
In particular, since \(N\leq Bq\), this is polynomial in \(q\) for
every fixed cap.
\end{theorem}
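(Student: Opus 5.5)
The plan is to read the recurrence \eqref{eq:loss-overlap-dp} as the coordinate-by-coordinate expansion of the coefficient formula \eqref{eq:aggregate-loss-overlap-coefficient} in Proposition~\ref{prop:loss-overlap}, and then to count states and operations.

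\textbf{Correctness.} I will show by induction on \(j\) that
\[
  D_j^{\mathrm{ov}}(s_x,s_y,s_c,d_{\mathrm{del}})
  =
  [X^{s_x}Y^{s_y}U^{s_c}Z^{d_{\mathrm{del}}}]\,
  \Phi_B^{\mathrm{ov}}(X,Y,U,Z)^j .
\]
The base case \(j=0\) is the constant polynomial \(1\), which matches the initialization. For the inductive step, write \(\Phi^{j+1}=\Phi\cdot\Phi^{j}\) and expand \(\Phi_B^{\mathrm{ov}}\) as the sum over \((a,b,c)\in\mathfrak I_B\) of the monomials \(\binom ac\binom bc X^aY^bU^cZ^{(a-b)_+}\). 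Extracting a coefficient then gives exactly the shifted sum \eqref{eq:loss-overlap-dp}. Shifts that produce a negative argument correspond to zero coefficients, because all exponents are nonnegative.

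Discarding states that exceed \((N,N,N-r,m)\) in some coordinate is harmless. Every increment \(a,b,c,(a-b)_+\) is nonnegative, so such a state can never contribute to the terminal coefficient.

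At \(j=q\), the terminal coefficient is \(W_{N,q,B}(m,r)\) by \eqref{eq:aggregate-loss-overlap-coefficient}. This gives \eqref{eq:loss-overlap-dp-output}. The remark about \(d_{\mathrm{del}}\) follows from \eqref{eq:positive-part-distance} applied once \(s_x=s_y\).

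\textbf{Counting.} First I verify \(\tau_B\). For each pair \((a,b)\) there are \(\min(a,b)+1\) admissible values of \(c\). Hence
\[
  \tau_B=(B+1)^2+\sum_{a,b=0}^{B}\min(a,b).
\]
The sum \(\sum_{a,b}\min(a,b)\) equals \(\sum_{k=1}^{B}(B-k+1)^2=\sum_{k=1}^{B}k^2=B(B+1)(2B+1)/6\). This uses \(\min(a,b)=\sum_{k\ge1}\mathbf 1[a\ge k,b\ge k]\).

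Each layer uses at most \((N+1)^2(N-r+1)(m+1)\) states. Each state takes \(\tau_B\) multiply-adds, and there are \(q\) layers, which gives the stated operation count. Memory is linear in the number of states because only layers \(j\) and \(j+1\) are kept. Finally, \(N\le Bq\) bounds the number of states by \(O((Bq+1)^4)\), which is polynomial in \(q\) for fixed \(B\). The binomial weights are bounded by \(\binom{B}{\lfloor B/2\rfloor}^2\).

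\textbf{Expected difficulty.} I expect no real obstacle. The only point needing care is the bookkeeping behind the discarding claim. I will handle it with the monotonicity argument above: all increments are nonnegative, so coordinates never decrease along the recurrence.
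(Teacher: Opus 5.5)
Your proposal is correct and follows the paper's proof step for step: induction on \(j\) identifies \(D_j^{\mathrm{ov}}\) with the coefficients of \((\Phi_B^{\mathrm{ov}})^j\), and Proposition~\ref{prop:loss-overlap} gives the terminal value. The formula \(\tau_B=\sum_{a,b}(\min(a,b)+1)\) and the state and transition counts then give the complexity bounds, and your evaluation of \(\sum_{a,b}\min(a,b)\) via \(\min(a,b)=\sum_{k\geq1}\mathbf 1[a\geq k,b\geq k]\) only makes explicit a step the paper leaves to the reader. The one point the paper states that you leave implicit is the bit size of the counters: your weight bound \(\binom ac\binom bc\leq\binom{B}{\lfloor B/2\rfloor}^2\leq4^B\) makes every counter at most \((\tau_B4^B)^q\), hence of \(O_B(q)\) bits, so the operation count also gives polynomial bit complexity.
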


\begin{proof}
A new
coordinate triple \((a,b,c)\) changes the four state variables exactly as
in \eqref{eq:loss-overlap-dp}, with the required binomial weight.
Also,
\(\tau_B=\sum_{a,b=0}^{B}(\min(a,b)+1)\), which evaluates to
\eqref{eq:loss-overlap-transition-count}.
Induction on \(j\) gives the coefficient of
\(X^{s_x}Y^{s_y}U^{s_c}Z^{d_{\mathrm{del}}}\) in
\((\Phi_B^{\mathrm{ov}})^j\), and
\eqref{eq:loss-overlap-dp-output} follows from
Proposition~\ref{prop:loss-overlap}.  The displayed bounds count the
retained states and the \(\tau_B\) possible transitions from each one.
Each transition weight is at most \(4^B\), so every counter is bounded
by \((\tau_B4^B)^q\).  It therefore has \(O_B(q)\) bits.  For fixed
\(B\), the stated arithmetic bound also gives polynomial bit complexity
in the number \(q\) of modes.
\end{proof}

At \(r=N\), only \(\underline z=\underline 0\) contributes to
\eqref{eq:loss-overlap-definition}, so \(\Omega_N\equiv1\) and
\(W_{N,q,B}(m,N)=A_{N,q,B}(m)\): the loss-overlap enumerator refines the pair
enumerator of Proposition~\ref{prop:finite-enumerator}, and the pair
recurrence used in Appendix~\ref{app:verification} is the restriction of
\eqref{eq:loss-overlap-dp} to \(c=0\).

\begin{corollary}[Cap one]
\label{cor:loss-overlap-cap-one}
If \(B=1\), \(\underline{x},\underline{y}\in\cS_{q,N}^{(1)}\),
\(m\eqdef d(\underline{x},\underline{y})\), and \(0\leq r\leq N\), then
\begin{equation}
\label{eq:loss-overlap-cap-one}
  \Omega_r(\underline{x},\underline{y})
  =
  \mathbf 1_{\{m\leq r\}}
  \binom{N-m}{r-m},
\end{equation}
and hence
\[
  W_{N,q,1}(m,r)
  =
  A_{N,q,1}(m)
  \mathbf 1_{\{m\leq r\}}
  \binom{N-m}{r-m}.
\]
\end{corollary}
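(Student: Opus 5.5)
The plan is to apply the joint-type formula \eqref{eq:loss-overlap-joint-type} of Proposition~\ref{prop:loss-overlap} with \(B=1\). For cap one, each coordinate pair \((x_i,y_i)\) lies in \(\{0,1\}^2\), and the polynomials \(g_{ab}\) are explicit:
\[
  g_{00}(U)=1,\qquad g_{01}(U)=g_{10}(U)=1,\qquad g_{11}(U)=1+U.
\]
Hence
\[
  \Omega_r(\underline{x},\underline{y})=[U^{N-r}](1+U)^{\mu_{11}},
\]
where \(\mu_{11}=|\{i:x_i=y_i=1\}|\).

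The next step is to identify \(\mu_{11}\) in terms of \(m\). Since both vectors are \(0/1\) with weight \(N\), the coordinatewise minimum has size \(\mu_{11}=\sum_i\min(x_i,y_i)=N-d(\underline{x},\underline{y})=N-m\), exactly as in the proof of Proposition~\ref{prop:loss-overlap}. Therefore
\[
  \Omega_r(\underline{x},\underline{y})=\binom{N-m}{N-r}=\binom{N-m}{r-m}.
\]
When \(r<m\), we have \(N-r>N-m\), so the coefficient vanishes; this gives the indicator \(\mathbf 1_{\{m\leq r\}}\). This is consistent with the support statement \eqref{eq:loss-overlap-support}. For \(m\leq r\leq N\), the lower index \(N-r\) lies in \([0,N-m]\), so the binomial is the genuine one. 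Alternatively, one can argue directly from \eqref{eq:loss-overlap-definition}: each admissible \(\underline{z}\) is a \(0/1\) vector supported on the \(N-m\) common ones, with weight \(N-r\), and every binomial factor equals one.

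For the aggregate statement, substitute this value into the definition \eqref{eq:aggregate-loss-overlap}. The summand then depends only on the fixed distance \(m\), so
\[
  W_{N,q,1}(m,r)=\mathbf 1_{\{m\leq r\}}\binom{N-m}{r-m}\,\bigl|\{(\underline{x},\underline{y}):d=m\}\bigr|
  =A_{N,q,1}(m)\,\mathbf 1_{\{m\leq r\}}\binom{N-m}{r-m},
\]
where the last step uses \eqref{eq:finite-shell}, whose coefficient counts exactly these ordered pairs.

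There is no real obstacle here. The only care point is the index bookkeeping: one must check that \(\mu_{11}=N-m\) and handle the boundary cases \(r<m\) and \(r=N\). The latter gives \(\binom{N-m}{N-m}=1\), matching \(\Omega_N\equiv1\).
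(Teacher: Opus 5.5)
Your proposal is correct and essentially the paper's proof: the paper argues directly from \eqref{eq:loss-overlap-definition}, noting that the inputs are incidence vectors of two \(N\)-subsets meeting in \(N-m\) points, that every binomial weight is one, and that a descendant is an \((N-r)\)-subset of the intersection. That is your stated alternative, and your main route via \eqref{eq:loss-overlap-joint-type} with \(g_{11}(U)=1+U\) is the same count written as a generating function. The aggregate identity then follows, as you say, by summing the constant value over the ordered pairs at distance \(m\).
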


\begin{proof}
The inputs are incidence vectors of two \(N\)-subsets whose intersection
has size \(N-m\).  Every binomial weight in
\eqref{eq:loss-overlap-definition} is one, and a descendant is an
\((N-r)\)-subset of that intersection.
\end{proof}

\section{Classical and quantum converse bounds}
\label{sec:converses}

We first bound separated classical supports using capped-ball geometry,
then derive a rank bound for arbitrary capped quantum subspaces and
compare it with coherent-information bounds.

\subsection{Classical Plotkin and sphere-packing bounds}

The next statements concern classical separated supports inside the
capped simplex.  They therefore limit what the lifting of a separated
support can give, but they are not converses for arbitrary quantum codes.

Let \(L_B^{\max}(N,q,D)\) denote the largest size of a code in
\(\cS_{q,N}^{(B)}\) with minimum deletion distance at least \(D\).
The map
\[
  a\longmapsto 1^a0^{B-a},
  \qquad
  0\leq a\leq B,
\]
applied coordinatewise embeds \(\cS_{q,N}^{(B)}\) into the binary
constant-weight space of length \(Bq\) and weight \(N\), and doubles
deletion distance into Hamming distance.  For a capped code of cardinality
\(L\geq2\) and minimum deletion distance \(D\), sum the Hamming distances
of all ordered pairs of distinct embedded codewords column by column: a
column of weight \(w\) contributes \(2w(L-w)\), and the \(Bq\) column
weights add up to \(LN\).  Concavity then gives the Plotkin--Johnson
inequality (compare the nonrecursive bound of \cite{LuoFuVinckChen2003})
\begin{equation}
\label{eq:capped-plotkin}
  D\left(1-\frac1L\right)
  \leq
  N\left(1-\frac{N}{Bq}\right).
\end{equation}
In particular, when \(q\sim\alpha N\), positive rate requires
\[
  \delta\leq1-\frac1{\alpha B}.
\]
For \(B=2,\alpha=1\), this bound and
Theorem~\ref{thm:cap-two-cc} show that the supremum of the relative
distances with positive classical rate is exactly \(1/2\).

Identifying the minimum capped-ball volume gives a sphere-packing bound.
Our earlier uncapped result gives the generating function for an arbitrary
center, both extremal centers, and the exact average ball
\cite[Lem.~4, Lems.~5 and 8, and Thm.~9]{KreindelEssayagZabokritskiy2026b}.
We extend these formulas by truncating the local loss and gain factors;
for the extremal centers we replace the completion counting of
\cite{KreindelEssayagZabokritskiy2026b} by a cap-preserving Robin--Hood injection.  While corner minima
are known for product grids \cite{GarciaClaroGutierrez2024}, the following
majorization theorem applies to the capped simplex, which is not a product
set, and identifies both extremal centers and the generating functions of
their balls.

\begin{theorem}[Majorization, extremal capped balls, and sphere packing]
\label{thm:minimum-capped-ball}
Let \(B,q,N\geq1\) be integers with \(N\leq Bq\).
For \(0\leq j\leq B\), define
\[
  b_j(U,V)
  \eqdef
  1+\sum_{i=1}^{j}U^i+\sum_{i=1}^{B-j}V^i.
\]
Here \(U\) marks units removed from a coordinate and \(V\) marks units
added.
For every integer \(r\geq0\) and every center
\(\underline{n}\in\cS_{q,N}^{(B)}\),
\begin{equation}
\label{eq:center-ball-volume}
  |B_{\cS_{q,N}^{(B)}}(\underline{n},r)|
  =
  \sum_{m=0}^{r}
  [U^mV^m]\prod_{i=0}^{q-1}b_{n_i}(U,V).
\end{equation}
If \(\underline{n}\) majorizes \(\underline{m}\), then for every \(r\geq0\),
\begin{equation}
\label{eq:ball-majorization}
  |B_{\cS_{q,N}^{(B)}}(\underline{n},r)|
  \leq
  |B_{\cS_{q,N}^{(B)}}(\underline{m},r)|.
\end{equation}
Thus ball volume is Schur-concave in the center.

Write
\[
  N=aB+c,
  \qquad
  0\leq c<B,
  \qquad
  \varepsilon\eqdef\mathbf{1}_{\{c>0\}},
  \qquad
  z_0\eqdef q-a-\varepsilon;
\]
inside a displayed center vector, a superscript counts repeated
coordinates; it is not an exponentiation.
Up to coordinate permutation, the saturated point
\begin{equation}
\label{eq:extreme-center}
  \underline{n}^{\mathrm{sat}}\eqdef(B^a,c^\varepsilon,0^{z_0})
\end{equation}
minimizes the ball volume at every radius.  The minimum radius-\(r\) volume is
\begin{equation}
\label{eq:minimum-ball-volume}
  V_{\min}(r)
  =
  \sum_{m=0}^{r}
  [U^mV^m]\,
  F_B(U)^a
  F_B(V)^{z_0}
  b_c(U,V)^\varepsilon.
\end{equation}
Writing \(N=\ell q+s\), \(0\leq s<q\), the balanced point
\[
  \underline{n}^{\mathrm{bal}}
  \eqdef
  ((\ell+1)^s,\ell^{q-s})
\]
maximizes the ball volume at every radius, and
\begin{equation}
\label{eq:maximum-ball-volume}
  V_{\max}(r)
  =
  \sum_{m=0}^{r}
  [U^mV^m]\,
  b_\ell(U,V)^{q-s}b_{\ell+1}(U,V)^s,
\end{equation}
where the second factor is absent when \(s=0\).
Consequently, for every integer \(D\geq1\),
\begin{equation}
\label{eq:capped-sphere-packing}
  L_B^{\max}(N,q,D)
  \leq
  \left\lfloor
    \frac{|\cS_{q,N}^{(B)}|}
         {V_{\min}(\lfloor(D-1)/2\rfloor)}
  \right\rfloor.
\end{equation}
\end{theorem}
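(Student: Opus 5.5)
I would prove the statement in four steps: the generating function, majorization, the two extremal centres, and sphere packing.

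\emph{Generating function.} Fix a centre \(\underline{n}\). Any \(\underline{m}\in\cS_{q,N}^{(B)}\) is determined by the coordinatewise changes \(m_i-n_i\in\{-n_i,\ldots,B-n_i\}\). Mark a removal of \(i\) units in coordinate \(i\) by \(U^i\) and an addition of \(i\) units by \(V^i\). Then coordinate \(i\) contributes exactly the factor \(b_{n_i}(U,V)\). The sum condition \(|\underline{m}|=N\) says that total removals equal total additions, and by \eqref{eq:positive-part-distance} this common value is \(d(\underline{n},\underline{m})\). Extracting \([U^mV^m]\) and summing over \(m\le r\) therefore gives \eqref{eq:center-ball-volume}. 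Specialising to \(\underline{n}^{\mathrm{sat}}\), where \(b_B=F_B(U)\) and \(b_0=F_B(V)\), gives \eqref{eq:minimum-ball-volume}. Specialising to \(\underline{n}^{\mathrm{bal}}\) gives \eqref{eq:maximum-ball-volume}.

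\emph{Majorization.} By the standard characterisation, if \(\underline{n}\) majorizes \(\underline{m}\) then \(\underline{m}\) is reached from \(\underline{n}\) by finitely many Robin--Hood transfers, each moving one unit from a larger coordinate to a smaller one. All intermediate vectors stay in \(\cS_{q,N}^{(B)}\), since entries move toward each other and so remain in \([0,B]\). It therefore suffices to treat one transfer \((x,y)\mapsto(x-1,y+1)\) with \(x\ge y+2\), the other coordinates fixed.

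The remaining factor is common to both centres. So it suffices to show that every coefficient of the bivariate polynomial
\[
b_{x-1}(U,V)b_{y+1}(U,V)-b_x(U,V)b_y(U,V)
\]
is nonnegative; multiplying by the common factor, which also has nonnegative coefficients, preserves this. Writing \(b_j=1+S_j(U)+T_{B-j}(V)\) with \(S_j=\sum_{i\le j}U^i\) and \(T_k=\sum_{i\le k}V^i\), the difference expands into three groups:
\begin{itemize}
\item the pure-\(U\) terms \(S_{x-1}S_{y+1}-S_xS_y\);
\item the analogous pure-\(V\) terms;
\item the linear terms plus the mixed terms \(S_{x-1}T_{B-y-1}+S_{y+1}T_{B-x+1}-S_xT_{B-y}-S_yT_{B-x}\).
\end{itemize}
For the pure terms, \(S_{x-1}S_{y+1}-S_xS_y=U^{y+1}S_{x-1}-U^xS_y\). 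This is coefficientwise nonnegative because \(U^{y+1}S_{x-1}\) covers exponents \(y+2,\ldots,x+y\), which contain the exponents \(x+1,\ldots,x+y\) of \(U^xS_y\), each with multiplicity one. The \(V\) part is symmetric. The mixed part is the delicate piece; one could instead give an explicit cap-preserving injection between the balls (the Robin--Hood injection alluded to in the paper). My plan is to verify the mixed part coefficientwise by the same interval-containment bookkeeping and to fall back on the injection if that becomes messy.

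\emph{Main obstacle.} Proving this mixed-term nonnegativity is the step I expect to be hardest. The difficulty is that it concerns only diagonal coefficients \([U^mV^m]\) summed over \(m\le r\), so nonnegativity is only needed after diagonal extraction. Working coefficientwise is stronger but simpler, and I would try it first.

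\emph{Extremal centres and sphere packing.} Every vector in \(\cS_{q,N}^{(B)}\) majorizes \(\underline{n}^{\mathrm{bal}}\) and is majorized by \(\underline{n}^{\mathrm{sat}}\). This is the standard fact that these are the bottom and top of the majorization order on integer vectors with bounded entries and fixed sum. Combined with the majorization inequality, it gives both extremal statements at every radius.

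Finally, if a code has minimum distance at least \(D\), the balls of radius \(\lfloor (D-1)/2\rfloor\) around its codewords are disjoint, by the triangle inequality for \(d\). Each such ball has volume at least \(V_{\min}\), and all of them lie inside \(\cS_{q,N}^{(B)}\). Counting gives \eqref{eq:capped-sphere-packing}.
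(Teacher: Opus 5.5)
Your generating-function derivation, the reduction to unit Robin--Hood transfers, the identification of \(\underline{n}^{\mathrm{sat}}\) and \(\underline{n}^{\mathrm{bal}}\) as the extremes of the majorization order, and the packing count are all sound. The gap is the single-transfer inequality, which carries the whole theorem. Your primary plan for it cannot work, because \(b_{x-1}b_{y+1}-b_xb_y\) is never coefficientwise nonnegative.

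Since \(b_j(1,1)=1+j+(B-j)=B+1\) for every \(j\), the difference has coefficient sum \((B+1)^2-(B+1)^2=0\). It is not the zero polynomial: its pure-\(U\) part, linear and quadratic terms together, is \(\sum_{i=y+1}^{x-1}U^i\). So some coefficient must be negative. Concretely, \(UV^{B-y}\) occurs once in \(S_xT_{B-y}\) and in none of the other three mixed products, so its coefficient is \(-1\).

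Even the weaker shellwise comparison of \([U^mV^m]\) fails. Take \(B=q=N=2\): the point \((0,2)\) lies at distance \(2\) from \((2,0)\), while no point lies at distance \(2\) from \((1,1)\). That is, \([U^2V^2]\,b_2b_0=1>0=[U^2V^2]\,b_1^2\). Only the cumulative sums over \(m\leq r\) are ordered. So no argument that multiplies a termwise-nonnegative difference by the common factor can succeed, and your fallback, the injection, is named but not constructed.

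What is missing is exactly that injection. Fix the target's coordinates outside the pair and the pair sum \(s\). The target pair is then \((j,s-j)\) with \(L_s\eqdef\max\{0,s-B\}\leq j\leq U_s\eqdef\min\{B,s\}\). Shift cyclically in the direction of the transfer: \(j\mapsto j-1\) for \(j>L_s\), and \(L_s\mapsto U_s\). This defines a bijection of \(\cS_{q,N}^{(B)}\).

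Off the wrap point, the local \(\ell_1\) distance of the image to \((x-1,y+1)\) equals that of the original target to \((x,y)\). At the wrap point, the old and new local distances are \(f(L_s)\) and \(f(U_s+1)\), where \(f(j)=|j-x|+|s-j-y|\) is convex and symmetric about \(\mu=(s+x-y)/2\). Since \(L_s+U_s=s\), the midpoint of \(L_s\) and \(U_s+1\) is \((s+1)/2\). Because \(x\geq y+2\), this is at most \(\mu-\tfrac12\), so \(U_s+1\) is no farther from \(\mu\) than \(L_s\), giving \(f(U_s+1)\leq f(L_s)\).

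Hence every ball about the old center maps injectively into the ball of the same radius about the new center. This is the paper's argument, written there with the transfer \((a,b)\mapsto(a+1,b-1)\) and the shift \(j\mapsto j+1\). In generating-function terms, it is a cumulative dominance of the local removal count on each fiber of fixed net change of the pair, which is exactly the structure a coefficientwise comparison discards.
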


When \(B\geq N\), the cap is inactive and this theorem reduces to the
uncapped individual- and extremal-ball results of
\cite[Lem.~4, Lems.~5 and 8]{KreindelEssayagZabokritskiy2026b}.
The proof is given in Appendix~\ref{app:minimum-ball}.  Its key step is a
fiberwise injection showing that every integer Robin--Hood smoothing of
the center can only enlarge its ball.

Theorem~\ref{thm:minimum-capped-ball} also gives the exact maximum ball,
which supplies one more ball-size estimate for achievability.
Table~\ref{tab:finite-comparison} compares four finite ball-size estimates.  The
ambient column uses \(\lceil|\cS_{q,N}^{(B)}|^2/P_{\cS_{q,N}}(t)\rceil\), the
last bound in \eqref{eq:restricted-average}; the analytic-maximum
column uses the ball bound of \cite[Lem.~25]{ElimelechAydinBarg2026} for
centers with a prescribed number of nonzero coordinates, at the largest
possible number \(\min\{q,N\}\); since that bound does not decrease with the
number and each capped ball lies in the uncapped ball with the same center,
it bounds every ball of the capped simplex, and for \(q=N\) it is
\(\sum_{j=0}^{t}\binom{j+q-1}{q-1}^2\); the exact-maximum
column uses \eqref{eq:maximum-ball-volume}; and the exact-average column
uses \eqref{eq:finite-average-ball}.  Each entry is the guaranteed quantum
dimension \(1+\lfloor(L-1)/m_B(q,t)\rfloor\) after the AAB lift, where the
classical size \(L\) is \(\lceil|\cS_{q,N}^{(B)}|/V_{\mathrm{b}}\rceil\) for a
maximum-ball bound \(V_{\mathrm{b}}\), the ratio above for the ambient column, and
\(L_B^{\mathrm{GV}}\) of \eqref{eq:finite-M} for the exact-average column.
In these rows \(t\leq B\), so the published
AAB and cap-aware denominators coincide.

\begin{table}[ht]
\centering
\caption{Guaranteed finite quantum dimensions obtained from four
classical ball-size estimates in the capped simplex.}
\label{tab:finite-comparison}
\begin{tabular}{@{}rrrrrrr@{}}
\toprule
\(q=N\)&\(B\)&\(t\)&
ambient pair&
analytic maximum&
exact maximum&
exact average\\
\midrule
12&5&2&12&3&5&13\\
16&5&2&391&98&143&458\\
16&5&3&5&1&2&6\\
18&6&3&31&3&7&33\\
20&6&3&178&17&36&192\\
20&6&4&3&1&1&3\\
\bottomrule
\end{tabular}
\end{table}

At \((q,N,B,t)=(16,16,5,2)\), for example, the exact maximum-ball
argument guarantees dimension \(143\), whereas the exact average gives
\(458\).

The asymptotic form uses the coefficient exponent
\(H_B^{\mathrm{coeff}}\) of \eqref{eq:H-cap-coefficient}, for which
\(C_B(\alpha)=\alpha H_B^{\mathrm{coeff}}(1/\alpha)\).

\begin{corollary}[Asymptotic capped sphere packing]
\label{cor:asymptotic-sphere-packing}
Fix \(B\geq1\) and \(\alpha>0\) with \(\alpha B>1\).  For
\(0\leq\delta<\min\{1,\alpha B-1\}\), define
\begin{equation}
\label{eq:S-B-sphere}
  S_B(\alpha,\delta)
  \eqdef
  \frac1B
  H_B^{\mathrm{coeff}}\left(\frac{B\delta}{2}\right)
  +
  \left(\alpha-\frac1B\right)
  H_B^{\mathrm{coeff}}\left(
    \frac{\delta}{2(\alpha-1/B)}
  \right).
\end{equation}
Let \(q_N\eqdef\lfloor\alpha N\rfloor\) and
\(D_N\eqdef\lfloor\delta N\rfloor+1\).  Then
\begin{equation}
\label{eq:asymptotic-sphere-packing}
  \limsup_{N\to\infty}
  \frac1N\log_2 L_B^{\max}(N,q_N,D_N)
  \leq
  C_B(\alpha)-S_B(\alpha,\delta).
\end{equation}
In particular, for \(B=2,\alpha=1\),
\begin{equation}
\label{eq:cap-two-sphere-packing}
  \limsup_{N\to\infty}
  \frac1N\log_2 L_2^{\max}(N,N,\lfloor\delta N\rfloor+1)
  \leq
  \log_2 3-\Lambda_2(\delta).
\end{equation}
\end{corollary}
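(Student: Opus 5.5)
The plan is to take logarithms in the finite sphere-packing bound \eqref{eq:capped-sphere-packing} of Theorem~\ref{thm:minimum-capped-ball} and evaluate each side exponentially. The numerator \(|\cS_{q_N,N}^{(B)}|\) contributes \(C_B(\alpha)\) by \eqref{eq:general-space-asymptotic}. What remains is a lower bound
\[
  \liminf_{N\to\infty}\frac1N\log_2V_{\min}(r_N)\geq S_B(\alpha,\delta),
  \qquad
  r_N\eqdef\lfloor(D_N-1)/2\rfloor=\lfloor\lfloor\delta N\rfloor/2\rfloor\sim\delta N/2.
\]
It suffices to keep a single term of the sum \eqref{eq:minimum-ball-volume}, namely \(m=r_N\). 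This gives \(V_{\min}(r_N)\geq[U^{r_N}V^{r_N}]F_B(U)^aF_B(V)^{z_0}b_c(U,V)^\varepsilon\).

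The next step is to bound this coefficient from below. The factor \(b_c^\varepsilon\) contains the constant term \(1\) and has nonnegative coefficients, so it can be dropped. The remaining product separates:
\[
  V_{\min}(r_N)\geq[U^{r_N}]F_B(U)^{a}\cdot[V^{r_N}]F_B(V)^{z_0}.
\]
Here \(a=\lfloor N/B\rfloor\sim N/B\) and \(z_0=q_N-a-\varepsilon\sim(\alpha-1/B)N\). Both sizes grow linearly because \(\alpha B>1\).

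Each factor is then evaluated with the large-powers estimate \eqref{eq:coefficient-exponent}. The ratio \(r_N/a\) tends to \(B\delta/2\), and this lies in \((0,B)\) because \(\delta<1<2\). The ratio \(r_N/z_0\) tends to \(\delta/(2(\alpha-1/B))\), and this lies in \((0,B)\) because \(\delta<\alpha B-1<2(\alpha B-1)\). The two exponents are therefore \(\frac1B H_B^{\mathrm{coeff}}(B\delta/2)\) and \((\alpha-\frac1B)H_B^{\mathrm{coeff}}(\delta/(2(\alpha-1/B)))\), which sum to \(S_B(\alpha,\delta)\). At \(\delta=0\) the bound holds trivially, since \(V_{\min}\geq1\) and \(S_B=0\). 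The floor in \eqref{eq:capped-sphere-packing} only helps. Together these give \eqref{eq:asymptotic-sphere-packing}.

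For the cap-two specialization \eqref{eq:cap-two-sphere-packing}, set \(B=2\) and \(\alpha=1\). Then \(C_2(1)=\log_23\), and
\[
  S_2(1,\delta)=\tfrac12H_2^{\mathrm{coeff}}(\delta)+\tfrac12H_2^{\mathrm{coeff}}(\delta)=H_2^{\mathrm{coeff}}(\delta)=\Lambda_2(1,\delta)=\Lambda_2(\delta).
\]
The condition \(\delta<\min\{1,\alpha B-1\}=1\) holds on the whole stated range.

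The only delicate point is checking that the normalized coefficient indices stay strictly inside \((0,B)\), so that \eqref{eq:coefficient-exponent} applies. This is exactly what the hypothesis \(\delta<\min\{1,\alpha B-1\}\) guarantees. Discarding all other terms and the mixed factor loses nothing needed, because the claim is only a one-sided exponent bound.
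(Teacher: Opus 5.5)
Your proof is correct and follows the paper's route. You take logarithms in \eqref{eq:capped-sphere-packing}, read off \(C_B(\alpha)\) from the space asymptotic, and evaluate the two saturated-point coefficient factors of \(V_{\min}\) via \eqref{eq:coefficient-exponent}. The paper additionally pins down the exact exponent of \(V_{\min}\), arguing that the outer shell dominates by monotonicity of \(H_B^{\mathrm{coeff}}\) on \([0,B/2]\) and treating \(b_c^\varepsilon\) as an \(O(1)\) index shift; your one-sided lower bound (keeping only \(m=r_N\) and discarding \(b_c^\varepsilon\) by positivity) is all the corollary needs and avoids that monotonicity step.
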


\begin{proof}
In \eqref{eq:minimum-ball-volume}, the saturated point has
\(N/B+O(1)\) saturated coordinates and
\((\alpha-1/B)N+O(1)\) zero coordinates.  A point at deletion distance
\(m\) loses and gains exactly \(m\) units.  By
\eqref{eq:coefficient-exponent}, the two coefficient exponents are
\[
  \frac1B H_B^{\mathrm{coeff}}\left(\frac{Bm}{N}\right)
  \quad\text{and}\quad
  \left(\alpha-\frac1B\right)
  H_B^{\mathrm{coeff}}\left(
    \frac{m/N}{\alpha-1/B}
  \right).
\]
The optional factor \(b_c(U,V)^\varepsilon\) has bounded degree and only
shifts these coefficient indices by \(O(1)\), so it contributes
\(2^{o(N)}\).
Since \(H_B^{\mathrm{coeff}}\) increases on \([0,B/2]\) and the stated
range keeps both arguments in that interval through
\(m=\delta N/2+o(N)\), the outer shell determines the exponential
order of the sum.  Subtracting this minimum-ball exponent from the space
exponent \(C_B(\alpha)\) proves
\eqref{eq:asymptotic-sphere-packing}.  For \(B=2,\alpha=1\), the two
terms in \eqref{eq:S-B-sphere} combine to
\(\Lambda_2(\delta)\), since \(\Lambda_B(\alpha,\delta)=\alpha H_B^{\mathrm{coeff}}(\delta/\alpha)\).
\end{proof}

At \((B,\alpha)=(2,1)\), the cap-aware denominator of
Theorem~\ref{thm:cap-aware-aab} and \eqref{eq:cap-two-sphere-packing} bound the
exponent that this lift can guarantee from any separated support by
\(\log_2 3-2\Lambda_2(\delta)\), about \(0.620\) at \(\delta=0.1\) and
positive only for \(\delta<0.2081\).

\subsection{A converse for arbitrary capped quantum codes}
\label{sec:quantum-converse}

The preceding packing bounds concern separated classical supports.
We now bound an arbitrary quantum subspace of \(\cH_{q,N}^{(B)}\),
including codes with linearly dependent error images.
The argument combines the extremal-monomial derivative bound of
\cite[Thm.~1]{GarciaMarcoKoiranPecatteThomasse2017}
with rank-sensitive quantum packing
\cite[Eq.~(11)]{ChiribellaEtAl2011}.
Annihilation becomes differentiation in the polynomial representation.
We bound the dimension spanned by a codeword's derivatives;
Knill--Laflamme orthogonality of the error-image spans of distinct
logical basis states then gives the quantum dimension bound.
Concentrating the occupation in as few modes as the cap allows gives an
explicit cap-dependent lower bound on this derivative rank.  For qudit
amplitude damping without a fixed excitation number, a noise-adapted
Hamming-type count of the capped loss patterns appears in
\cite[Lem.~2 and Suppl. Mat., Sec.~D.2]{DuttaJainMandayam2026}; the present bound
fixes the excitation number, replaces that count by a derivative rank, and
applies to degenerate codes.

\begin{theorem}[Occupation-cap rank bound]
\label{thm:quantum-cap-converse}
Let \(q,N,B\geq1\) and \(0\leq t\leq N\) be integers with
\(N\leq Bq\).  Suppose a \(K\)-dimensional quantum code
\(Q\subseteq\cH_{q,N}^{(B)}\), \(K\geq1\), exactly corrects
\(\{A_{\underline e}^{(\gamma)}:|\underline e|\leq t\}\)
for some \(0<\gamma<1\).  Write \(N=aB+c\), where
\(a\eqdef\lfloor N/B\rfloor\) and \(0\leq c<B\), and set
\[
  h_B(N,r)\eqdef[z^r]F_B(z)^aF_c(z),\qquad F_0(z)\eqdef1.
\]
Then, for every \(0\leq r\leq t\),
\begin{equation}
\label{eq:quantum-cap-converse}
  K h_B(N,r)\leq
  \dim\cH_{q,N-r}^{(B)}=[z^{N-r}]F_B(z)^q.
\end{equation}
\end{theorem}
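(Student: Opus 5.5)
The plan is to pass to the polynomial (Bargmann) representation, bound from below the dimension of the order-\(r\) error-image span of each logical state, and then use Knill--Laflamme orthogonality to pack \(K\) such spans into \(\cH_{q,N-r}^{(B)}\).

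Identify \(\lvert\underline n\rangle\) with \(\underline x^{\underline n}/\sqrt{\underline n!}\), where \(\underline x^{\underline n}\eqdef\prod_ix_i^{n_i}\) and \(\underline n!\eqdef\prod_in_i!\). A state in \(\cH_{q,N}^{(B)}\) then becomes a homogeneous polynomial \(f\) of degree \(N\) with degree at most \(B\) in each variable. Up to the nonzero scalar \(\gamma^{|\underline e|/2}(1-\gamma)^{(N-|\underline e|)/2}\), and after rescaling basis vectors by factorials, \(A^{(\gamma)}_{\underline e}\) acts as the partial derivative \(\partial^{\underline e}\). So for \(|\underline e|=r\), the span \(E_r(\psi)\eqdef\operatorname{span}\{A^{(\gamma)}_{\underline e}\psi:|\underline e|=r\}\) is isomorphic to the span of the order-\(r\) partial derivatives of \(f\). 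It lies in \(\cH_{q,N-r}^{(B)}\), since derivatives preserve the cap.

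Next I apply the Knill--Laflamme conditions to an orthonormal basis \(\lvert c_1\rangle,\ldots,\lvert c_K\rangle\). Since \(r\le t\), they give \(\langle c_i\rvert A_{\underline e}^\dagger A_{\underline f}\lvert c_j\rangle=\delta_{ij}\lambda_{\underline e,\underline f}\) for all \(|\underline e|=|\underline f|=r\). Two consequences follow.
\begin{itemize}
\item The spaces \(E_r(c_i)\) are pairwise orthogonal.
\item Each \(\dim E_r(c_i)\) equals the rank of the same Gram matrix \((\lambda_{\underline e,\underline f})\), so all these dimensions are equal.
\end{itemize}
Hence \(K\dim E_r(c_1)\le\dim\cH_{q,N-r}^{(B)}=[z^{N-r}]F_B(z)^q\). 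No nondegeneracy is needed, because only the rank enters.

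It remains to show \(\dim E_r(\psi)\ge h_B(N,r)\) for every nonzero \(\psi\). I take the lexicographically largest monomial \(\underline x^{\underline n}\) of \(f\); this is the extremal-monomial argument of \cite[Thm.~1]{GarciaMarcoKoiranPecatteThomasse2017}. Fix \(\underline e\le\underline n\) with \(|\underline e|=r\). In \(\partial^{\underline e}f\), every monomial \(\underline m\) of \(f\) either is killed (if \(\underline e\not\le\underline m\)) or maps to \(\underline m-\underline e\). Lex order is translation invariant, so \(\underline m-\underline e<\underline n-\underline e\). Therefore \(\partial^{\underline e}f\) has leading monomial \(\underline x^{\underline n-\underline e}\) with nonzero coefficient. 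These leading monomials are distinct for distinct \(\underline e\), so the derivatives are linearly independent. This gives
\[
  \dim E_r(\psi)\ge\#\{\underline e\le\underline n:|\underline e|=r\}=[z^r]\prod_iF_{n_i}(z),
\]
with \(\underline n\in\cS_{q,N}^{(B)}\).

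The main obstacle is minimizing \([z^r]\prod_iF_{n_i}(z)\) over the capped simplex. The aim is to show the minimum is attained at the saturated point \(\underline n^{\mathrm{sat}}\), where the product is \(F_B^aF_c\) (zero coordinates contribute \(F_0=1\)), so the coefficient is \(h_B(N,r)\). I would prove coefficientwise Schur-concavity through a single Robin--Hood step. For \(x\ge y\ge1\), write \((1-z)^2F_xF_y=(1-z^{x+1})(1-z^{y+1})\). Then
\[
  (1-z)^2\bigl(F_xF_y-F_{x+1}F_{y-1}\bigr)=z^y(1-z)\bigl(1-z^{x+2-y}\bigr),
\]
so \(F_xF_y-F_{x+1}F_{y-1}=z^yF_{x+1-y}\), which has nonnegative coefficients. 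Multiplying by the remaining factors \(F_{n_k}\), which also have nonnegative coefficients, preserves the inequality coefficientwise. Any capped \(\underline n\) is reached from \(\underline n^{\mathrm{sat}}\) by reverse Robin--Hood transfers that stay within the cap, because \(\underline n^{\mathrm{sat}}\) majorizes every point of \(\cS_{q,N}^{(B)}\). Each such transfer can only increase the coefficient, so the saturated point is the minimizer, and combining with the packing inequality gives \eqref{eq:quantum-cap-converse}.
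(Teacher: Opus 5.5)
Your proposal is correct and follows essentially the paper's route: the Bargmann derivative picture, the lexicographically-largest-monomial independence argument, a two-factor transfer identity that puts the minimum of $[z^r]\prod_iF_{n_i}$ at the saturated point, and Knill--Laflamme Gram-rank packing into $\cH_{q,N-r}^{(B)}$. The one organizational difference is that the paper concentrates units from the leading profile toward $\underline n^{\mathrm{sat}}$ and proves termination by a sum-of-squares argument, rather than invoking majorization and the integer Hardy--Littlewood--P\'olya criterion. You also have an arithmetic slip: the correct identity is $F_xF_y-F_{x+1}F_{y-1}=z^yF_{x-y}=\sum_{j=y}^{x}z^j$ (for $x=y=1$ the difference is $z$, not $z+z^2$), but it still has nonnegative coefficients, so your conclusion is unaffected.
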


\begin{proof}
Remove the nonzero scalar
\(\gamma^{r/2}(1-\gamma)^{(N-r)/2}\) from each order-\(r\)
Kraus operator and denote the resulting operator by \(L_{\underline e}\).
Under the Bargmann identification
\(\lvert\underline n\rangle\mapsto
\prod_i x_i^{n_i}/\sqrt{\prod_i n_i!}\),
it acts as
\(\partial^{\underline e}/\sqrt{\prod_i e_i!}\).
Fix a nonzero \(\psi\in Q\).  It is represented by a homogeneous polynomial \(f\)
of degree \(N\), of degree at most \(B\) in each variable.

Let \(\prod_i x_i^{n_i^\ast}\) be the lexicographically largest
monomial of \(f\).  For every \(\underline e\leq\underline n^\ast\)
of total order \(r\), the leading monomial of
\(\partial^{\underline e}f\) has exponent
\(\underline n^\ast-\underline e\): differentiation preserves the
lexicographic order among the surviving terms.  These leading monomials
are distinct, so the selected derivatives are linearly independent.
This is the extremal-monomial argument of
\cite[Thm.~1]{GarciaMarcoKoiranPecatteThomasse2017}, which applies over
\(\mathbb C\) as well.  It gives
\[
 \dim\operatorname{span}\{L_{\underline e}\psi:|\underline e|=r\}
 \geq [z^r]\prod_i F_{n_i^\ast}(z).
\]
It remains to minimize this coefficient over capped occupation profiles.
For \(1\leq u\leq v<B\),
\[
 F_u(z)F_v(z)-F_{u-1}(z)F_{v+1}(z)=\sum_{j=u}^{v}z^j.
\]
Hence transferring one unit from an occupation \(u\geq1\) to another
occupation \(v\) with \(u\leq v<B\) cannot increase any coefficient of the
product.
Repeated transfers terminate at \(a\) entries equal to \(B\), one
entry \(c\) if \(c>0\), and zeros, that is, at the saturated point
\(\underline n^{\mathrm{sat}}\) of \eqref{eq:extreme-center}.  Termination follows since each
transfer strictly increases the bounded integer sum of squares.
The derivative rank is therefore at least \(h_B(N,r)\).

For an orthonormal basis \(\psi_1,\ldots,\psi_K\) of \(Q\),
the exact Knill--Laflamme conditions give a positive semidefinite matrix
\(\lambda^{(r)}\), indexed by order-\(r\) loss patterns, such that
\[
 \langle L_{\underline e}\psi_i,L_{\underline f}\psi_j\rangle
 =\delta_{ij}\lambda^{(r)}_{\underline e,\underline f}.
\]
The joint error-image Gram matrix is \(I_K\otimes\lambda^{(r)}\);
its rank is \(K\operatorname{rank}\lambda^{(r)}\).
The one-state bound just proved gives
\(\operatorname{rank}\lambda^{(r)}\geq h_B(N,r)\).
Every image lies in \(\cH_{q,N-r}^{(B)}\), proving
\eqref{eq:quantum-cap-converse}.  This uses the actual Gram rank,
as in \cite[Eq.~(11)]{ChiribellaEtAl2011}, without assuming
that the error matrix is invertible.
\end{proof}

The formula includes \(r=0\), which gives the input dimension bound,
and \(r=N\), which forces \(K=1\).  For \(B=1\),
\(h_1(N,r)=\binom Nr\).  The lower bound on the one-state derivative
rank is attained by a saturated occupation monomial; this does not
assert attainability of the quantum dimension bound for \(K>1\).
The hypothesis concerns correction through order \(t\), rather than
exact correction of every event of the full physical loss channel.

\begin{corollary}[Fixed-cap converse exponent]
\label{cor:quantum-cap-exponent}
Fix \(B\geq1\).  Let \(Q_N\) be nonzero exact codes as in
Theorem~\ref{thm:quantum-cap-converse}, with dimensions \(K_N\),
\(q_N/N\to\alpha\geq1/B\) finite, and \(t_N/N\to\delta\in[0,1]\).
Then
\begin{equation}
\label{eq:quantum-cap-exponent}
 \limsup_{N\to\infty}\frac{\log_2K_N}{N}
 \leq
 \min_{0\leq s\leq\delta}
 \left\{
 \alpha H_B^{\mathrm{coeff}}\!\left(\frac{1-s}{\alpha}\right)
 -\frac1B H_B^{\mathrm{coeff}}(Bs)
 \right\}.
\end{equation}
The endpoint conventions are those of
\eqref{eq:H-cap-coefficient}.
\end{corollary}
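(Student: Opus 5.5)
The plan is to take logarithms in Theorem~\ref{thm:quantum-cap-converse} at each order \(r_N=\lfloor sN\rfloor\) with \(0\leq s\leq\delta\), evaluate both coefficients with the large-powers estimate \eqref{eq:coefficient-exponent}, and then minimize over \(s\). Fix \(s\in[0,\delta]\). If \(s<\delta\), then \(r_N\leq t_N\) for all large \(N\). If \(s=\delta\), take \(r_N=t_N\) and note that \(t_N/N\to\delta\). Either way the hypothesis of the theorem holds at order \(r_N\), and \eqref{eq:quantum-cap-converse} gives
\[
  \frac{\log_2K_N}{N}\leq\frac1N\log_2[z^{N-r_N}]F_B(z)^{q_N}-\frac1N\log_2h_B(N,r_N).
\]

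For the first term, apply \eqref{eq:coefficient-exponent} with \(n_k=q_N\) and \(j_k=N-r_N\). The ratio satisfies \(j_k/n_k\to(1-s)/\alpha\), so this term tends to \(\alpha H_B^{\mathrm{coeff}}((1-s)/\alpha)\). For the second term, write \(h_B(N,r)=[z^r]F_B(z)^aF_c(z)\) with \(a=\lfloor N/B\rfloor\). The factor \(F_c\) has bounded degree and nonnegative coefficients with constant term one, so \([z^{r}]F_B^a\leq h_B(N,r)\). Applying \eqref{eq:coefficient-exponent} with \(n_k=a\) and \(j_k=r_N\), the ratio \(j_k/n_k\to Bs\), so \(\frac1N\log_2h_B\geq\frac1B H_B^{\mathrm{coeff}}(Bs)+o(1)\). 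Only this lower bound on \(h_B\) is needed, because it enters with a minus sign. Taking \(\limsup\) and then the infimum over \(s\) gives \eqref{eq:quantum-cap-exponent}. The bracket is continuous in \(s\) on \([0,\delta]\), so the infimum is a minimum.

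The main obstacle is the boundary regimes where \eqref{eq:coefficient-exponent} is stated only for interior ratios \(\mu\in(0,B)\).
\begin{itemize}
\item If \(Bs\geq B\), that is \(s\geq1/B\), then \(r_N\) may exceed \(aB\) and \(h_B\) may vanish on the left side. In that case I would instead use \(r_N'=\min(r_N,aB)\), or simply observe that when \(Bs\geq B\) the convention \(H_B^{\mathrm{coeff}}=0\) outside \([0,B]\) makes the subtracted term zero. The resulting bound is then implied by the case where the subtracted term is nonnegative, and one checks \(K_N\leq\dim\cH_{q,N-r}^{(B)}\) directly from \(h_B\geq1\) whenever \(r\leq aB+c=N\).
\item For the endpoints \(\mu\in\{0,B\}\), or \((1-s)/\alpha\) equal to \(B\) (possible when \(\alpha=1/B\), \(s=0\)), I would use the elementary sandwich \(1\leq[z^j]F_B^n\leq(B+1)^n\) combined with monotone squeezing. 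Concretely, \([z^j]F_B^n\) is log-concave in \(j\) and symmetric, and \(H_B^{\mathrm{coeff}}\) is continuous on \([0,B]\) with the endpoint values zero. Comparing with interior ratios \(\mu\pm\epsilon\) and letting \(\epsilon\downarrow0\) then gives the \(o(1)\) exponents at the endpoints.
\end{itemize}

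I expect this endpoint bookkeeping, rather than the main saddle computation, to be the only delicate part.
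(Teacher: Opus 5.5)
Your proposal is correct and is essentially the paper's own argument. You choose $r_N/N\to s$ with $r_N\leq t_N$, take logarithms in \eqref{eq:quantum-cap-converse}, and apply \eqref{eq:coefficient-exponent} to both coefficients; the paper absorbs $F_{N\bmod B}$ as a bounded shift, where you use the one-sided bound $[z^r]F_B^a\leq h_B(N,r)$. You then handle the endpoints by a squeeze with continuity and minimize over the compact interval.

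One slip in your first bullet: $Bs\geq B$ means $s=1$, not $s\geq1/B$. Also, $h_B(N,r)\geq1$ never vanishes for $r\leq N$. So at $s=1$, the bound $h_B\geq1$ together with $H_B^{\mathrm{coeff}}(B)=0$ is all you need there.
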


\begin{proof}
For each \(s\in[0,\delta]\), choose
\(r_N\eqdef\min\{t_N,\lfloor sN\rfloor\}\), so \(r_N/N\to s\).
The coefficient estimate \eqref{eq:coefficient-exponent} gives exponents
\(\alpha H_B^{\mathrm{coeff}}((1-s)/\alpha)\) for the right side of
\eqref{eq:quantum-cap-converse} and
\(B^{-1}H_B^{\mathrm{coeff}}(Bs)\) for \(h_B(N,r_N)\).
Indeed, \(\lfloor N/B\rfloor=N/B+O(1)\), and the factor
\(F_{N\bmod B}\) only contributes bounded coefficient shifts;
at an endpoint take a feasible shift and use continuity.
Taking normalized logarithms proves the bound for every \(s\).
Continuity on the compact interval gives the stated minimum.
\end{proof}

Coherent information gives two further upper bounds, which we derive by a
standard argument.  Identifying the
\(N\)-excitation sector with the symmetric subspace of \(N\) particles
with \(q\) levels turns the loss of \(r\) excitations into a partial trace
over \(r\) particles; for \(2r\leq N\) this channel is degradable, so its
coherent information is concave in the input
\cite[arXiv version, Sec.~7.3]{YardHaydenDevetak2008}.

\begin{proposition}[Coherent-information comparison bounds]
\label{prop:quantum-converse-comparison}
Under the assumptions of Theorem~\ref{thm:quantum-cap-converse},
for \(0\leq r\leq t\) with \(2r\leq N\),
\begin{equation}
\label{eq:symmetric-converse-comparison}
 K\leq
 \frac{\binom{q+N-r-1}{N-r}}{\binom{q+r-1}{r}}.
\end{equation}
If \(N\geq2\) and \(t\geq1\), one also has
\begin{equation}
\label{eq:single-loss-cap-comparison}
 K\leq \frac{[z^{N-1}]F_B(z)^q}{q}.
\end{equation}
\end{proposition}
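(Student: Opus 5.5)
The plan is to bound the coherent information of the code from above and below. Let \(\mathcal N_r\) be the restriction of the order-\(r\) pure-loss channel to \(\cH_{q,N}\). By exact correction of all patterns of order at most \(t\), the Kraus operators \(A^{(\gamma)}_{\underline e}\) with \(|\underline e|=r\) satisfy the Knill--Laflamme conditions. After renormalization by the common scalar, their sum \(\sum_{|\underline e|=r}(A_{\underline e})^\dagger A_{\underline e}\) acts on \(\cH_{q,N}\) as a multiple of the identity: indeed \(\sum_{\underline e}\prod_i\binom{n_i}{e_i}=\binom Nr\) by Vandermonde. So this family is a genuine channel \(\mathcal N_r\) into \(\cH_{q,N-r}\), correctable on \(Q\).

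Under the symmetric-subspace identification \(\cH_{q,N}\cong\mathrm{Sym}^N(\mathbb C^q)\), \(\mathcal N_r\) becomes \(\rho\mapsto\operatorname{tr}_r\rho\), a partial trace over \(r\) of the \(N\) particles. Its complementary channel is the partial trace over the other \(N-r\) particles, landing in \(\mathrm{Sym}^r\).

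\textbf{Proof of \eqref{eq:symmetric-converse-comparison}.} Exact correctability of \(Q\) gives, for the maximally mixed code state \(\rho_Q\) with purification, \(I_c(\rho_Q,\mathcal N_r)=\log K\). The coherent information is \(S(\mathcal N_r(\rho))-S(\mathcal N_r^c(\rho))\). For \(2r\le N\) the channel is degradable: tracing further from \(N-r\) particles down to \(r\) yields the complement. Hence \(I_c\) is concave in \(\rho\), by the cited Yard--Hayden--Devetak result. It is also covariant under the unitary group \(U^{\otimes N}\) acting on \(\mathrm{Sym}^N\), which is irreducible.

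The argument then runs in three steps.
\begin{itemize}
\item Concavity, maximum principle: \(\log K=I_c(\rho_Q)\le\max_\rho I_c(\rho)\).
\item By concavity and covariance, the maximum is attained at the twirl of an optimal state. By irreducibility (Schur) that twirl is the maximally mixed state on \(\mathrm{Sym}^N\).
\item Its images are the maximally mixed states on \(\mathrm{Sym}^{N-r}\) and \(\mathrm{Sym}^{r}\), again by covariance and irreducibility. So the maximum is \(\log\binom{q+N-r-1}{N-r}-\log\binom{q+r-1}{r}\), which is the bound.
\end{itemize}
The main care is checking that the twirl of the complementary output is also maximally mixed, which holds since the complement is covariant too.

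\textbf{Proof of \eqref{eq:single-loss-cap-comparison}.} Take \(r=1\), so \(N\ge2\) ensures degradability. The capped code lives in \(\cH^{(B)}_{q,N}\), which is invariant only under the torus and permutations, not the full unitary group. So we do not maximize; we bound the two entropy terms separately.

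\emph{Output term.} The output \(\mathcal N_1(\rho_Q)\) is supported on \(\cH^{(B)}_{q,N-1}\). Hence \(S(\mathcal N_1(\rho_Q))\le\log[z^{N-1}]F_B(z)^q\).

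\emph{Complement term.} The complement output lies on \(\mathrm{Sym}^1=\mathbb C^q\). Its \((i,j)\) entry is \(\operatorname{tr}(\rho\,a_j^\dagger a_i)/N\), up to normalization. Knill--Laflamme correctability implies this complementary output is independent of the code state, i.e.\ it is a fixed state \(\omega\). We need \(S(\omega)\ge\log q\), and the tool is symmetrization. Both the capped space and the bound are invariant under mode permutations and phase rotations. Averaging \(Q\) over these symmetries gives a mixed code that is not itself a subspace, so an alternative route is needed.

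\emph{Decoupling identity.} Exact correction gives \(I_c=S(\text{out})-S(\text{env})=\log K\), and in addition the environment is decoupled from the reference. Hence \(S(\text{out})=\log K+S(\omega)\), which is at least \(\log K\). This alone only yields \(K\le\dim\cH^{(B)}_{q,N-1}\). To gain the factor \(q\) we must show \(\omega=I/q\), i.e.\ \(\langle a_j^\dagger a_i\rangle\propto\delta_{ij}\) with equal diagonal entries. The first attempt is to argue this using only the cap structure; I expect this step to be the main obstacle.

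If equal diagonal entries cannot be forced, the fallback is a rank argument. Let \(\lambda^{(1)}\) be the Knill--Laflamme matrix of single losses, with \(\operatorname{tr}\lambda^{(1)}=N\). One then seeks a combination of entropy with rank: replace \(S(\omega)\) by \(\log\operatorname{rank}\lambda^{(1)}\) via the Gram-rank argument of Theorem~\ref{thm:quantum-cap-converse}. This requires \(\operatorname{rank}\lambda^{(1)}=q\), which would follow if no nontrivial combination \(\sum_i c_ia_i\) annihilates a code state. That non-annihilation step I would verify via the lexicographic-leading-monomial argument applied in every mode direction.
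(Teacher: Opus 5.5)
Your proof of \eqref{eq:symmetric-converse-comparison} is correct and follows the paper's argument. It uses degradability of the particle partial trace for \(2r\le N\), concavity of \(I_c\), the identity \(I_c(\pi_Q,\Phi_r)=\log_2K\), and \(U(q)\)-twirling to the maximally mixed states on the symmetric sectors.

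The gap is in \eqref{eq:single-loss-cap-comparison}. You set aside symmetrization because the averaged object ``is not itself a subspace.'' Nothing requires it to be a code: the coherent-information argument only needs an input state. You already used exactly this in the first part, where the maximally mixed state on \(\mathrm{Sym}^N\) is not a code either.

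Concretely, let \(g\) range over mode phase rotations and mode permutations. These preserve \(\cH^{(B)}_{q,N}\), and \(\Phi_1\) and \(\Phi_1^c\) are covariant under them. Hence \(I_c(g\pi_Qg^\dagger,\Phi_1)=\log_2K\) for every \(g\), and concavity gives \(\log_2K\le I_c(\bar\rho,\Phi_1)\) for the group average \(\bar\rho\). This state is still supported on \(\cH^{(B)}_{q,N}\), so \(S(\Phi_1(\bar\rho))\le\log_2\dim\cH^{(B)}_{q,N-1}\). Its one-particle environment is diagonal by the phase average and has equal diagonal entries by the permutation average, so it equals \(I_q/q\) and contributes exactly \(\log_2q\). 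This is the missing step, and it completes the bound.

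The two substitutes you proposed both fail in general. For the untwirled code, \(\omega_{ii}\) is proportional to \(\langle n_i\rangle\), and a valid code may leave a mode empty. For example, append an unused mode to any exact code on \(q-1\) modes with \(N\le B(q-1)\). Or take \(K=1\) with the single state \(\lvert1,\ldots,1,0\rangle\) at \(B=1\), \(N=q-1\), \(q\ge3\). Then \(\omega\ne I_q/q\) and \(S(\omega)\le\log_2(q-1)\). So the decoupling identity \(S(\mathrm{out})=\log_2K+S(\omega)\) cannot produce the factor \(q\).

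In the same examples, \(a_i\) annihilates every code state for the empty mode \(i\), so \(\operatorname{rank}\lambda^{(1)}<q\). Your claimed non-annihilation is false. The leading-monomial argument of Theorem~\ref{thm:quantum-cap-converse} only guarantees \(\operatorname{rank}\lambda^{(1)}\ge h_B(N,1)=\lceil N/B\rceil\), which can be much smaller than \(q\). Averaging the input state over the cap-preserving symmetry group, justified by concavity, is what replaces both routes.
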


\begin{proof}
Identify the \(N\)-excitation sector with
\(\operatorname{Sym}^N(\mathbb C^q)\), the symmetric subspace of
\(N\) particles with \(q\) levels.  Splitting off \(r\) particles gives
the isometry
\[
 |\underline n\rangle\longmapsto
 \sum_{\substack{\underline e\leq\underline n\\|\underline e|=r}}
 \sqrt{\frac{\prod_i\binom{n_i}{e_i}}{\binom Nr}}\,
 |\underline e\rangle\otimes|\underline n-\underline e\rangle.
\]
Consequently the particle partial trace \(\Phi_r\), retaining \(N-r\)
particles, has Kraus operators \(L_{\underline e}/\sqrt{\binom Nr}\); it
is pure loss conditioned on exactly \(r\) losses, not the truncated channel
of Section~\ref{sec:random}.
It is exactly correctable on \(Q\).  Its complementary channel
\(\Phi_r^c\) retains \(r\) particles, and is obtained from its output
by tracing a further \(N-2r\) particles.  Thus \(\Phi_r\) is degradable,
meaning that its output can simulate its environment.

Write \(S\) for von Neumann entropy in bits.  Coherent information is
\(I_c(\rho,\Phi_r)\eqdef S(\Phi_r(\rho))-S(\Phi_r^c(\rho))\).
It is concave in \(\rho\) for a degradable channel
\cite[arXiv version, Sec.~7.3]{YardHaydenDevetak2008}.
For the maximally mixed code state \(\pi_Q\), exact correction makes
the environment independent of a purifying reference \(R\).
Purity then gives \(I_c(\pi_Q,\Phi_r)=S(R)=\log_2K\).

On the unrestricted symmetric sector, covariance under mode unitaries
\(U(q)\) and concavity allow Haar averaging of \(\pi_Q\).
Each symmetric power is irreducible, so the averaged input and both
marginals are maximally mixed on their symmetric sectors.
Their dimensions are
\(d_m\eqdef\binom{q+m-1}{m}\), giving
\(\log_2K\leq\log_2d_{N-r}-\log_2d_r\).
This proves \eqref{eq:symmetric-converse-comparison}; the unrestricted
averaged input need not retain the cap.

For \(r=1\), instead average only over independent mode phases and
mode permutations, which preserve the cap.  The one-particle
environment is then \(I_q/q\), whereas the output stays in
\(\cH_{q,N-1}^{(B)}\).  Concavity and
\(S(\Phi_1(\rho))\leq\log_2\dim\cH_{q,N-1}^{(B)}\) prove
\eqref{eq:single-loss-cap-comparison}.
\end{proof}

Since \(h_B(N,1)=\lceil N/B\rceil\leq q\),
\eqref{eq:single-loss-cap-comparison} is at least as strong as the
rank bound for every single-loss case with \(N\geq2\).
For \(q=N=8\), \(B=2\), and \(t=2\), the coefficients in
\eqref{eq:quantum-cap-converse} are \(784\) and \(10\), giving
\(K\leq78\), whereas \eqref{eq:symmetric-converse-comparison} gives
\(K\leq\lfloor1716/36\rfloor=47\).
For \(q=N=20\), \(B=2\), and \(t=2\),
\eqref{eq:quantum-cap-converse} with \(r=2\) gives \(K\leq5\,936\,860\),
whereas \eqref{eq:symmetric-converse-comparison} and
\eqref{eq:single-loss-cap-comparison} give \(K\leq84\,155\,390\) and
\(K\leq18\,199\,284\).  Neither the rank bound nor
\eqref{eq:symmetric-converse-comparison} dominates the other.

Asymptotically, at \((B,\alpha,\delta)=(2,1,0.1)\)
the exponent in \eqref{eq:quantum-cap-exponent} is approximately
\(1.188\), while the cap-free exponent from
\eqref{eq:symmetric-converse-comparison} is approximately \(1.413\); at
\((B,\alpha,\delta)=(3,1,0.1)\) the order is reversed (\(1.449\) versus
\(1.413\)).  At \((B,\alpha,\delta)=(2,1,0.1)\),
Theorem~\ref{thm:exact-cap-two} achieves the exponent \(0.281\), so a
substantial gap remains between the constructions and the converses.  In
terms of positive-rate endpoints, the bracket in \eqref{eq:quantum-cap-exponent} decreases in
\(s\), since by the envelope identity below \eqref{eq:coefficient-exponent}
its derivative is \(\log_2(w_{(1-s)/\alpha}w_{Bs})<0\) when
\(\alpha B>1\), and it vanishes at \(s=1\); hence the minimum is attained at
\(s=\delta\) and stays positive for every \(\delta<1\).  By contrast,
\eqref{eq:symmetric-converse-comparison} with \(r=\lfloor N/2\rfloor\) forces
zero rate for \(\delta\geq1/2\).  At \((B,\alpha)=(2,1)\) none of these bounds
therefore excludes positive rate for \(\delta<1/2\).

Wilde proves a strong converse for the energy-unconstrained quantum
capacity of the pure-loss bosonic channel \cite{Wilde2026}. This concerns
asymptotic transmission over the full channel, whereas the rank bound of
Theorem~\ref{thm:quantum-cap-converse} gives a finite-length dimension bound under a local
occupation cap.

\section{Random capped codes and approximate correction}
\label{sec:random}

This section treats two independent questions: exact sampling of capped
supports (Subsection~\ref{subsec:samplers}), and approximate correction of
one normalized loss channel by a random partition of a separated support
(Subsection~\ref{subsec:random-partition}).

\subsection{Exact samplers for capped supports}
\label{subsec:samplers}

Proposition~\ref{prop:general-cc} obtains classical supports by a counting
argument.  The random-code discussion of
\cite[Sec.~6.2]{ElimelechAydinBarg2026} asks for a sampling distribution with
\(O(1)\) occupation that still gives linear distance and positive rate.  The
capped simplex and each histogram layer admit exact samplers: the sampler
for the capped simplex below is the standard recursive (counting) method of
random generation \cite{FlajoletZimmermannVanCutsem1994}, specialized to
bounded compositions, and the layer sampler is a uniform permutation of a
multiset.  The expurgation estimate that recovers the full average-ball
exponent is the prior argument cited in the proof.  Without expurgation, a
union bound gives only half of that exponent.

\begin{theorem}[Sampling and expurgation for fixed caps]
\label{thm:random-fixed-cap}
Fix \(B\geq1\), \(\alpha B>1\), and
\(0\leq\delta<1\).  Put \(q_N\eqdef\lfloor\alpha N\rfloor\), and
consider all sufficiently large \(N\), so that the capped simplex is nonempty.

\begin{enumerate}
\item A uniform point of \(\cS_{q_N,N}^{(B)}\) can be sampled exactly in
polynomial expected bit complexity.  For every
\[
  0<\ell<
  2C_B(\alpha)-\Psi_B(\alpha,\delta),
\]
sampling \(L_N\eqdef\lfloor2^{\ell N}\rfloor\) independent points and deleting
one endpoint from every remaining close pair leaves, with probability
\(1-o(1)\), a code of size \(L_N(1-o(1))\) and minimum distance at least
\(\lfloor\delta N\rfloor+1\).

\item For each such \(N\), let
\(\underline{v}^{(N)}=(v_0^{(N)},\ldots,v_B^{(N)})\) be an integer histogram
satisfying
\[
  \sum_{a=0}^{B}v_a^{(N)}=q_N,
  \qquad
  \sum_{a=0}^{B}a\,v_a^{(N)}=N,
\]
and put \(\pi_N(a)\eqdef v_a^{(N)}/q_N\).  Choose these histograms so that
\(\pi_N\) approaches an optimizer in \eqref{eq:R-cc-B}.  The corresponding
layer is
\[
  \{\underline{n}\in\cS_{q_N,N}^{(B)}:
  |\{i:n_i=a\}|=v_a^{(N)}\text{ for every }0\leq a\leq B\}.
\]
A uniform point of this layer can be sampled by a random permutation of
its multiset of symbols.  The same expurgation procedure yields every exponent
\[
  0<\ell<R_B^{\mathrm{cc}}(\alpha,\delta).
\]
\end{enumerate}
All sampled and retained points have local occupation at most \(B\).
\end{theorem}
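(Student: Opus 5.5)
The proof splits into a sampling part and an expurgation part. The expurgation part is the same for the capped simplex and for each layer.

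\emph{Exact samplers.} For part~1, we tabulate the counts \(c_j(s)\eqdef[z^s]F_B(z)^j\) for \(0\leq j\leq q_N\) and \(0\leq s\leq N\). By Proposition~\ref{prop:finite-enumerator} these are exact integers, and the recurrence \(c_{j+1}(s)=\sum_{a=0}^{B}c_j(s-a)\) computes them in \(O(Bq_NN)\) additions of \(O(N)\)-bit integers. We then sample coordinates sequentially: given remaining sum \(s\) and \(j\) remaining modes, set \(n_i=a\) with probability \(c_{j-1}(s-a)/c_j(s)\). This is the recursive method of \cite{FlajoletZimmermannVanCutsem1994}. Each such choice can be realized exactly from fair bits by comparing a uniform integer in \([0,c_j(s))\), generated by rejection from a power of two, against cumulative sums. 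The expected number of bits is then \(O(\log c_j(s))=O(N)\) per step, so the total expected bit complexity is polynomial. For part~2, we take the word \(0^{v_0}1^{v_1}\cdots B^{v_B}\) and apply a uniform random permutation, for instance by Fisher--Yates with exact uniform integers. This gives each layer point with equal probability, since every layer point has the same number \(\prod_a v_a!\) of preimages. Occupation is at most \(B\) by construction.

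\emph{Expurgation.} Let \(T\) be the sampling set: either \(\cS_{q_N,N}^{(B)}\) or the layer. Draw \(L_N\) i.i.d.\ uniform points \(X_1,\ldots,X_{L_N}\). Let \(Y\) be the number of ordered pairs \(i\neq j\) with \(d(X_i,X_j)\leq t_N\). Coincident draws are included, since they have distance zero. Then
\[
  \E Y\leq L_N^2\,P_T(t_N)/|T|^2 .
\]
Deleting one endpoint of each close pair removes at most \(Y\) points and leaves a code of minimum distance at least \(t_N+1\). By Markov's inequality, \(\Pr[Y>L_N\varepsilon_N]\leq \E Y/(L_N\varepsilon_N)\). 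Write \(g\) for the exponent of \(|T|^2/P_T(t_N)\); it suffices that \(\E Y/L_N=2^{N(\ell-g)+o(N)}\) decays exponentially. We choose \(\varepsilon_N=2^{-N(g-\ell)/2}\). Then with probability \(1-o(1)\) the retained code has size at least \(L_N(1-\varepsilon_N)=L_N(1-o(1))\).

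\emph{Exponents.} For part~1, Proposition~\ref{prop:general-exponents} gives \(g=2C_B(\alpha)-\Psi_B(\alpha,\delta)\), so every \(\ell<g\) works. For part~2, the proof of Proposition~\ref{prop:general-cc} gives the layer size \(2^{q_NH_2(\pi_N)+o(N)}\) and the close-pair count at most \(2^{q_N(\Theta_B(\pi;\alpha,\delta)+o(1))}\). Upper semicontinuity of \(\Theta_B\) and \(\pi_N\to\pi^{\mathrm{opt}}\) then give \(g\geq R_B^{\mathrm{cc}}(\alpha,\delta)-o(1)\). We also need integer histograms approaching the optimizer with exactly the right total. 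Such histograms exist by rounding \(q_N\pi^{\mathrm{opt}}\) and correcting the sum \(\sum_a a v_a=N\) with \(O(1)\) unit moves; the hypothesis \(\alpha B>1\) guarantees that the mean \(1/\alpha\) lies strictly inside \((0,B)\).

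The step needing the most care is the lower semicontinuity direction in part~2. We need the entropy term to converge, \(H_2(\pi_N)\to H_2(\pi^{\mathrm{opt}})\), which holds by continuity, while only an upper bound on \(\Theta_B\) along \(\pi_N\) is required. I would take that upper bound directly from the joint-type counting already stated in Proposition~\ref{prop:general-cc}. Everything else is routine. I would also remark that without expurgation the union bound requires \(L_N^2P_T/|T|^2\to0\), which yields only \(\ell<g/2\).
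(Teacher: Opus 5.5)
Your proposal is correct and follows essentially the same route as the paper. You use the same recursive counting sampler with rejection-based exact uniform integers, a uniform permutation for each layer, and exponents taken from Proposition~\ref{prop:general-exponents} and from the joint-type bound with upper semicontinuity of \(\Theta_B\). The one difference is that you prove the expurgation step directly by a first-moment and Markov argument, where the paper cites \cite[Prop.~23 and Lem.~24]{ElimelechAlbertBargAQEC2026} for the same estimate.
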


\begin{proof}
Precompute the integers
\[
  D_B^{\mathrm{samp}}(i,s)\eqdef[z^s]F_B(z)^i
\]
from \(D_B^{\mathrm{samp}}(0,0)=1\) and
\(D_B^{\mathrm{samp}}(i,s)=
\sum_{j=0}^{B}D_B^{\mathrm{samp}}(i-1,s-j)\), with
\(D_B^{\mathrm{samp}}(i,s)=0\) outside \(0\leq s\leq Bi\).
After \(i\) coordinates with sum
\(S_i\) have been chosen, select a feasible next symbol
\(j\in\{0,\ldots,B\}\) with exact
conditional probability
\[
  \frac{D_B^{\mathrm{samp}}(q_N-i-1,N-S_i-j)}
       {D_B^{\mathrm{samp}}(q_N-i,N-S_i)}.
\]
This sequentially produces the uniform distribution on
\(\cS_{q_N,N}^{(B)}\).  The table has polynomially many entries, each of
polynomial bit length for fixed \(B\), so exact rational sampling has
polynomial expected bit complexity.  An exact uniform integer in each
required finite range can be obtained from fair random bits by rejection
from the next power-of-two range, with acceptance probability at least
\(1/2\).  Comparing it with cumulative integer counts selects the next
symbol with the stated probability.  The same uniform-integer procedure
implements the random permutation for a fixed histogram.

For either sampler, the expurgation step is that of
the second statement of \cite[Prop.~23]{ElimelechAlbertBargAQEC2026}
together with \cite[Lem.~24]{ElimelechAlbertBargAQEC2026}: its proof
deletes one endpoint of every pair at distance at most \(t_N\), so the
retained code has minimum distance at least \(t_N+1\) (for \(\delta=0\) it
only removes repeated samples).  For uniform sampling from a set \(T\),
\cite[Lem.~24]{ElimelechAlbertBargAQEC2026} identifies the close-pair probability with \(P_T(t_N)/|T|^2\).  On
the whole capped simplex, \eqref{eq:general-space-asymptotic} and
\eqref{eq:general-pair-asymptotic} give
\(P_T(t_N)/|T|^2=2^{-N(2C_B-\Psi_B+o(1))}\).  On the layer with histogram
\(\underline v^{(N)}\), the method of types gives
\(|T|=2^{q_NH_2(\pi_N)+O(\log N)}\) and
\(P_T(t_N)\leq(q_N+1)^{(B+1)^2}2^{q_N\Theta_N}\), where \(\Theta_N\) is the
largest entropy of a coupling with both marginals \(\pi_N\) and
\(\sum_{a,b}|a-b|P_{ab}\leq2t_N/q_N\); upper semicontinuity gives
\(\limsup_N\Theta_N\leq\Theta_B(\bar\pi;\alpha,\delta)\) when
\(\pi_N\to\bar\pi\), the chosen optimizer of \eqref{eq:R-cc-B}.  Substitution in that statement of \cite[Prop.~23]{ElimelechAlbertBargAQEC2026} gives
the two claimed exponents.
\end{proof}

At \(\alpha=1,B=2\), the layer sampler supplies, for every
\(0<\delta<1/2\), a bounded-occupation distribution of classical codewords of
the kind requested in \cite[Sec.~6.2]{ElimelechAydinBarg2026}, with classical
exponent \(R_{\mathrm{cc}}^{(2)}(\delta)\) after expurgation; the sampler for the whole capped simplex does so only
for \(0<\delta<\delta_{\mathrm{typ}}=4/9\), since
\(\Psi_2(1,\delta)=2C_2(1)\) for \(\delta\geq\delta_{\mathrm{typ}}\) by
Proposition~\ref{prop:two-saddle}.
If an independent and identically distributed (i.i.d.) codebook without
expurgation is required, the direct union bound
still gives every exponent below
\(R_{\mathrm{cc}}^{(2)}(\delta)/2\).
The polynomial-time statement is per sample; it is not a
polynomial-time procedure for listing an exponentially large codebook.

\subsection{Approximate correction by random equipartition}
\label{subsec:random-partition}

The exact codes above need a Tverberg partition of the moment points, and no
efficient procedure for that step is given here.  For one normalized loss
channel we instead split a deterministic separated support from
Proposition~\ref{prop:general-cc} into equal blocks uniformly at random and
take the uniform positive superposition on each block.  Separation makes the
complementary channel classical on the logical basis, so approximate
recovery reduces to concentration of probability vectors
(Lemma~\ref{lem:random-equipartition}).  The resulting exponent
(Corollary~\ref{cor:fixed-cap-aqec}) uses the error intensity of the
normalized channel; for loss probability \(\gamma\leq1/(B+1)\), where we
evaluate that intensity, the exponent exceeds
that of Theorem~\ref{thm:general-fixed-cap}, and in every case a uniformly
random partition succeeds with high probability, with an explicit error
exponent.
We first remove Kraus operators that vanish on the code and fix the channel
normalization.

\begin{lemma}[Deleting Kraus operators that vanish on the code]
\label{lem:zero-kraus}
Let \(\Pi_{Q}\) be the projector onto a code space \(Q\), and let
\(\{E_a\}_{a\in\mathcal I}\) be a Kraus representation of a channel from
\(\cH_{\mathrm{in}}\) to \(\cH_{\mathrm{out}}\).  If
\(E_a\Pi_{Q}=0\) for \(a\in\mathcal I_0\subseteq\mathcal I\), then the operators in
\(\mathcal I_0\) may be removed from every exact or approximate
Knill--Laflamme expression restricted to inputs in \(Q\).  The remaining
family is trace preserving on \(Q\).  After adjoining an orthogonal
output summand, it can be completed to a globally trace-preserving Kraus
family by one further operator that also vanishes on \(Q\).
\end{lemma}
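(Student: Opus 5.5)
The proof is a direct computation with the vanishing operators, in three short steps.

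\emph{Step 1: the Knill--Laflamme expressions.} Exact or approximate, these are built from the matrix elements \(\langle\psi\rvert E_a^\dagger E_b\lvert\phi\rangle\) with \(\psi,\phi\in Q\), equivalently from the compressed operators \(\Pi_Q E_a^\dagger E_b\Pi_Q\). The same holds for the fidelity or recovery quantities evaluated on states supported in \(Q\), since these depend on the Kraus family only through \(E_a\rho E_b^\dagger\) with \(\rho=\Pi_Q\rho\Pi_Q\). Suppose \(a\in\mathcal I_0\). Then \(E_a\Pi_Q=0\), and taking adjoints gives \(\Pi_QE_a^\dagger=0\). Hence every term with an index in \(\mathcal I_0\) is zero: \(\Pi_QE_a^\dagger E_b\Pi_Q=0\) and \(E_a\rho E_b^\dagger=0\), and the same with \(a,b\) exchanged. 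So the restricted channel \(\rho\mapsto\sum_{a\notin\mathcal I_0}E_a\rho E_a^\dagger\) coincides with the original channel on inputs in \(Q\). Every exact or approximate condition is therefore unchanged when these operators are deleted.

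\emph{Step 2: trace preservation on \(Q\).} From \(\sum_{a\in\mathcal I}E_a^\dagger E_a=I\) and Step 1,
\[
  \Pi_Q\Bigl(\sum_{a\notin\mathcal I_0}E_a^\dagger E_a\Bigr)\Pi_Q=\Pi_Q .
\]
This says exactly that the remaining family is trace preserving on \(Q\).

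\emph{Step 3: completion to a global channel.} This is the only step needing care, because the remaining operators need not sum to \(I\) off \(Q\). Set
\[
  X\eqdef I-\sum_{a\notin\mathcal I_0}E_a^\dagger E_a .
\]
Then \(X=\sum_{a\in\mathcal I_0}E_a^\dagger E_a\geq0\). Moreover \(X\Pi_Q=0\), since \(E_a\Pi_Q=0\) for every \(a\in\mathcal I_0\). Adjoin an orthogonal output summand \(\cH_{\mathrm{in}}\), so the output space becomes \(\cH_{\mathrm{out}}\oplus\cH_{\mathrm{in}}\). Define the extra operator \(E_\ast\eqdef 0\oplus X^{1/2}\), mapping into the new summand, and embed each \(E_a\) as \(E_a\oplus0\). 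Then \(E_\ast^\dagger E_\ast=X\), so the completed family sums to \(I\) and is globally trace preserving. The new operator vanishes on \(Q\), because \(\lVert X^{1/2}\psi\rVert^2=\langle\psi\rvert X\lvert\psi\rangle=0\) for \(\psi\in Q\). By Step 1 its presence changes nothing on \(Q\).
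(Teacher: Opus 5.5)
Your proof is correct and matches the paper's argument step for step: all \(\mathcal I_0\)-indexed terms \(\Pi_QE_a^\dagger E_b\Pi_Q\) vanish, the retained sum \(S\) satisfies \(\Pi_QS\Pi_Q=\Pi_Q\), and the family is completed by \((I-S)^{1/2}\) mapped isometrically into an orthogonal output summand. The only cosmetic difference is that you obtain \(X\Pi_Q=0\) directly from \(X=\sum_{a\in\mathcal I_0}E_a^\dagger E_a\), whereas the paper deduces that the completing operator vanishes on \(Q\) from positivity of \(I-S\) together with \(\Pi_Q(I-S)\Pi_Q=0\).
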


\begin{proof}
Put \(S\eqdef\sum_{a\in\mathcal I\setminus\mathcal I_0}E_a^\dagger E_a\).
For \(a\in\mathcal I_0\) and every \(b\in\mathcal I\), one has
\(\Pi_{Q}E_a^\dagger E_b\Pi_{Q}=\Pi_{Q}E_b^\dagger E_a\Pi_{Q}=0\), so all
rows and columns indexed by \(\mathcal I_0\) vanish in the Knill--Laflamme
error matrix and in its approximate variants, the restricted action on the
code is unchanged, and \(\Pi_{Q}S\Pi_{Q}=\Pi_{Q}\), since
\(\sum_{a\in\mathcal I}E_a^\dagger E_a=I\).  Moreover
\(0\leq S\leq I\).  If \(J\) is an isometry from \(\cH_{\mathrm{in}}\) into
an output summand orthogonal to \(\cH_{\mathrm{out}}\), then
\(E_\perp\eqdef J(I-S)^{1/2}\) completes
\(\{E_a:a\notin\mathcal I_0\}\), with the original outputs embedded in the
first summand, to a Kraus family of a channel, and positivity together
with \(\Pi_{Q}(I-S)\Pi_{Q}=0\) gives \(E_\perp\Pi_{Q}=0\).  Thus the
completion introduces no new Knill--Laflamme term on the code.
\end{proof}

Fix \(0<\gamma<1\) and an integer \(0\leq t\leq N\), and make the channel
normalization explicit; the channel below is the truncated channel of
\cite[Eq.~(13) and Props.~3--4]{ElimelechAydinBarg2026}.  Put
\[
  p_{N,t,\gamma}
  \eqdef
  \sum_{r=0}^{t}
  \binom Nr\gamma^r(1-\gamma)^{N-r}.
\]
This is the binomial probability of at most \(t\) losses from an
\(N\)-excitation input.
On the \(N\)-excitation sector of \(q_N\) modes, the normalized channel
truncated to at most \(t\) losses is
\begin{equation}
\label{eq:normalized-truncated-channel}
  \mathcal{N}_{\gamma,\leq t}(\rho)
  \eqdef
  \frac{1}{p_{N,t,\gamma}}
  \sum_{r=0}^{t}
  \sum_{\underline{e}\in\cS_{q_N,r}}
  A_{\underline{e}}^{(\gamma)}\rho
  (A_{\underline{e}}^{(\gamma)})^\dagger.
\end{equation}
The normalization \(p_{N,t,\gamma}\) is independent of the state on the fixed-excitation
sector, so \eqref{eq:normalized-truncated-channel} is trace preserving
there.  Its normalized Kraus operators are
\begin{equation}
\label{eq:normalized-truncated-kraus}
  E_{\underline{e}}
  \eqdef
  \frac{A_{\underline{e}}^{(\gamma)}}
       {\sqrt{p_{N,t,\gamma}}},
  \qquad
  |\underline{e}|\leq t.
\end{equation}

We use the approximate quantum error-correction (AQEC) convention of
\cite{ElimelechAydinBarg2026}.  Let \(L(Q)\) denote the operator
space supported on the code \(Q\), and let
\(\operatorname{id}_{L(Q)}\) be its identity channel.  For states, the
Uhlmann root fidelity is
\(\mathcal F(\rho,\sigma)\eqdef\norm{\sqrt\rho\sqrt\sigma}_1\).  For two
channels with this input space, write
\[
  d_{\mathrm{ch}}(\mathcal A,\mathcal B)
  \eqdef
  \sqrt{1-\mathcal F_e(\mathcal A,\mathcal B)},
\]
where \(\mathcal F_e\) is the worst-case entanglement root fidelity, the
infimum over purified inputs of \(\mathcal F\) between the two outputs.  A
code \(Q\) is \(\varepsilon\)-correcting for \(\mathcal N\) if there is a
recovery channel \(\mathcal D\) such that
\[
  d_{\mathrm{ch}}\!\left(
    \left.\mathcal D\circ\mathcal N\right|_{L(Q)},
    \operatorname{id}_{L(Q)}
  \right)
  \leq\varepsilon.
\]
For a support \(C\) and a trace-preserving Kraus family
\(\{E_e\}_{e=1}^M\) of loss orders at most \(t\) on its Fock span, write
\[
  \kappa_t\eqdef\max_{\underline{n}\in C,\,1\leq e\leq M}
       \langle\underline{n}\rvert E_e^\dagger E_e\lvert\underline{n}\rangle
\]
for the error intensity of
\cite[Definition~13]{ElimelechAlbertBargAQEC2026}, restricted to that
family and support.

General random-subspace bounds relate approximate recovery to the output
purity of a maximally mixed input \cite[Eq.~(30)]{Klesse2007}, and
Haar-random codes approximately correct \(m\) Pauli errors once \(Km\) is
much smaller than the dimension \cite{MaTanWright2025}.  Applied to
the Fock span \(V\) of a separated support of size \(L\), where the outputs
\(E_e\lvert\underline n\rangle\) are orthogonal, Klesse's bound gives average
code entanglement fidelity at least \(1-\sqrt{KM\kappa_t/L}\) for
Haar-random \(K\)-dimensional subcodes of \(V\); the scale
\(KM\kappa_t\ll L\) is therefore known.  The point here is to attain it with
equal-weight partition codes, under the worst-case criterion, with high
probability over a uniform equipartition and with an explicit error
exponent.  For one normalized channel, the diagonal weights form a
probability vector at every support point.  Separation makes the
complementary channel classical on the logical basis, so by the criterion
of B\'eny and Oreshkov \cite{BenyOreshkov2010} the recovery error is
controlled by the largest \(\ell_1\) distance between a block average of
these vectors and their overall average; the next lemma bounds that
distance.

In the lemma, an effective loss Kraus family on the span of a support is a
trace-preserving Kraus family \(\{E_{\underline{e}}\}\) on that span, indexed
by loss patterns, in which each \(E_{\underline{e}}\) is a nonzero multiple of
\(A_{\underline{e}}^{(\gamma)}\) and no two operators are proportional.  It is
obtained by deleting the operators that vanish on the span and merging each
proportionality class into one operator labeled by one of its patterns.
Merging is legitimate: if operators in
one class satisfy \(E_a=c_aE\) on that span, replacing the class by
\(\widetilde E\eqdef(\sum_a|c_a|^2)^{1/2}E\) leaves the restricted channel
unchanged, and Kraus counts refer to the chosen representation.  Merging
matters, for example, for mixtures of pure-loss channels with different
\(\gamma\).  For the family \eqref{eq:normalized-truncated-kraus}, two
distinct non-vanishing patterns are never proportional on the span of a
support, because at a point where both act their outputs are orthogonal;
hence only vanishing patterns are deleted.  A global
representation may also contain the completion from
Lemma~\ref{lem:zero-kraus}; it annihilates the input span and is not
counted.

\begin{lemma}[Random equipartition of a deterministic support]
\label{lem:random-equipartition}
Consider a sequence of classical codes \(C_N\) in discrete simplices, of sizes
\(L_N=K_NT_N\to\infty\), where \(K_N\geq2\), and minimum
distances at least \(t_N+1\).  Suppose a completely positive
trace-preserving (CPTP) channel on
\(\operatorname{span}\{\lvert\underline{n}\rangle:\underline{n}\in C_N\}\)
admits on this input span an effective loss Kraus family
\[
  \{E_{\underline{e}}:\underline{e}\in\cE_N^{\mathrm{eff}}\},
  \qquad
  |\underline{e}|\leq t_N,
\]
and put \(M_N\eqdef|\cE_N^{\mathrm{eff}}|\) and
\(\kappa_N\eqdef\max_{\underline{n}\in C_N,\,\underline{e}\in\cE_N^{\mathrm{eff}}}
\langle\underline{n}\rvert E_{\underline{e}}^\dagger E_{\underline{e}}\lvert\underline{n}\rangle\leq1\),
the error intensity of this family on \(C_N\).  Partition \(C_N\) uniformly at random into
\(K_N\) labeled blocks \(I_0,\ldots,I_{K_N-1}\) of size \(T_N\), and put
\[
  \lvert c_j\rangle
  \eqdef
  \frac1{\sqrt {T_N}}
  \sum_{\underline{n}\in I_j}\lvert\underline{n}\rangle.
\]
If, for some fixed \(\eta>0\) and all sufficiently large \(N\),
\begin{equation}
\label{eq:equipartition-condition}
  K_NM_N\kappa_N\leq L_N^{1-\eta}
\end{equation}
(which holds when \(K_NM_N\leq L_N^{1-\eta}\), since \(\kappa_N\leq1\)),
then, for every fixed \(0<\theta<\eta/2\) and all sufficiently large
\(N\), with probability at least
\begin{equation}
\label{eq:equipartition-success-probability}
  1-
  K_N\exp\!\left(-\frac{L_N^{\eta-2\theta}}{64}\right),
\end{equation}
these states span a code that is \(\varepsilon_N\)-correcting for that
channel, with
\begin{equation}
\label{eq:equipartition-error}
  \varepsilon_N\leq L_N^{-\theta/2}.
\end{equation}
\end{lemma}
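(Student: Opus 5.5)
The plan is to reduce approximate recovery to a classical concentration statement, following the route sketched before the statement. First I would compute the complementary channel on the code. Fix a Stinespring isometry \(V\lvert\psi\rangle=\sum_{\underline e}E_{\underline e}\lvert\psi\rangle\otimes\lvert\underline e\rangle_{\mathrm{env}}\). For \(\underline n,\underline m\in C_N\) distinct, separation (minimum distance \(\geq t_N+1\)) and the loss-confusability criterion \eqref{rem:loss-confusability} give \(\langle\underline m\rvert E_{\underline f}^\dagger E_{\underline e}\lvert\underline n\rangle=0\) for all patterns of order at most \(t_N\). For \(\underline n=\underline m\), distinct non-proportional patterns send \(\lvert\underline n\rangle\) to orthogonal number states. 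Hence the environment operator is
\[
\mathcal N^c(\lvert c_i\rangle\langle c_j\rvert)=\delta_{ij}\,\operatorname{diag}(w^{(j)}),\qquad w^{(j)}_{\underline e}\eqdef\frac1{T_N}\sum_{\underline n\in I_j}p_{\underline n}(\underline e),
\]
with \(p_{\underline n}(\underline e)\eqdef\lVert E_{\underline e}\lvert\underline n\rangle\rVert^2\). Each \(p_{\underline n}\) is a probability vector on \(\cE_N^{\mathrm{eff}}\) by trace preservation on the span.

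Second, I would invoke the Bény--Oreshkov criterion \cite{BenyOreshkov2010}. With the reference channel \(\sigma\mapsto\operatorname{tr}(\sigma)\operatorname{diag}(\bar w)\), where \(\bar w\eqdef L_N^{-1}\sum_{\underline n\in C_N}p_{\underline n}\), it bounds \(1-\mathcal F_e\) by the worst-case purified distance between \(\mathcal N^c\) and this constant channel. Because \(\mathcal N^c\) is block-diagonal and classical, for a purified input with Schmidt weights \(s_j\) the two outputs are \(\sum_j s_j\lvert j\rangle\langle j\rvert\otimes\operatorname{diag}(w^{(j)})\) versus the same with \(\bar w\). Their root fidelity is \(\sum_j s_j\sum_{\underline e}\sqrt{w^{(j)}_{\underline e}\bar w_{\underline e}}\geq1-\tfrac12\max_j\lVert w^{(j)}-\bar w\rVert_1\). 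So \(\varepsilon_N^2\leq\tfrac12\max_j\lVert w^{(j)}-\bar w\rVert_1\), and the whole problem reduces to showing \(\max_j\lVert w^{(j)}-\bar w\rVert_1\leq 2L_N^{-\theta}\) with the stated probability.

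Third, the concentration step, which I expect to be the main obstacle. A block of a uniform equipartition is a uniform \(T_N\)-subset sampled without replacement. Coordinatewise, \(w^{(j)}_{\underline e}\) is a sample mean of values in \([0,\kappa_N]\). Summing coordinate tails over \(M_N\) patterns gives too weak a dependence, so instead I would treat \(f(I_j)\eqdef\lVert w^{(j)}-\bar w\rVert_1\) directly.

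For the mean, I would use Cauchy--Schwarz with the variance for sampling without replacement:
\[
\E f\leq\sum_{\underline e}\sqrt{\operatorname{Var}w^{(j)}_{\underline e}}\leq\sum_{\underline e}\sqrt{\kappa_N\bar w_{\underline e}/T_N}\leq\sqrt{M_N\kappa_N/T_N}=\sqrt{K_NM_N\kappa_N/L_N}\leq L_N^{-\eta/2}.
\]
Here the middle step uses \(p_{\underline n}(\underline e)^2\leq\kappa_N p_{\underline n}(\underline e)\), and the middle-to-last step is Cauchy--Schwarz with \(\sum\bar w=1\).

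For concentration, swapping one element of \(I_j\) changes \(f\) by at most \(2/T_N\), since each \(p_{\underline n}\) has \(\ell_1\) norm one. McDiarmid's inequality for uniform subsets (or the martingale version for sampling without replacement) gives \(\Pr[f\geq\E f+s]\leq\exp(-T_Ns^2/8)\). I would take \(s=L_N^{-\theta}\), which dominates \(L_N^{-\eta/2}\) since \(\theta<\eta/2\). Then \(T_Ns^2=L_N^{1-2\theta}/K_N\), and \(K_N\leq K_NM_N\kappa_N\) would need \(\kappa_N M_N\geq1\), which holds because the \(p_{\underline n}\) are probability vectors. This yields \(K_N\leq L_N^{1-\eta}\) and an exponent of at least \(L_N^{\eta-2\theta}/8\), comfortably stronger than \(/64\).

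Finally, a union bound over the \(K_N\) blocks gives the probability \eqref{eq:equipartition-success-probability}, and \(f\leq2L_N^{-\theta}\) gives \(\varepsilon_N\leq L_N^{-\theta/2}\). The constants need checking, but the slack in \(64\) absorbs them, and an explicit recovery map is not needed since Bény--Oreshkov is existential.
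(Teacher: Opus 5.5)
Your outline is the paper's: a classical complementary channel from separation, Bény--Oreshkov with the constant channel at \(\bar w\), the mean bound \(\sqrt{M_N\kappa_N/T_N}\le L_N^{-\eta/2}\) from the finite-population variance, bounded differences \(2/T_N\) for sampling without replacement, \(M_N\kappa_N\ge1\), and a union bound. Your constant \(8\) is valid and conservative, since the Doob martingale with the swap coupling gives \(\exp(-T_Ns^2/2)\).

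The gap is in your second step. The error \(\varepsilon_N\) is a worst case over \emph{all} purified inputs, but you evaluate the fidelity only for inputs whose Schmidt vectors on the code side are the logical states \(\lvert c_j\rangle\), the basis in which \(\mathcal N^c\) is diagonal. A general input has the form \(\sum_j\lvert\tilde\phi_j\rangle_R\otimes\lvert c_j\rangle\), and its reference vectors need not be orthogonal. Its two outputs are \(\sum_j\lvert\tilde\phi_j\rangle\langle\tilde\phi_j\rvert\otimes\operatorname{diag}(w^{(j)})\) and \(\rho_R\otimes\operatorname{diag}(\bar w)\), with \(\rho_R=\sum_j\lvert\tilde\phi_j\rangle\langle\tilde\phi_j\rvert\). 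These do not commute in general. So your description of the outputs is false for such inputs, and \(\sum_j s_j\sum_{\underline e}\sqrt{w^{(j)}_{\underline e}\bar w_{\underline e}}\) is not their fidelity.

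The inequality you need is still true, but it needs one more argument, and either of two repairs works.

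\emph{Classical register.} The two outputs are partial traces over a classical register \(J\) of \(\sum_j\lvert\tilde\phi_j\rangle\langle\tilde\phi_j\rvert\otimes\lvert j\rangle\langle j\rvert_J\otimes\operatorname{diag}(w^{(j)})\) and of the same operator with \(\bar w\) in place of \(w^{(j)}\). These extended operators are block diagonal, with fidelity \(\sum_j\norm{\tilde\phi_j}^2\sum_{\underline e}\sqrt{w^{(j)}_{\underline e}\bar w_{\underline e}}\). Monotonicity of fidelity under partial trace then gives exactly your lower bound, with \(s_j\) replaced by the logical weights \(\norm{\tilde\phi_j}^2\).

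\emph{The paper's route.} Prove \(\norm{\mathcal N^c-\mathcal R_\lambda}_\diamond=\max_j\norm{p_j-\lambda}_1\); the upper bound rests on trace-norm contractivity of the pinching onto logical diagonal blocks. Then apply Fuchs--van de Graaf to the outputs. This yields the same \(\varepsilon_N^2\le\frac12\max_j\norm{w^{(j)}-\bar w}_1\).

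With either repair, your deviation \(L_N^{-\theta}\) and the rest of your constants go through as you claim.
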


\begin{proof}
Suppress the sequence index and write
\[
  \begin{gathered}
  C\eqdef C_N,\qquad t\eqdef t_N,\qquad L\eqdef L_N,\qquad K\eqdef K_N,\\
  T\eqdef T_N,\qquad M\eqdef M_N,\qquad \kappa\eqdef\kappa_N,\qquad
  \cE^{\mathrm{eff}}\eqdef\cE_N^{\mathrm{eff}}.
  \end{gathered}
\]
Fix a bijection
\[
  \iota:\{1,\ldots,M\}\longrightarrow\cE^{\mathrm{eff}},
\]
and abbreviate \(E_e\eqdef E_{\iota(e)}\).  For every effective Kraus operator
and support point, define
\[
  Y_e(\underline{n})
  \eqdef
  \langle\underline{n}\rvert E_e^\dagger E_e\lvert\underline{n}\rangle
  \in[0,1],
  \qquad
  Y(\underline{n})\eqdef(Y_e(\underline{n}))_{e=1}^{M}.
\]
The effective channel is trace preserving on the Fock span of \(C\), so
each \(Y(\underline{n})\) belongs to the probability simplex
\[
  \Delta_M
  \eqdef
  \left\{p\in\mathbb R_{\geq0}^{M}:\sum_{e=1}^{M}p_e=1\right\}.
\]
Put
\[
  \lambda
  \eqdef
  \frac1L\sum_{\underline{n}\in C}Y(\underline{n})
  \in\Delta_M,
  \qquad
  p_j
  \eqdef
  \frac1T\sum_{\underline{n}\in I_j}Y(\underline{n})
  \in\Delta_M.
\]

First we identify the exact channel norm that must be controlled.
Classical separation makes all off-diagonal complementary-channel terms
zero.  Indeed, if
\(\langle\underline{n}\rvert E_e^\dagger E_f\lvert\underline{m}\rangle\neq0\),
then
\[
  \underline{n}-\iota(e)=\underline{m}-\iota(f).
\]
The two loss orders must agree, say
\(r\eqdef|\iota(e)|=|\iota(f)|\leq t\), and therefore
\[
  d(\underline{n},\underline{m})
  =
  \tfrac12\norm{\iota(e)-\iota(f)}_1
  \leq r
  \leq t.
\]
Separation forces \(\underline{n}=\underline{m}\), hence
\(\iota(e)=\iota(f)\), and bijectivity gives \(e=f\).  Consequently, on a logical matrix
\(\rho=(\rho_{j\ell})\), the complementary channel is the
classical--classical map
\begin{equation}
\label{eq:diagonal-complementary-channel}
  \mathcal N^c(\rho)
  =
  \sum_{j=0}^{K-1}\rho_{jj}
  \sum_{e=1}^{M}p_j(e)\lvert e\rangle\langle e\rvert.
\end{equation}
Let
\[
  \mathcal R_\lambda(\rho)
  \eqdef
  \operatorname{Tr}(\rho)
  \sum_{e=1}^{M}\lambda_e\lvert e\rangle\langle e\rvert,
  \qquad
  \mathcal B\eqdef\mathcal N^c-\mathcal R_\lambda.
\]
Then
\begin{equation}
\label{eq:exact-diagonal-diamond}
  \norm{\mathcal B}_\diamond
  =
  \max_{0\leq j<K}\norm{p_j-\lambda}_1.
\end{equation}
To see the upper bound, let \(X\) be an arbitrary operator on a reference
system and the code, and write \(X_{jj}\) for its \(j\)-th logical
diagonal block.  The orthogonality of the environment basis and the
triangle inequality give
\begin{align*}
  \norm{(\operatorname{id}\otimes\mathcal B)(X)}_1
  &=
  \sum_e
  \norm{\sum_j(p_j(e)-\lambda_e)X_{jj}}_1\\
  &\leq
  \max_j\norm{p_j-\lambda}_1
  \sum_j\norm{X_{jj}}_1
  \leq
  \max_j\norm{p_j-\lambda}_1\norm{X}_1.
\end{align*}
The last step is trace-norm contractivity of logical-basis pinching,
which is an average of unitary conjugations.  Equality is attained on a
logical basis state maximizing \(\norm{p_j-\lambda}_1\), proving
\eqref{eq:exact-diagonal-diamond}.  The B\'eny--Oreshkov
complementary-channel criterion
\cite[Thm.~1 and Cor.~2]{BenyOreshkov2010}, followed by the
Fuchs--van de Graaf inequalities~\cite{FuchsVandeGraaf1999}, gives a
recovery error at most
\begin{equation}
\label{eq:recovery-from-diagonal-diamond}
  \sqrt{\frac12\norm{\mathcal B}_\diamond}
  \leq
  \sqrt{\norm{\mathcal B}_\diamond}.
\end{equation}

It remains to concentrate the \(\ell_1\) deviations.  A fixed block is
marginally a uniform size-\(T\) sample without replacement from the
labeled population \(C\).  The finite-population variance identity, the
decomposition
\(\frac1L\sum_{\underline{n}}\norm{Y(\underline{n})-\lambda}_2^2
=\frac1L\sum_{\underline{n}}\norm{Y(\underline{n})}_2^2-\norm{\lambda}_2^2\),
and \(\norm{Y(\underline{n})}_2^2\leq\max_eY_e(\underline{n})\sum_eY_e(\underline{n})\leq\kappa\)
give
\begin{align}
  \E\norm{p_j-\lambda}_2^2
  &=
  \frac{L-T}{T(L-1)}
  \frac1L\sum_{\underline{n}\in C}
  \norm{Y(\underline{n})-\lambda}_2^2
  \leq
  \frac{\kappa}T,
  \label{eq:probability-vector-variance}\\
  \E\norm{p_j-\lambda}_1
  &\leq
  \sqrt{\frac{M\kappa}T}
  \leq
  L^{-\eta/2}.
  \label{eq:probability-vector-mean}
\end{align}
The final inequality uses \(T=L/K\) and
\eqref{eq:equipartition-condition}.

The symmetric function
\(f(I_j)\eqdef\norm{p_j-\lambda}_1\) changes by at most \(2/T\) when one
sampled population label is replaced.  The bounded-differences inequality
for sampling without replacement
\cite[Prop.~2]{SambaleSinulis2022} therefore yields
\begin{equation}
\label{eq:probability-vector-concentration}
  \Pr\{f(I_j)-\E f(I_j)\geq u\}
  \leq
  \exp\!\left(
    -\frac{Tu^2}{16(1-T/L)}
  \right)
  \leq
  \exp\!\left(-\frac{Tu^2}{16}\right).
\end{equation}
Set \(a_L\eqdef L^{-\theta}\).  Since \(\theta<\eta/2\),
\eqref{eq:probability-vector-mean} is at most \(a_L/2\) for all
sufficiently large \(L\).  Moreover, \(M\kappa\geq1\) because each
\(Y(\underline{n})\) lies in \(\Delta_M\), so
\eqref{eq:equipartition-condition} gives \(T=L/K\geq L^\eta\).  Substitution of
\(u\eqdef a_L/2\) into
\eqref{eq:probability-vector-concentration}, followed by a union bound
over the \(K\) blocks, gives
  \eqref{eq:equipartition-success-probability}.  On the resulting good
  event, \eqref{eq:exact-diagonal-diamond} is at most \(a_L\), and
\eqref{eq:recovery-from-diagonal-diamond} gives
\eqref{eq:equipartition-error}.  A global completion from
Lemma~\ref{lem:zero-kraus} merely appends a zero coordinate to
\(Y,p_j,\lambda\) and changes none of these norms.
\end{proof}

To apply Lemma~\ref{lem:random-equipartition} to the normalized channel
\eqref{eq:normalized-truncated-channel}, we count its effective Kraus
operators.  For fixed \(B,\alpha\), put
\begin{equation}
\label{eq:bar-Lambda-B}
  \overline{\Lambda}_B(\alpha,\delta)
  \eqdef
  \max_{0\leq s\leq\delta}\Lambda_B(\alpha,s).
\end{equation}
On a cap-\(B\) support, a loss operator
\(A_{\underline{e}}^{(\gamma)}\) vanishes
whenever some \(e_i>B\); the same deletion is used for qudit amplitude
damping in \cite[Sec.~5.2.2]{ElimelechAlbertBargAQEC2026}.  For
\(q_N=\lfloor\alpha N\rfloor\) and
\(t_N=\lfloor\delta N\rfloor\), the number of loss patterns that can act
nontrivially is therefore at most
\begin{equation}
\label{eq:effective-kraus-count}
  \widetilde M_N^{(B)}
  \eqdef
  \sum_{r=0}^{t_N}[z^r]F_B(z)^{q_N}.
\end{equation}
Since \(\Lambda_B(\alpha,s)=\alpha H_B^{\mathrm{coeff}}(s/\alpha)\), the
monotonicity recorded below \eqref{eq:coefficient-exponent} shows that
\(\Lambda_B(\alpha,s)\) increases in \(s\) up to \(s=\alpha B/2\) and
decreases afterwards.  Therefore
\[
  \overline{\Lambda}_B(\alpha,\delta)
  =
  \Lambda_B\left(
    \alpha,\min\left\{\delta,\frac{\alpha B}{2}\right\}
  \right).
\]
The exponent below is the capped-alphabet specialization of
\cite[Lem.~30, Eq.~(101)]{ElimelechAlbertBargAQEC2026}: take their alphabet
size to be \(B+1\), their block length to be \(q_N\sim\alpha N\), and
convert their per-coordinate exponent to base two per excitation.  Their
saturation point becomes \(s=\alpha B/2\).  Thus
\begin{equation}
\label{eq:effective-kraus-exponent}
  \widetilde M_N^{(B)}
  =
  2^{N(\overline{\Lambda}_B(\alpha,\delta)+o(1))}.
\end{equation}

\begin{corollary}[Fixed-cap approximate-QEC exponent]
\label{cor:fixed-cap-aqec}
Fix \(B\geq1\), \(\alpha B>1\), a loss probability
\(\gamma\in(0,1)\), and
\(0<\delta<1\).  Put
\(q_N\eqdef\lfloor\alpha N\rfloor\), write
\(D(\delta\|\gamma)\eqdef\delta\log_2\frac\delta\gamma+(1-\delta)\log_2\frac{1-\delta}{1-\gamma}\),
and put
\[
  c_\kappa(\gamma,\delta)
  \eqdef
  \begin{cases}
    -\log_2(1-\gamma)-D(\delta\|\gamma)\,\mathbf 1_{\{\gamma>\delta\}},
      &0<\gamma\leq1/(B+1),\\
    0,&\gamma>1/(B+1).
  \end{cases}
\]
For every exponent
\begin{equation}
\label{eq:aqec-conditions}
  0<k<
  R_B^{\mathrm{cc}}(\alpha,\delta)
  -\overline{\Lambda}_B(\alpha,\delta)
  +c_\kappa(\gamma,\delta),
\end{equation}
and every fixed
\[
  0<c<
  \frac14
  \left(
    R_B^{\mathrm{cc}}(\alpha,\delta)
    -k-\overline{\Lambda}_B(\alpha,\delta)+c_\kappa(\gamma,\delta)
  \right),
\]
there are cap-\(B\) Fock codes that are \(\varepsilon_N\)-correcting for the
normalized channel
\(\mathcal{N}_{\gamma,\leq\lfloor\delta N\rfloor}\) in
\eqref{eq:normalized-truncated-channel}, with a uniformly random
equipartition succeeding with probability \(1-o(1)\), such that
\[
  \liminf_{N\to\infty}\frac1N\log_2K_N\geq k,
  \qquad
  \varepsilon_N\leq2^{-cN}.
\]
In particular, deterministic good partitions exist for all sufficiently
large \(N\).

For \(B=2,\alpha=1\) and \(0<\delta<1/2\), condition \eqref{eq:aqec-conditions} becomes
\[
  k<
  R_{\mathrm{cc}}^{(2)}(\delta)-\Lambda_2(\delta)+c_\kappa(\gamma,\delta).
\]
For \(\gamma>1/3\), where only \(\kappa_N\leq1\) is used, its positive-rate endpoint is \(\delta_2^\ast\approx0.1355\);
for \(\gamma\leq1/3\) it is larger, for example about \(0.146\), \(0.158\),
\(0.185\) and \(0.227\) at \(\gamma=0.05\), \(0.1\), \(0.2\) and \(1/3\); the
last two lie below \(\gamma\), so they concern the truncated channel only
(Remark~\ref{rem:random-scope}).
For \(B=1,\alpha=2\) and \(0<\delta<1/2\), it becomes
\[
  k<
  2-2h_2(\delta)-2h_2(\delta/2)+c_\kappa(\gamma,\delta),
\]
which is positive for \(\delta<\delta_1^\ast(2)\approx0.151\) and, for
\(\gamma\leq1/2\), beyond it (about \(0.170\) at \(\gamma=0.1\)).
\end{corollary}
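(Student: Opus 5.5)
The plan is to combine three earlier results: the constant-composition support of Proposition~\ref{prop:general-cc}, the effective Kraus count \eqref{eq:effective-kraus-exponent}, and Lemma~\ref{lem:random-equipartition}. An upper bound on the error intensity \(\kappa_N\) of the normalized channel then supplies the extra term \(c_\kappa\).

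First fix \(k\) and \(c\) as in the statement. Take a separated support \(C_N\subseteq\cS_{q_N,N}^{(B)}\) of size \(L_N=2^{N(R_B^{\mathrm{cc}}(\alpha,\delta)+o(1))}\) and minimum distance at least \(t_N+1\) from Proposition~\ref{prop:general-cc}. Set \(K_N\eqdef2^{\lceil kN\rceil}\) and discard at most \(K_N-1\) points so that \(K_N\mid L_N\); this does not change the exponent. The effective loss Kraus family of \(\mathcal N_{\gamma,\leq t_N}\) on the span of \(C_N\) is obtained by Lemma~\ref{lem:zero-kraus} and the non-proportionality remark before Lemma~\ref{lem:random-equipartition}. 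It has \(M_N\leq\widetilde M_N^{(B)}=2^{N(\overline\Lambda_B+o(1))}\) operators. Condition \eqref{eq:equipartition-condition} then reads
\[
  k+\overline\Lambda_B(\alpha,\delta)-c_\kappa'(\gamma,\delta)<(1-\eta)R_B^{\mathrm{cc}}(\alpha,\delta),
\]
where \(c_\kappa'\) is the exponential decay rate of \(\kappa_N\). Choosing \(\eta\) small, this holds whenever \(c_\kappa'\geq c_\kappa\). Lemma~\ref{lem:random-equipartition} then gives success probability \(1-o(1)\), since \(K_N\exp(-L_N^{\eta-2\theta}/64)\to0\), and error \(L_N^{-\theta/2}\). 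The margin in \eqref{eq:aqec-conditions} is \(g\eqdef R_B^{\mathrm{cc}}-k-\overline\Lambda_B+c_\kappa\). Taking \(\eta R_B^{\mathrm{cc}}\) close to \(g\) and \(\theta\) close to \(\eta/2\) gives \(\varepsilon_N\leq2^{-cN}\) for every \(c<g/4\), which is the stated range. The existence of a deterministic good partition follows from the positive success probability.

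The main step is the bound on the intensity,
\[
  \kappa_N\leq2^{-N(c_\kappa(\gamma,\delta)+o(1))}.
\]
For \(\gamma>1/(B+1)\) I use only \(\kappa_N\leq1\). Otherwise, write
\[
  \langle\underline n\rvert E_{\underline e}^\dagger E_{\underline e}\lvert\underline n\rangle
  =p_{N,t,\gamma}^{-1}\gamma^r(1-\gamma)^{N-r}\prod_i\binom{n_i}{e_i}.
\]
Since \(n_i\leq B\), each factor satisfies \(\binom{n_i}{e_i}\leq\binom{B}{\lfloor B/2\rfloor}\). I would instead bound the product jointly by the number of ways to choose \(r\) of the \(N\) photons whose per-mode counts are \(e_i\). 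The goal is
\[
  \gamma^r(1-\gamma)^{N-r}\prod_i\binom{n_i}{e_i}\leq(1-\gamma)^N .
\]
This follows if \(\bigl(\gamma/(1-\gamma)\bigr)^{e}\binom{b}{e}\leq1\) for all \(0\leq e\leq b\leq B\), that is, if \(\binom{b}{e}\leq\bigl((1-\gamma)/\gamma\bigr)^e\). Since \(\binom be\leq B^e\), this holds when \(\gamma\leq1/(B+1)\), which is exactly the case threshold.

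It remains to control the normalization. If \(\gamma\leq\delta\), then \(p_{N,t,\gamma}\geq1/2-o(1)\) by the law of large numbers, which gives exponent \(-\log_2(1-\gamma)\). If \(\gamma>\delta\), the binomial lower-tail estimate gives \(p_{N,t,\gamma}=2^{-N(D(\delta\|\gamma)+o(1))}\), which subtracts \(D(\delta\|\gamma)\). Together these give exactly \(c_\kappa\).

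I expect this intensity bound, in particular the per-mode inequality at the threshold \(1/(B+1)\), to be the main obstacle, together with the careful bookkeeping of \(\eta,\theta\) against \(c\). The specializations are then direct substitutions. At \((B,\alpha)=(2,1)\) use \(R_2^{\mathrm{cc}}\geq R_{\mathrm{cc}}^{(2)}\) and \(\overline\Lambda_2=\Lambda_2\), which holds because \(\delta<1/2<\alpha B/2\). At \((1,2)\) use Corollary~\ref{cor:cap-one}, with \(\Lambda_1=2h_2(\delta/2)\) and \(\overline\Lambda_1=\Lambda_1\) for \(\delta<1\). The quoted endpoints are numerical evaluations of these explicit functions.
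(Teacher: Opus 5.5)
Your proposal is correct and follows the paper's proof essentially step for step: the support from Proposition~\ref{prop:general-cc}, the Kraus count \eqref{eq:effective-kraus-exponent}, the per-mode inequality $\binom{b}{e}\bigl(\gamma/(1-\gamma)\bigr)^{e}\leq1$ for $\gamma\leq1/(B+1)$ together with the binomial-tail asymptotics of $p_{N,t_N,\gamma}$, and Lemma~\ref{lem:random-equipartition} with $\eta R_B^{\mathrm{cc}}$ tending to your margin $g$ and $\theta\to\eta/2$. The one point the paper makes explicit and you leave implicit is that $\widetilde M_N^{(B)}\kappa_N\geq1$ forces $c_\kappa(\gamma,\delta)\leq\overline{\Lambda}_B(\alpha,\delta)$, hence $k<R_B^{\mathrm{cc}}(\alpha,\delta)$, and this is what justifies your step of discarding points to make $K_N$ divide $L_N$ without losing exponent.
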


\begin{proof}
Choose a deterministic classical code from
Proposition~\ref{prop:general-cc} with
\[
  L_N
  =
  2^{N(R_B^{\mathrm{cc}}(\alpha,\delta)-o(1))}.
\]
Discarding fewer than \(K_N\) points makes its size divisible by
\(K_N\eqdef\lfloor2^{kN}\rfloor\) without changing the exponent.  By
\eqref{eq:effective-kraus-exponent}, restriction to the span of the
support gives a channel with at most \(\widetilde M_N^{(B)}\) effective
loss Kraus operators, and hence Kraus exponent at most
\(\overline{\Lambda}_B(\alpha,\delta)\).  An optional global completion
annihilates that span and is irrelevant by
Lemma~\ref{lem:zero-kraus}.

On a cap-\(B\) support the normalized
operators \eqref{eq:normalized-truncated-kraus} have diagonal weights
\[
  \langle\underline{n}\rvert E_{\underline{e}}^\dagger E_{\underline{e}}
  \lvert\underline{n}\rangle
  =
  \frac{(1-\gamma)^N}{p_{N,t_N,\gamma}}
  \prod_i\binom{n_i}{e_i}\Bigl(\frac{\gamma}{1-\gamma}\Bigr)^{e_i}.
\]
If \(\gamma\leq1/(B+1)\), each factor is at most one, because
\(\binom{n_i}{e_i}\leq B^{e_i}\) and \(\gamma/(1-\gamma)\leq1/B\), and the
no-loss operator attains equality; thus
\(\kappa_N=(1-\gamma)^N/p_{N,t_N,\gamma}\).  The binomial tail gives
\(N^{-1}\log_2p_{N,\lfloor\delta N\rfloor,\gamma}\to
-D(\delta\|\gamma)\mathbf 1_{\{\gamma>\delta\}}\), so
\(\kappa_N=2^{-N(c_\kappa(\gamma,\delta)+o(1))}\).  For
\(\gamma>1/(B+1)\) we use only \(\kappa_N\leq1\).  At any support point
these weights sum to one, and at most \(\widetilde M_N^{(B)}\) of them are
nonzero, so \(\widetilde M_N^{(B)}\kappa_N\geq1\); with
\eqref{eq:effective-kraus-exponent} this gives
\(c_\kappa(\gamma,\delta)\leq\overline{\Lambda}_B(\alpha,\delta)\), trivially
if \(c_\kappa=0\).  Hence \eqref{eq:aqec-conditions} forces
\(k<R_B^{\mathrm{cc}}(\alpha,\delta)\), as the discarding step above requires.

The strict inequality in
\eqref{eq:aqec-conditions} yields some fixed \(\eta>0\) for which
\[
  K_N\widetilde M_N^{(B)}\kappa_N
  \leq
  L_N^{1-\eta}
\]
after absorbing the \(o(N)\) terms.
Lemma~\ref{lem:random-equipartition}
gives a successful partition with high probability.  Taking \(\eta\)
arbitrarily close to
\[
  \frac{
    R_B^{\mathrm{cc}}(\alpha,\delta)
    -k-\overline{\Lambda}_B(\alpha,\delta)+c_\kappa(\gamma,\delta)
  }{
    R_B^{\mathrm{cc}}(\alpha,\delta)
  }
\]
and then taking \(\theta\) arbitrarily close to \(\eta/2\) in
Lemma~\ref{lem:random-equipartition} gives every error exponent
strictly below the one stated.

For \(B=2,\alpha=1\) and \(0<\delta<1/2\),
\(\overline{\Lambda}_2(1,\delta)=\Lambda_2(\delta)\).  For \(\gamma>1/3\)
the positive-rate condition is therefore the condition
\(R_{\mathrm{cc}}^{(2)}(\delta)-\Lambda_2(\delta)>0\) analyzed in
Theorem~\ref{thm:exact-cap-two}; for \(\gamma\leq1/3\) the positive term
\(c_\kappa\) moves the endpoint beyond \(\delta_2^\ast\).  The quoted
endpoints are floating-point evaluations.

For \(B=1,\alpha=2\) and \(0<\delta<1/2\), the closed formulas are
\[
  R_1^{\mathrm{cc}}(2,\delta)=2-2h_2(\delta),
  \qquad
  \overline{\Lambda}_1(2,\delta)=2h_2(\delta/2)
\]
on this interval, which give the stated condition.
\end{proof}

For \(\gamma>1/(B+1)\) the corollary uses only the bound \(\kappa_N\leq1\),
so \(c_\kappa=0\) and its region lies inside the exact-code existence
region of Theorem~\ref{thm:general-fixed-cap}, since
\(\overline{\Lambda}_B(\alpha,\delta)\geq\Lambda_B(\alpha,\delta)\).  The
threshold \(1/(B+1)\) is a feature of this bound: below it the loss-free
pattern is the most probable one at every support point, and above it the
error intensity is not evaluated here.  For
\(\gamma\leq1/(B+1)\) the exponentially small error intensity of the
normalized channel enlarges it strictly beyond that region.  Indeed, every
\(\pi\in\Pi_B(\alpha)\) has \(\sum_{a,b}|a-b|\pi_a\pi_b\leq B/2\), so the
product coupling is admissible in \eqref{eq:Theta-B-constraints} and
\(R_B^{\mathrm{cc}}(\alpha,\delta)=0\) once \(\delta\geq\alpha B/4\).  Hence
\(\overline{\Lambda}_B=\Lambda_B\) wherever \(k_B(\alpha,\delta)>0\), and there
the condition reads \(k<k_B(\alpha,\delta)+c_\kappa(\gamma,\delta)\) with
\(c_\kappa(\gamma,\delta)\geq-\log_2(1-\min\{\gamma,\delta\})>0\).  The cap-two
endpoints above show the size of the gain.  For \(\gamma\geq\delta\) this
concerns the truncated channel only (Remark~\ref{rem:random-scope}).  In
every case the corollary
also gives the high-probability random-equipartition statement and the
explicit error exponent for one fixed normalized channel.

\begin{remark}[Coordinatewise benchmark]
\label{rem:coordinatewise-benchmark}
The coordinatewise constructions of
\cite[Prop.~2, Prop.~9, and Eq.~(18)]{ElimelechAydinBarg2026}, which control
each diagonal Knill--Laflamme coordinate separately, were extended
to controlled channel families in
\cite[Thm.~21, Eqs.~(73)--(74), and Remark~6]{ElimelechAlbertBargAQEC2026}.
The latter bound retains the error intensity \(\kappa_t\) as well as a
packing probability for independently sampled support points.  To compare
it with Corollary~\ref{cor:fixed-cap-aqec}, use the objects of
Lemma~\ref{lem:random-equipartition} for one support: \(C\) has minimum
distance at least \(t+1\) and size \(L=KT\), with integers \(K,T\geq1\);
\(\{E_e\}_{e=1}^M\) is an effective loss Kraus family of
orders at most \(t\) on its Fock span, to which the count \(M\) and the
intensity \(\kappa_t\) refer; \(Y_e\) are the diagonal weights of the
proof of that lemma; and \(Q\) is spanned by the block states
\(\lvert c_j\rangle\) of a labeled equipartition
\(C=I_0\sqcup\cdots\sqcup I_{K-1}\) into blocks of size \(T\).  The
coordinatewise argument can also be
applied to a uniform equipartition of \(C\): each block is a uniform
size-\(T\) sample without replacement, and a coordinate in \([0,\kappa_t]\)
has sample-mean deviation probability at most
\(2\exp(-2Ta^2/\kappa_t^2)\) at tolerance \(a\).
Indeed, the moment generating function of a sum sampled without replacement
is bounded by that for independent sampling \cite[Thm.~4]{Hoeffding1963}: for
real \(s\), the normalized elementary symmetric mean of the positive numbers
\(\exp(sY_e(\underline{n}))\) is at most the corresponding power of their
arithmetic mean.  Hoeffding's lemma then gives the
stated tail bound.  Using the sufficient coordinate tolerance
\(a\eqdef\varepsilon^2/(2KM)\) from the cited argument and a union bound over the
\(KM\) coordinates gives the fixed-support benchmark
\[
  \Pr\{Q\text{ is }\varepsilon\text{-correcting}\}
  \geq
  1-2KM\exp\!\left(-\frac{L\varepsilon^4}{2K^3M^2\kappa_t^2}\right).
\]
Applying \cite[Thm.~21]{ElimelechAlbertBargAQEC2026} directly would instead
require its packing-probability term.

Replacing \(\kappa_t\) by the coarse bound \(1\), the cap-two benchmark
gives the sufficient positive-rate condition
\[
  R_{\mathrm{cc}}^{(2)}(\delta)-2\Lambda_2(\delta)>0,
\]
whose endpoint is approximately \(0.084\), whereas
Corollary~\ref{cor:fixed-cap-aqec} with \(c_\kappa=0\) needs only
\(R_{\mathrm{cc}}^{(2)}(\delta)-\Lambda_2(\delta)>0\).  For the
family \eqref{eq:normalized-truncated-kraus} and
\(0<\gamma\leq1/(B+1)\), the intensity
\(\kappa_t=(1-\gamma)^N/p_{N,t,\gamma}\) found in the proof of
Corollary~\ref{cor:fixed-cap-aqec} enters both bounds.  The benchmark
then requires
\(k<(R_B^{\mathrm{cc}}-2\overline{\Lambda}_B+2c_\kappa)/3\), whereas
Corollary~\ref{cor:fixed-cap-aqec} requires
\(k<R_B^{\mathrm{cc}}-\overline{\Lambda}_B+c_\kappa\).  Whenever the former
is positive the latter is larger, since their difference
\((2R_B^{\mathrm{cc}}-\overline{\Lambda}_B+c_\kappa)/3\) is then positive.
At \((B,\alpha)=(2,1)\) and \(\gamma=1/3\), for example, the benchmark stays
positive up to \(\delta\approx0.178\) and Corollary~\ref{cor:fixed-cap-aqec}
up to \(\delta\approx0.227\).  There \(\gamma>\delta\), so the truncated
channel conditions on an atypically small number of losses and the
transfer of Remark~\ref{rem:random-scope} does not apply.
\end{remark}

\begin{remark}[Scope of the random-equipartition result]
\label{rem:random-scope}
Corollary~\ref{cor:fixed-cap-aqec} concerns the normalized truncated loss
channels \eqref{eq:normalized-truncated-channel}.  Its \(KM\kappa_t\) condition uses the normalization
of one fixed CPTP channel and is not uniform over arbitrary linear
combinations of an error set.  For that channel-family requirement,
the general theorem of \cite{ElimelechAlbertBargAQEC2026} supplies a
guarantee with its prescribed error family, coefficient constraint,
and intensity parameter.

The prior truncated-to-full-channel transfer
\cite[Thm.~5 and App.~A.2]{ElimelechAydinBarg2026} applies without a new
proof.  In particular, whenever \(\gamma<\delta\), one may take
\[
  \varepsilon_N'
  \eqdef
  \sqrt{
    1-
    \bigl(1-\varepsilon_N^2\bigr)
    p_{N,\lfloor\delta N\rfloor,\gamma}
  },
\]
and the binomial tail implies \(\varepsilon_N'\to0\).
\end{remark}

\begin{remark}[Loss overlaps and the same-Kraus covariance]
\label{rem:loss-overlap-aqec-obstruction}
If a support has minimum distance at least \(t+1\), then
\eqref{eq:loss-overlap-support} gives
\(\Omega_r(\underline{n},\underline{m})=0\) for distinct support points and every
\(r\leq t\).  Thus the same-output off-diagonal overlaps are already
eliminated before the concentration argument begins.

With the indexing and \(M\) from Lemma~\ref{lem:random-equipartition},
suppress the sequence index and define
\[
  \Gamma(\underline{n},\underline{m})
  \eqdef
  \sum_{e=1}^{M}Y_e(\underline{n})Y_e(\underline{m}).
\]
For a uniformly random size-\(T\) sample \(I\) from a support \(C\) of
size \(L\), the finite-population variance identity is
\begin{align}
\label{eq:Gamma-finite-population-variance}
  \E\norm{
    \frac1T\sum_{\underline{n}\in I}Y(\underline{n})-\lambda
  }_2^2
  ={}&
  \frac{L-T}{T(L-1)}
  \left(
    \frac1L\sum_{\underline{n}\in C}\Gamma(\underline{n},\underline{n})
    -
    \frac1{L^2}
    \sum_{\underline{n},\underline{m}\in C}\Gamma(\underline{n},\underline{m})
  \right).
\end{align}
The kernel \(\Omega_r\) matches loss patterns having the same output
descendant, whereas \(\Gamma\) matches the same Kraus label at two support
points.  Consequently, the bracket in
\eqref{eq:Gamma-finite-population-variance} controls the
normalized-channel approximate-QEC region.  It is at most
\(\frac1L\sum_{\underline{n}\in C}\Gamma(\underline{n},\underline{n})\leq\kappa_t\),
which is the intensity refinement used in
Lemma~\ref{lem:random-equipartition}; a covariance-sensitive bound on the
bracket (compare \cite[Remark~6]{ElimelechAlbertBargAQEC2026}) could improve
the region further, and no such improvement is claimed here.
\end{remark}

\section{Growing caps without exponent loss}
\label{sec:log-caps}

The logarithmically capped codes of \cite{ElimelechAydinBarg2026} use a
maximum-ball classical bound; as noted in
\cite[Sec.~6.1]{ElimelechAydinBarg2026}, the average-ball bound of
\cite{GoyalDaoKovacevicKiah2025} does not by itself give supports with a
bounded \(\ell_\infty\) norm.
Corollary~\ref{cor:subexponential-restriction} shows that the stronger
average-ball and Marcus--Roth exponents of \cite{GoyalDaoKovacevicKiah2025}
survive every occupation cap \(B_N\) that tends to infinity with \(N\); for
the explicit caps of order \(\log N\) used below, the typical-set estimates
of \cite{ElimelechAydinBarg2026} and a tail bound show moreover that the
capped part keeps a fraction \(1-o(1)\) of the simplex or of the slice.
This regime serves as a comparison for the fixed-cap results.  We first
record the density consequence of the typical-set estimates.

Fix \(\alpha>0\) and put \(q_N\eqdef\lfloor\alpha N\rfloor\).
For \(\varepsilon>0\), define
\begin{align}
  B_N^{\mathrm{typ}}
  &\eqdef
  \left\lceil
    (1+\varepsilon)\log_{1+\alpha}N
  \right\rceil,
  \label{eq:typical-cap}\\
  \cS_{q_N,N}^{\mathrm{typ}}
  &\eqdef
  \left\{
    \underline{n}\in\cS_{q_N,N}:
    \norm{\underline{n}}_\infty\leq B_N^{\mathrm{typ}},
    \left|
      |\supp\underline{n}|
      -
      \frac{\alpha}{1+\alpha}N
    \right|
    \leq N^{2/3}
  \right\}.
  \label{eq:typical-set}
\end{align}

\begin{proposition}[Prior typical-density estimate]
\label{prop:typical-density}
By \cite[Lems.~23--24 and part~1 of the proof of
Prop.~15]{ElimelechAydinBarg2026}, for every fixed
\(\alpha,\varepsilon>0\),
\[
  \frac{|\cS_{q_N,N}^{\mathrm{typ}}|}
       {|\cS_{q_N,N}|}
  \longrightarrow1.
\]
\end{proposition}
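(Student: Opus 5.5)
The statement asks for two density facts about a uniform point \(\underline{n}\) of \(\cS_{q_N,N}\): the largest coordinate is at most \(B_N^{\mathrm{typ}}\), and the support size is within \(N^{2/3}\) of \(\frac{\alpha}{1+\alpha}N\), each with probability \(1-o(1)\). A union bound over the two complements then gives the proposition. I would prove both through the standard product representation of the uniform distribution on the simplex. Let \(G_0,\ldots,G_{q-1}\) be i.i.d.\ geometric variables with \(\Pr\{G=k\}=(1-x)x^k\) and \(x=1/(1+\alpha)\), so that \(\E G=1/\alpha\). Conditioned on \(\sum_iG_i=N\), the vector \((G_i)\) is uniform on \(\cS_{q,N}\). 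By the lattice local limit theorem (the same estimate as \cite[Thm.~VIII.8]{FlajoletSedgewick2009}, or directly from \(\binom{N+q-1}{q-1}\)), \(\Pr\{\sum_iG_i=N\}=\Theta(N^{-1/2})\). So any event with unconditional probability \(o(N^{-1/2})\) still has conditional probability \(o(1)\).

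For the cap, a union bound over modes gives
\[
  \Pr\bigl\{\max_iG_i>B\bigr\}\leq q\,x^{B+1}=\alpha N(1+\alpha)^{-(B+1)}.
\]
With \(B=B_N^{\mathrm{typ}}\ge(1+\varepsilon)\log_{1+\alpha}N\) this is at most \(\alpha N^{-\varepsilon}\). That is not \(o(N^{-1/2})\) when \(\varepsilon\le1/2\), so the conditioning trick alone is too lossy here, and I regard this as the main obstacle. I would instead compute the conditional probability directly in the simplex. The number of points of \(\cS_{q,N}\) with \(n_0>B\) is \(\binom{N-B-1+q-1}{q-1}\). Dividing by \(\binom{N+q-1}{q-1}\) gives
\[
  \prod_{j=0}^{B}\frac{N-j}{N+q-1-j}\le\Bigl(\frac{N}{N+q-1}\Bigr)^{B+1}=(1+\alpha)^{-(B+1)}(1+o(1))^{B+1}.
\]
Since \(B=O(\log N)\), the correction \((1+O(1/N))^{B+1}\) is \(1+o(1)\). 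Multiplying by \(q\) yields \(O(N^{-\varepsilon})=o(1)\) for every fixed \(\varepsilon>0\).

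For the support size, note that \(|\supp\underline{n}|\) is a sum of the indicators \(\mathbf 1\{G_i\ge1\}\), each with mean \(x\cdot\)… more precisely with mean \(\Pr\{G\ge1\}=x=1/(1+\alpha)\). Hence its mean is \(q/(1+\alpha)=\frac{\alpha}{1+\alpha}N+O(1)\). Hoeffding's inequality bounds the unconditional deviation probability by
\[
  2\exp\bigl(-2N^{4/3}/q\bigr)=\exp(-\Omega(N^{1/3})).
\]
This is \(o(N^{-1/2})\), so the conditioning argument from the first paragraph applies without loss. A direct alternative count is also available: the number of points with support size \(s\) is \(\binom{q}{s}\binom{N-1}{s-1}\), and this is log-concave with peak at \(s\approx\frac{\alpha}{1+\alpha}N\). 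Combining the two estimates by the union bound proves the proposition, consistent with \cite[Lems.~23--24]{ElimelechAydinBarg2026}.
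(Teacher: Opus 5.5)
Your proof is correct. The paper gives no argument of its own here: it imports the statement from \cite[Lems.~23--24 and proof of Prop.~15]{ElimelechAydinBarg2026}. Your proposal is therefore a self-contained replacement for a citation rather than a variant of an in-paper proof.

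Both halves are handled soundly.

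For the support size, the Hoeffding tail \(\exp(-\Omega(N^{1/3}))\) easily absorbs the factor \(O(N^{1/2})\) lost by conditioning on \(\sum_iG_i=N\). The \(O(1)\) gap between \(q_N/(1+\alpha)\) and \(\frac{\alpha}{1+\alpha}N\) means the deviation threshold is \(N^{2/3}-O(1)\) rather than \(N^{2/3}\), but this only changes constants.

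For the maximum, you correctly saw that the same conditioning would give only \(O(N^{1/2-\varepsilon})\). You replaced it by the exact count
\[
\prod_{j=0}^{B}\frac{N-j}{N+q-1-j}\leq\Bigl(\frac{N}{N+q-1}\Bigr)^{B+1},
\]
which yields \(q(1+\alpha)^{-(B+1)}(1+o(1))=O(N^{-\varepsilon})\) for every fixed \(\varepsilon>0\). This direct count is the full-simplex analogue of the tail bound the paper itself uses for zero-count slices in \eqref{eq:positive-composition-tail}--\eqref{eq:slice-max-tail}. It is what makes the constant \(1+\varepsilon\) in \(B_N^{\mathrm{typ}}\) work for arbitrarily small \(\varepsilon\).

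Compared with the paper's citation, your route is longer. In exchange, it does not depend on the cited lemmas and it explains both thresholds:
\begin{itemize}
\item \(\log_{1+\alpha}N\) is exactly where \(q_N(1+\alpha)^{-B}\) stops growing;
\item \(N^{2/3}\) lies far beyond the \(\sqrt N\) fluctuation scale of the support size.
\end{itemize}
The closing remark on log-concavity of \(\binom{q}{s}\binom{N-1}{s-1}\) is not needed.
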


To state the consequence without reproducing the multivariate formulas of
\cite{GoyalDaoKovacevicKiah2025}, let
\(R_{\mathrm{G}}(\alpha,\sigma)\) denote their classical
average-ball exponent per excitation in the simplex
\(\cS_{\lfloor\alpha N\rfloor,N}\), when the required Manhattan (ordinary
\(\ell_1\)) distance is \(\sigma N\).  Let
\(R_{\mathrm{MR}}(\alpha,\sigma)\) denote their stronger
Marcus--Roth (MR)-type exponent \cite{MarcusRoth1992}: the maximum, over the
density \(\tau\) of zero coordinates, of twice the space exponent of the slice
with \(\lfloor\tau N\rfloor\) zero coordinates minus its pair exponent at
Manhattan distance \(\sigma N\).
More precisely, in the notation of the arXiv version of
\cite{GoyalDaoKovacevicKiah2025}, \(R_{\mathrm{G}}\) is the right-hand
side of their standard-simplex Theorem~10, and
\(R_{\mathrm{MR}}\) is the right-hand side of their Theorem~16, with
\(\rho=\alpha\) and their distance parameter set to \(\sigma\); the two slice
exponents are their Propositions~11 and~15, and their Theorem~16 gives the
maximizing \(\tau\) parametrically.
Equivalently, \(R_{\mathrm G}(\alpha,2\delta)=2C_\infty(\alpha)-\Psi_\infty(\alpha,\delta)\),
where \(\Psi_\infty\) is given by \eqref{eq:two-saddle-equations} and \eqref{eq:Psi-two-saddle} with
\(Z_\infty(r,u)\eqdef(1+ru)/((1-r^2)(1-ru))\) in place of \(Z_B\); this is the
saddle-point form of \cite[arXiv version, Eqs.~(22)--(24)]{GoyalDaoKovacevicKiah2025},
since the denominator \((1-x^2)(1-xz)-y(1+xz)\) of their pair generating
function vanishes exactly when \(y=1/Z_\infty(x,z)\).
This also fixes the normalization used in Table~\ref{tab:log-cap-thresholds}.
Our deletion distance is half the ordinary \(\ell_1\) distance, so
\(\sigma=2\delta\).

In this regime the cap-aware denominator has the same exponential order as
the published AAB denominator.
Indeed, let \(B_N\to\infty\), and put
\(q_N\eqdef\lfloor\alpha N\rfloor\) and
\(t_N\eqdef\lfloor\delta N\rfloor\).  The case \(\delta=0\) is immediate.
For \(\delta>0\), fix any integer
\(b>\max\{\delta/\alpha,1/\alpha\}\).  Eventually,
\[
  m_b(q_N,t_N)
  \leq
  m_{B_N}(q_N,t_N)
  \leq
  |\cS_{q_N,t_N}|.
\]
The fixed-cap coefficient asymptotic in
\eqref{eq:general-moment-asymptotic} gives the lower exponential bound,
whereas the stars-and-bars count gives the upper bound
\(\Lambda_\infty(\alpha,\delta)\) of \eqref{eq:Lambda-infty}.  Moreover,
\(\Lambda_b(\alpha,\delta)\uparrow
\Lambda_\infty(\alpha,\delta)\) as \(b\to\infty\): this follows directly
from the saddle formula \eqref{eq:H-cap-coefficient} because \(F_b\) and
\(F_b'\) converge uniformly on
compact subsets of \((0,1)\) to \((1-z)^{-1}\) and \((1-z)^{-2}\), and the
saddle converges to \(\delta/(\alpha+\delta)\).  Hence
\begin{equation}
\label{eq:growing-cap-moment-exponent}
  \frac1N\log_2m_{B_N}(q_N,t_N)
  \longrightarrow
  \Lambda_\infty(\alpha,\delta).
\end{equation}
Thus the cap-aware and published AAB denominators have the same exponent
whenever the cap tends to infinity.  Since
\(|\cS_{q,N}^{(B)}|=m_B(q,N)\), the same argument with \(t_N=N\), using
\eqref{eq:general-space-asymptotic} for the lower bound and
\(\Lambda_\infty(\alpha,1)=C_\infty(\alpha)\) from \eqref{eq:full-simplex-exponent},
gives \(N^{-1}\log_2|\cS_{q_N,N}^{(B_N)}|\to C_\infty(\alpha)\): for every
\(B_N\to\infty\) the capped simplex has relative density \(2^{-o(N)}\) in
\(\cS_{q_N,N}\).  These limits hold along each fixed sequence \((B_N)\), not
uniformly over all sequences.

\begin{corollary}[Growing-cap restriction of the prior full-simplex bound]
\label{cor:goyal-transfer}
Fix \(\alpha>0\), a relative deletion distance \(0\leq\delta<1\), and an
integer sequence \(B_N\to\infty\).  There are classical codes contained in
\(\cS_{q_N,N}^{(B_N)}\), of minimum deletion distance at least
\(\lfloor\delta N\rfloor+1\), with exponent at least
\[
  R_{\mathrm{G}}(\alpha,2\delta).
\]
After the cap-aware AAB lifting, there are exact Fock codes of distance
\(\lfloor\delta N\rfloor+1\) with local occupation at most
\(B_N\) and dimension exponent at least
\begin{equation}
\label{eq:goyal-quantum-transfer}
  R_{\mathrm{G}}(\alpha,2\delta)
  -
  \Lambda_\infty(\alpha,\delta),
\end{equation}
whenever this quantity is positive.  If \(B_N\geq B_N^{\mathrm{typ}}\) for a
fixed \(\varepsilon>0\), the capped simplex keeps a fraction \(1-o(1)\) of
\(\cS_{q_N,N}\).
\end{corollary}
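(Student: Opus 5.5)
The plan is to apply Corollary~\ref{cor:subexponential-restriction} with ambient space \(X_N\eqdef\cS_{q_N,N}\) and candidate set \(T_N\eqdef\cS_{q_N,N}^{(B_N)}\), and then lift with Theorem~\ref{thm:cap-aware-aab}. All distances are deletion distances in the simplex, and the ball sizes are those of the metric on \(X_N\). The first input is the density statement proved just before the corollary: for every \(B_N\to\infty\), \(N^{-1}\log_2|\cS_{q_N,N}^{(B_N)}|\to C_\infty(\alpha)\). Combined with \eqref{eq:full-simplex-exponent}, this gives \(\log_2(|X_N|/|T_N|)=o(N)\), which is exactly the hypothesis of Corollary~\ref{cor:subexponential-restriction}.

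That corollary then produces codes \(C_N\subseteq T_N\) of minimum distance greater than \(t_N=\lfloor\delta N\rfloor\) with
\[
  |C_N|\geq2^{-o(N)}|X_N|^2/P_{X_N}(t_N).
\]
To finish the classical part, I need
\[
  N^{-1}\log_2\bigl(|X_N|^2/P_{X_N}(t_N)\bigr)\to R_{\mathrm G}(\alpha,2\delta).
\]
This is, by definition, the average-ball exponent of \cite{GoyalDaoKovacevicKiah2025} at Manhattan distance \(2t_N\approx2\delta N\), since deletion distance is half the \(\ell_1\) distance. More precisely, their Theorem~10 is obtained as the exponent of \(|X_N|^2/P_{X_N}\). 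I only need a lower bound on the exponent, so it suffices to cite their upper bound on \(P_{X_N}\), equivalently the saddle form \(2C_\infty-\Psi_\infty\) recorded above.

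For the quantum part, apply Theorem~\ref{thm:cap-aware-aab} with cap \(B_N\). The guaranteed dimension is
\[
  K_N\geq1+\lfloor(|C_N|-1)/m_{B_N}(q_N,t_N)\rfloor.
\]
By \eqref{eq:growing-cap-moment-exponent}, the denominator has exponent \(\Lambda_\infty(\alpha,\delta)\); for an upper bound the trivial \(m_{B_N}\leq|\cS_{q_N,t_N}|\) already suffices. Hence \(N^{-1}\log_2K_N\) has \(\liminf\) at least \eqref{eq:goyal-quantum-transfer} whenever that quantity is positive. The Fock code is supported in \(C_N\), so every occupation is at most \(B_N\). For \(q_N\geq2\) the theorem applies for large \(N\). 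The degenerate case \(\delta=0\) is just the whole capped simplex.

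For the last assertion, suppose \(B_N\geq B_N^{\mathrm{typ}}\). Then \(\cS_{q_N,N}^{\mathrm{typ}}\subseteq\cS_{q_N,N}^{(B_N)}\), because the typical set imposes the cap \(B_N^{\mathrm{typ}}\) together with an extra support condition. Proposition~\ref{prop:typical-density} then gives relative density \(1-o(1)\).

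The main obstacle is the identification of the exponent of \(|X_N|^2/P_{X_N}(t_N)\) with \(R_{\mathrm G}(\alpha,2\delta)\) along the integer sequences \(q_N=\lfloor\alpha N\rfloor\) and \(2t_N\). I would handle this by citing their asymptotic pair count and using continuity of the saddle expression in \((\alpha,\sigma)\) to absorb the floors.
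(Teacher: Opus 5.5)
Your proposal is correct and follows the paper's proof essentially step for step. The density statement supplies the hypothesis of Corollary~\ref{cor:subexponential-restriction} for \(T_N=\cS_{q_N,N}^{(B_N)}\) inside \(X_N=\cS_{q_N,N}\), and halving the \(\ell_1\) distance yields \(R_{\mathrm G}(\alpha,2\delta)\). Theorem~\ref{thm:cap-aware-aab} then subtracts \(\Lambda_\infty(\alpha,\delta)\), and the inclusion \(\cS_{q_N,N}^{\mathrm{typ}}\subseteq\cS_{q_N,N}^{(B_N)}\) together with Proposition~\ref{prop:typical-density} gives density \(1-o(1)\). Your observation that the stars-and-bars bound \(m_{B_N}(q_N,t_N)\leq|\cS_{q_N,t_N}|\) already suffices for the lower bound on \(K_N\) is a valid minor simplification of the paper's appeal to the full limit \eqref{eq:growing-cap-moment-exponent}.
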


\begin{proof}
By the density statement after \eqref{eq:growing-cap-moment-exponent},
\(\log_2(|\cS_{q_N,N}|/|\cS_{q_N,N}^{(B_N)}|)=o(N)\), so
Corollary~\ref{cor:subexponential-restriction} transfers the ambient
average-ball exponent to \(\cS_{q_N,N}^{(B_N)}\).  The factor of two
comes from Definition~\ref{def:simplex}.  Finally,
Theorem~\ref{thm:cap-aware-aab} and
\eqref{eq:growing-cap-moment-exponent} subtract the exponent
\(\Lambda_\infty(\alpha,\delta)\) in \eqref{eq:Lambda-infty}.  If
\(B_N\geq B_N^{\mathrm{typ}}\), then
\(\cS_{q_N,N}^{\mathrm{typ}}\subseteq\cS_{q_N,N}^{(B_N)}\), and
Proposition~\ref{prop:typical-density} gives the fraction \(1-o(1)\).
\end{proof}

The use of the full-simplex average-ball rate of \cite{GoyalDaoKovacevicKiah2025} for quantum
partition codes is
already present in \cite[Sec.~5.5.2, Eqs.~(84)--(85), and
App.~A.3.1]{ElimelechAlbertBargAQEC2026}; there an occupation bound
\(O(\ln N)\) is shown only for the random partition codes, which are
approximate (AQEC) codes, and only in the sense that, with high probability,
every codeword has all but a vanishing part of its weight on number states of
occupation \(O(\ln N)\)
\cite[Def.~15, Thm.~22, and Sec.~5.5]{ElimelechAlbertBargAQEC2026}; the exact cap
\((1+\varepsilon)\log_{1+\alpha}N\) with the weaker maximum-ball exponent is due
to \cite[Prop.~15 and Thm.~12]{ElimelechAydinBarg2026}.  Here the rate is
evaluated in the binary-logarithm normalization of
\cite{GoyalDaoKovacevicKiah2025}.  Corollary~\ref{cor:goyal-transfer} adds that every
support point may be restricted to any cap \(B_N\to\infty\), for example the explicit cap \(B_N^{\mathrm{typ}}=O(\log N)\), without
loss of the average-ball exponent.

The same argument can be applied before optimizing over the zero-count
slices.  Suppose a slice has \(s_N=\beta N+o(N)\) nonzero coordinates,
where \(0<\beta<1\).
Conditional on the set of nonzero coordinates, the positive occupations are uniform positive
compositions of \(N\) into \(s_N\) parts.  Those with all parts at most
\(B_N\) number \(m_{B_N-1}(s_N,N-s_N)\), against
\(|\cS_{s_N,N-s_N}|=\binom{N-1}{s_N-1}\) in all, so the argument for
\eqref{eq:growing-cap-moment-exponent}, with \((q_N,t_N,B_N)\) replaced by
\((s_N,N-s_N,B_N-1)\), shows that for every \(B_N\to\infty\) they form a
fraction \(2^{-o(N)}\) of the slice.  For explicit caps the fraction is
\(1-o(1)\): for one positive part \(Y_1\),
\begin{equation}
\label{eq:positive-composition-tail}
  \Pr(Y_1\geq b)
  =
  \frac{\binom{N-b}{s_N-1}}
       {\binom{N-1}{s_N-1}}.
\end{equation}
Consequently,
\begin{equation}
\label{eq:slice-max-tail}
  \Pr\left(\max_i Y_i\geq b\right)
  \leq
  s_N
  \left(\frac{N-s_N}{N-1}\right)^{b-1}.
\end{equation}
A cap
\[
  B_N^{\mathrm{MR}}
  =
  (1+\varepsilon)
  \log_{1/(1-\beta)}N+O(1)
\]
therefore removes only an \(o(1)\) fraction of that slice.

\begin{corollary}[Zero-count-slice transfer]
\label{cor:mr-transfer}
Fix \(\alpha>0\), \(0\leq\delta<1\), and an integer sequence
\(B_N\to\infty\).  Suppose an optimizing zero-count
slice in the Marcus--Roth refinement has \(s_N=\beta N+o(N)\) nonzero
coordinates for a fixed \(0<\beta<1\).  That slice may be intersected with
the occupation cap \(B_N\) without changing its exponent.  The resulting
exact quantum codes of distance \(\lfloor\delta N\rfloor+1\), with local
occupation at most \(B_N\), have dimension
exponent at least
\begin{equation}
\label{eq:mr-quantum-transfer}
  R_{\mathrm{MR}}(\alpha,2\delta)
  -
  \Lambda_\infty(\alpha,\delta),
\end{equation}
whenever the right-hand side is positive.  If \(B_N\geq B_N^{\mathrm{MR}}\)
for a fixed \(\varepsilon>0\), the intersection keeps a fraction \(1-o(1)\)
of the slice.
\end{corollary}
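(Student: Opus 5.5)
The proof mirrors that of Corollary~\ref{cor:goyal-transfer}, with the ambient simplex replaced by the optimizing zero-count slice. Write \(X_N\) for the slice of \(\cS_{q_N,N}\) with exactly \(s_N\) nonzero coordinates, and put \(T_N\eqdef X_N\cap\cS_{q_N,N}^{(B_N)}\). The first step is to show that \(\log_2(|X_N|/|T_N|)=o(N)\). The slice factors as a choice of support, \(\binom{q_N}{s_N}\) ways, times a positive composition of \(N\) into \(s_N\) parts. The cap acts only on the second factor. So \(|T_N|/|X_N|\) equals the fraction of positive compositions with all parts at most \(B_N\), namely \(m_{B_N-1}(s_N,N-s_N)/\binom{N-1}{s_N-1}\), using the bijection obtained by subtracting one from each part.

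To show this fraction is \(2^{-o(N)}\), I rerun the sandwich argument that gave \eqref{eq:growing-cap-moment-exponent}. Substitute \((s_N,N-s_N,B_N-1)\) for \((q_N,t_N,B_N)\), so the ratio \(\alpha\) becomes \(\beta\) and \(\delta\) becomes \(1-\beta\). Fix an integer \(b>(1-\beta)/\beta\). Eventually \(m_b\le m_{B_N-1}\le|\cS_{s_N,N-s_N}|\). Then \eqref{eq:coefficient-exponent} gives the lower exponent \(\beta H_b^{\mathrm{coeff}}((1-\beta)/\beta)\), and this tends to the stars-and-bars exponent \(\Lambda_\infty(\beta,1-\beta)\) as \(b\to\infty\), by the uniform convergence of \(F_b\) and \(F_b'\) already invoked. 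Since \(b\) is arbitrary, the exponents agree.

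Corollary~\ref{cor:subexponential-restriction} now applies with \(X_N\) and \(T_N\). The ambient space is the slice itself with its induced metric, so it yields codes in \(T_N\) of minimum deletion distance above \(t_N\) and size at least \(2^{-o(N)}|X_N|^2/P_{X_N}(t_N)\). By definition of \(R_{\mathrm{MR}}\), this has exponent \(R_{\mathrm{MR}}(\alpha,2\delta)\); the factor two converts deletion distance into Manhattan distance. Because the slice is optimizing, the maximum over \(\tau\) is attained along this sequence of slices. If the optimum is only approached, I take \(\beta\) near the maximizer and let it converge, which I expect needs only continuity of the slice exponents in \(\tau\).

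Next I apply Theorem~\ref{thm:cap-aware-aab} to these supports, with \(m_{B_N}(q_N,t_N)\) as denominator. By \eqref{eq:growing-cap-moment-exponent}, its exponent is \(\Lambda_\infty(\alpha,\delta)\). This gives exact Fock codes with occupation at most \(B_N\) and dimension exponent \eqref{eq:mr-quantum-transfer} whenever that quantity is positive, since then \(K_N\ge1+\lfloor(L_N-1)/m\rfloor\) grows exponentially.

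For the final claim, take \(b=B_N^{\mathrm{MR}}\) in \eqref{eq:slice-max-tail}. This gives \(s_N\,(1-\beta+o(1))^{(1+\varepsilon)\log_{1/(1-\beta)}N}=N^{1-(1+\varepsilon)+o(1)}\to0\). Hence the capped part keeps a fraction \(1-o(1)\) of the slice once \(B_N\ge B_N^{\mathrm{MR}}\).

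The step I expect to be the main obstacle is making the \(o(1)\) in the base uniform. The quantity \((N-s_N)/(N-1)\) is only \(1-\beta+o(1)\), and the \(O(1)\) slack in \(B_N^{\mathrm{MR}}\), together with replacing \(\varepsilon\) by \(\varepsilon/2\), must absorb the factor \(N^{o(1)}\). I would handle this by bounding \(\log\bigl((N-s_N)/(N-1)\bigr)\le\log(1-\beta)+o(1)\) and splitting \(\varepsilon\).
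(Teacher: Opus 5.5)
Your proposal is correct and follows the paper's own proof. The slice factorization and the sandwich argument with \((s_N,N-s_N,B_N-1)\) give relative density \(2^{-o(N)}\), Corollary~\ref{cor:subexponential-restriction} transfers the optimizing slice exponent, Theorem~\ref{thm:cap-aware-aab} together with \eqref{eq:growing-cap-moment-exponent} subtracts \(\Lambda_\infty(\alpha,\delta)\), and \eqref{eq:slice-max-tail} gives the \(1-o(1)\) fraction; the uniformity issue you flag at the end is not a real obstacle, since your own estimate already yields \(N^{-\varepsilon+o(1)}\to0\) for fixed \(\varepsilon>0\).
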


\begin{proof}
For every zero-count slice of fixed density used in the
optimization of~\cite{GoyalDaoKovacevicKiah2025}, the capped portion has
relative density \(2^{-o(N)}\) by the slice count above, so
Corollary~\ref{cor:subexponential-restriction} preserves the classical
slice exponent.  Optimize over the slice parameter and apply
Theorem~\ref{thm:cap-aware-aab} together with
\eqref{eq:growing-cap-moment-exponent}.  If \(B_N\geq B_N^{\mathrm{MR}}\),
\eqref{eq:slice-max-tail} gives the fraction \(1-o(1)\).
\end{proof}

The interior-slice hypothesis holds for every \(0\leq\delta<1\).  In the parameterization of
\cite[arXiv version, Thm.~16]{GoyalDaoKovacevicKiah2025}, with \(\rho=\alpha\),
\[
  \tau_{\mathrm{opt}}
  =
  \frac{\alpha^2}{1+\alpha-\lambda},
  \qquad
  2\delta
  =
  \frac{2\lambda\alpha}{1-\lambda^2+\lambda\alpha}.
\]
For \(0<2\delta<2\), the corresponding parameter satisfies
\(0<\lambda<1\), and the nonzero-coordinate density is
\[
  \beta
  =
  \alpha-\tau_{\mathrm{opt}}
  =
  \frac{\alpha(1-\lambda)}{1+\alpha-\lambda}
  \in(0,1).
\]
Thus every zero-count-slice endpoint in Table~\ref{tab:log-cap-thresholds}
satisfies the hypothesis of Corollary~\ref{cor:mr-transfer}.

Table~\ref{tab:log-cap-thresholds} compares the positive-rate
endpoints.  The second column gives the endpoints of the maximum-ball bound of
\cite[Prop.~15 and Thm.~12]{ElimelechAydinBarg2026}, whose dimension exponent is
\(C_\infty(\alpha)-\bigl(\tfrac{\alpha}{1+\alpha}+\delta\bigr)
h_2\bigl(\tfrac{\alpha}{\alpha+(1+\alpha)\delta}\bigr)-2\Lambda_\infty(\alpha,\delta)\);
the next two give the endpoints of
the full-simplex and zero-count-slice transfers above.  All values use the half-\(\ell_1\)
normalization of this paper.  The prior construction uses the cap
\(B_N^{\mathrm{typ}}\) of \eqref{eq:typical-cap}, and the two transfers hold for
every cap \(B_N\to\infty\), for instance \(B_N^{\mathrm{typ}}\) and the
slice-dependent cap \(B_N^{\mathrm{MR}}\) above.  If a common cap is desired, take
\(B_N^{\mathrm{common}}\eqdef
\max\{B_N^{\mathrm{typ}},B_N^{\mathrm{MR}}\}=O(\log N)\); all three
exponents are then
unchanged.  The table compares endpoints, not the smallest cap constants.

\begin{table}[ht]
\centering
\caption{Positive-rate endpoints in relative deletion distance for exact Fock
codes with growing occupation caps: the prior bound uses a cap of order
\(\log N\), and the two transfers hold for every cap \(B_N\to\infty\); all
values use the per-excitation normalization and are rounded to three decimal
places.}
\label{tab:log-cap-thresholds}
\begin{tabular}{@{}rrrr@{}}
\toprule
\(\alpha\)&
prior maximum-ball bound&
full-simplex transfer&
zero-count-slice transfer\\
\midrule
0.25&0.104&0.116&0.119\\
0.50&0.139&0.150&0.154\\
1.00&0.177&0.188&0.192\\
2.00&0.216&0.227&0.230\\
4.00&0.253&0.265&0.267\\
\bottomrule
\end{tabular}
\end{table}

At \((\alpha,\delta)=(1,0.15)\), the corresponding quantum dimension
exponents per excitation are approximately \(0.209\), \(0.259\),
and \(0.275\), respectively.

\section{Conclusion and open problems}
\label{sec:conclusion}

For every fixed occupation cap \(B\) and \(\alpha B>1\),
Theorem~\ref{thm:general-fixed-cap} gives explicit exponents for exact
Fock-state codes that correct a linear number of losses.  At
\((B,\alpha)=(2,1)\), Tolhuizen's classical input gives a positive
dimension exponent for every \(\delta<\delta_2^\ast\), with
\(\delta_2^\ast\approx0.1355\), and about \(0.281\) at \(\delta=0.1\)
(Theorem~\ref{thm:exact-cap-two}).  For every \(\delta\leq0.130\) it exceeds the
exponents of the padded dual-rail and qutrit-concatenated codes of
Remark~\ref{rem:dual-rail} with outer codes adapted to their located errors,
about \(0.098\) and \(0.171\) at \(\delta=0.1\), but these codes stay positive slightly
beyond \(\delta_2^\ast\), up to \(\delta\approx0.139\) and \(0.140\); at cap
one and \(q=2N\) the dual-rail codes are better for
\(\delta\geq0.1212\).  Against the qutrit rate of
\cite[Eq.~(82)]{ElimelechAlbertBargAQEC2026}, about \(0.275\) at \(\delta=0.1\), the gain comes from the
optimized histogram layer; on that layer, the cap-aware pattern count raises
the exponent at \(\delta=0.1\) only by about \(0.001\).
Tables~\ref{tab:cap-two-improvement}, \ref{tab:fixed-cap-rates},
and~\ref{tab:log-cap-thresholds} record the fixed- and growing-cap
tradeoffs; the growing-cap codes of Section~\ref{sec:log-caps} reach
larger distances under their weaker occupation constraint.  The numerical
endpoints and the finite tables were obtained by computer
(Appendix~\ref{app:verification}).  Exact rational interval arithmetic
certifies the enclosures of \(\delta_2^\ast\) and of the cap-one endpoint \(\delta_1^\ast(2)\approx0.151\), of the
endpoint \(0.1352\) obtained with the count \(\binom{q+t-1}{q-1}\) of AAB
(Remark~\ref{rem:cap-two-numerics}) and of the endpoint \(0.084\) of the coordinatewise benchmark
with error intensity bounded by one (Remark~\ref{rem:coordinatewise-benchmark}), and the
values of \(k_2(0.1)\) and of the optimal-layer fraction \(\rho_{0.1}\) (Remark~\ref{rem:cap-two-numerics}).  The counts in
Example~\ref{ex:finite-moments} and Table~\ref{tab:finite-comparison} are
exact computations; Tables~\ref{tab:fixed-cap-rates}
and~\ref{tab:log-cap-thresholds} and the other quoted non-integer constants
are numerical evaluations.  The computational certificates are archived at
\cite{VerificationSoftware2026}.

Theorem~\ref{thm:minimum-capped-ball}, which identifies the saturated and balanced points as the centers of the smallest and largest capped deletion balls, extends the uncapped results of
\cite{KreindelEssayagZabokritskiy2026b} and gives sphere-packing bounds for separated supports; at
\((B,\alpha)=(2,1)\) they limit what any separated support gives through
the cap-aware count \(m_2\) to exponents below about \(0.620\) at \(\delta=0.1\),
positive only for \(\delta<0.2081\).  The loss-overlap enumerator
refines the pair enumerator by squared Kraus overlaps and admits a
polynomial-size recurrence for fixed cap; it is not used in the existence
bounds.  For arbitrary capped quantum subspaces, degenerate codes included,
Theorem~\ref{thm:quantum-cap-converse} combines extremal-monomial
derivative ranks with quantum packing to give an explicit converse, and
Proposition~\ref{prop:quantum-converse-comparison} gives
coherent-information bounds; neither dominates the other.  Both remain far
above the constructions, with exponents about \(1.188\) and \(1.413\)
against \(0.281\) at \((B,\alpha,\delta)=(2,1,0.1)\); this converse stays
positive for every \(\delta<1\), and only the cap-free coherent-information
bound forces zero rate, for \(\delta\geq1/2\), so at \((B,\alpha)=(2,1)\) none of these bounds excludes
positive rate for \(\delta<1/2\).

For the normalized pure-loss channel truncated to at most
\(\lfloor\delta N\rfloor\) losses, uniformly random equipartitions of a
deterministic separated support give, with high probability, codes with
exponentially vanishing worst-case error
(Corollary~\ref{cor:fixed-cap-aqec}).  For every loss
probability \(\gamma\leq1/(B+1)\), the loss-free pattern is the most probable
one at every support point, so each loss pattern has an explicitly bounded,
exponentially small probability, and the guaranteed rate region strictly
contains the exact-code region of Theorem~\ref{thm:general-fixed-cap}; at
\((B,\alpha)=(2,1)\) with \(\gamma=0.1\) it extends to \(\delta\approx0.158\), and for \(\delta>\gamma\) these codes also have vanishing worst-case error on the untruncated channel (Remark~\ref{rem:random-scope}).
For larger \(\gamma\) we bound the error intensity only by one, so the
threshold \(1/(B+1)\) is a feature of the bound.  The bound of
Corollary~\ref{cor:fixed-cap-aqec} dominates the coordinatewise random-partition benchmark adapted from
\cite[Prop.~9]{ElimelechAydinBarg2026} and
\cite[Thm.~21]{ElimelechAlbertBargAQEC2026} for every \(\gamma\) when both use
the same error-intensity bound.
At \((B,\alpha)=(2,1)\), Theorem~\ref{thm:random-fixed-cap} gives, for every \(0<\delta<1/2\), a
distribution of classical codewords of the kind requested in \cite[Sec.~6.2]{ElimelechAydinBarg2026}.
Efficient constructions of the optimized classical supports, Tverberg
partitions, exact encoders, and recoveries remain open.

Further questions concern the structure of the optimizing types in the
general \((B,\alpha,\delta)\) problem, which we evaluate only
at cap one and at \((B,\alpha)=(2,1)\), sharp converses for arbitrary capped
quantum codes,
and the joint optimization, over the choice of type and support, of the size
of a separated classical support and its affine-hull denominator
\(r_t(C)+1\leq m_B(q,t)\), which may replace \(m_B(q,t)\) in the lifting threshold.  For approximate correction,
the same-output overlaps of Remark~\ref{rem:loss-overlap-aqec-obstruction} already vanish on separated supports, and the
error-intensity bound on the same-Kraus covariance \(\Gamma\) defined there gives the
region above; variance-sensitive concentration
\cite[Remark~6]{ElimelechAlbertBargAQEC2026}, guarantees uniform over
channel families, an evaluation of the error intensity for
\(\gamma>1/(B+1)\), and sharper transfer to the full loss channel could
strengthen it further.

\begin{appendices}
\renewcommand{\theHequation}{appendix.\Alph{section}.\arabic{equation}}

\section{Computer-assisted finite verification}
\label{app:verification}

The finite enumerations, numerical tables and certified endpoint
enclosures described in this appendix are archived at
\cite{VerificationSoftware2026}.

Proposition~\ref{prop:finite-enumerator} was evaluated with exact integer
arithmetic as follows.  Let
\[
  D_j^{\mathrm{pair}}(s_x,s_y,d_{\mathrm{del}})
\]
be the number of ordered assignments to the first \(j\) coordinates with
first sum \(s_x\), second sum \(s_y\), and accumulated positive difference
\(d_{\mathrm{del}}\eqdef\sum_{i=0}^{j-1}(x_i-y_i)_+\).  This statistic equals deletion
distance at the equal-sum terminal states \(s_x=s_y=N\).
Initialize \(D_0^{\mathrm{pair}}(0,0,0)=1\), with all other states zero,
and use the
recurrence
\begin{equation}
\label{eq:dp-recurrence}
  D_{j+1}^{\mathrm{pair}}(s_x,s_y,d_{\mathrm{del}})
  =
  \sum_{a,b=0}^{B}
  D_j^{\mathrm{pair}}\bigl(s_x-a,s_y-b,d_{\mathrm{del}}-(a-b)_+\bigr),
\end{equation}
where states with a negative argument are zero.  Then
\[
  A_{N,q,B}(m)=D_q^{\mathrm{pair}}(N,N,m).
\]
States exceeding either target sum or the required distance cutoff may be
discarded.

The coefficient recurrence was checked against independent labeled
brute-force enumeration for
\[
  2\leq q\leq5,\qquad
  0\leq N\leq8,\qquad
  0\leq B\leq N,\qquad
  0\leq t\leq\min\{4,N\}.
\]
This comprises 180 parameter instances and 820 shell comparisons.  When
\(B\geq N\), every distance shell also agrees with the known full-simplex
enumerator.  Maximum induced balls are optimized over coordinate histograms
of the center and then checked by exhaustive enumeration over all labeled
centers, giving 37685 labeled-center checks.  The calculation also reproduces
all six rows of Table~\ref{tab:finite-comparison} and the dimension change
from 2321 to 2336 in the inactive-moment example.

The four-index loss-overlap recurrence
\eqref{eq:loss-overlap-dp} was checked against independent brute-force
enumeration of every common descendant for
\[
  B=1,\ 1\leq q\leq6;
  \qquad
  B=2,\ 1\leq q\leq4;
  \qquad
  B=3,\ 1\leq q\leq3,
\]
and every \(0\leq N\leq Bq\).  This comprises 72 parameter
instances, 3172 ordered pairs, 14391 pair/loss-order comparisons, and
1384 comparisons of values of \(W_{N,q,B}(m,r)\).  The positivity criterion
\eqref{eq:loss-overlap-support}, the pair-joint-type formula
\eqref{eq:loss-overlap-joint-type}, the cap-one specialization
\eqref{eq:loss-overlap-cap-one}, and
Example~\ref{ex:unequal-loss-overlap} all agreed exactly.

Classical ratios were rounded upward exactly, and quantum dimensions were
computed as in \eqref{eq:finite-K}.  The closed-form cap-two and cap-one functions and
their four endpoints were computed from the entropy and saddle equations
with 100-decimal-digit arithmetic.  Independent exact rational interval
arithmetic certifies the endpoint signs and value enclosures below: it
rounds outwards on a \(10^{-70}\) grid, encloses square roots by integer
square roots, and explicitly bounds the positive tail of the range-reduced
logarithm series.  The resulting certified enclosures have width
\(10^{-21}\); we display outward-rounded four-decimal bounds below, and the
archived certificates retain the full-precision endpoints.  The four
equations are
\begin{align*}
  R_{\mathrm{cc}}^{(2)}(\delta)-\Lambda_2(\delta)&=0,
  &R_{\mathrm{cc}}^{(2)}(\delta)
    -(1+\delta)h_2\!\left(\frac1{1+\delta}\right)&=0,\\
  R_{\mathrm{cc}}^{(2)}(\delta)-2\Lambda_2(\delta)&=0,
  &2-2h_2(\delta)-2h_2(\delta/2)&=0.
\end{align*}
They define, respectively,
\(\delta_2^\ast\), \(\delta_{\mathrm{AAB}}^\ast\),
\(\delta_{\mathrm{AQ,coord}}^\ast\), and
\(\delta_1^\ast(2)\).  In particular,
\(\delta_{\mathrm{AQ,coord}}^\ast\) refers only to the coarse
unit-intensity coordinatewise benchmark of Section~\ref{sec:random}.
Write \(u\eqdef u_\delta^{\mathrm{cc}}\) and \(v\eqdef v_\delta\).
On \(0<\delta<1/2\) these four left-hand
sides are strictly decreasing: the first three derivatives reduce to
\(2\log_2u+\log_2v\),
\(2\log_2u-\log_2((1+\delta)/\delta)\), and
\(2\log_2u+2\log_2v\), respectively, which are negative because \(0<u<1\)
and \(0<v<1\) for \(0<\delta<1/2\) by \eqref{eq:u-cc-implicit} and
\eqref{eq:v-delta}, while the cap-one derivative is the
strictly negative expression in the proof of
Corollary~\ref{cor:cap-one}.  Together with this analytic uniqueness, the
outward-rounded certified brackets below locate the corresponding roots:
\begin{align*}
0.1355
  &<\delta_2^\ast<0.1356,\\
0.1351
  &<\delta_{\mathrm{AAB}}^\ast<0.1352,\\
0.0839
  &<\delta_{\mathrm{AQ,coord}}^\ast<0.0840,\\
0.1513
  &<\delta_1^\ast(2)<0.1514.
\end{align*}
In floating point the four roots are \(0.135544\), \(0.135189\),
\(0.083944\) and \(0.151359\); the values quoted in the text
are their nearest roundings at the displayed precision.
With \(c_\kappa=0\) the channel-normalized random-equipartition condition
gives the same endpoint as Theorem~\ref{thm:exact-cap-two}; the
\(\gamma\)-dependent endpoints quoted in Corollary~\ref{cor:fixed-cap-aqec}
are floating-point evaluations.  At
\(\delta=0.1\), direct interval evaluation gives
\begin{align*}
0.3527
  &<\rho_\delta<0.3528,\\
0.2807
  &<k_2(\delta)<0.2808.
\end{align*}
Table~\ref{tab:cap-two-improvement} combines the case \(B=2\) of
Table~\ref{tab:fixed-cap-rates}, with its endpoint evaluated to four
decimals, with these values of \(\delta_2^\ast\) and \(k_2(0.1)\); the remaining values in Remark~\ref{rem:cap-two-numerics} are
evaluations of \eqref{eq:u-cc-delta}, \eqref{eq:R-cc-two} and
\eqref{eq:cap-two-published-aab} at \(\delta=0.1\).
The 15 endpoints in Table~\ref{tab:log-cap-thresholds} and its three values
at \((\alpha,\delta)=(1,0.15)\) were computed from
\cite[arXiv version, Prop.~9 and Thms.~10, 16]{GoyalDaoKovacevicKiah2025} with ordinary
distance \(2\delta\), the maximum-ball formula of
\cite{ElimelechAydinBarg2026}, and subtraction of the exponent
\(\Lambda_\infty(\alpha,\delta)\) of the published AAB denominator.  All 18 values round to the displayed
three decimal places; they are high-precision (100-digit) numerical
evaluations, not interval certificates.
Table~\ref{tab:fixed-cap-rates} was evaluated with ordinary
floating-point arithmetic and checked to round to the displayed three
decimal places; it is a numerical benchmark, separate from the rational
interval certificates above.
The coefficient identities and recurrences, and the existence statements of
the asymptotic theorems, are proved analytically and do not depend on these
computations.  The numerical values of \(\delta_2^\ast\),
\(\delta_{\mathrm{AAB}}^\ast\), \(\delta_{\mathrm{AQ,coord}}^\ast\),
\(\delta_1^\ast(2)\), \(\rho_{0.1}\) and \(k_2(0.1)\) quoted in
Theorem~\ref{thm:exact-cap-two}, Remark~\ref{rem:cap-two-numerics},
Table~\ref{tab:cap-two-improvement}, Corollaries~\ref{cor:cap-one}
and~\ref{cor:fixed-cap-aqec}, and Section~\ref{sec:random} rest on the
certified enclosures above; Example~\ref{ex:finite-moments} and
Table~\ref{tab:finite-comparison} rest on the exact computations described
above.  The remaining quoted values, namely those in
Remark~\ref{rem:dual-rail}, the values \(0.178\) and \(0.227\) in
Remark~\ref{rem:coordinatewise-benchmark}, the asymptotic exponents and the
endpoint \(0.2081\) in Section~\ref{sec:converses}, and the evaluations in
Remark~\ref{rem:cap-two-numerics}, are floating-point evaluations of
closed-form expressions, not interval certificates; the integer bounds in
Section~\ref{sec:quantum-converse} are exact.

\section{Majorization and extremal balls\texorpdfstring{\\}{} in a capped simplex}
\label{app:minimum-ball}

We prove Theorem~\ref{thm:minimum-capped-ball}.  By the reduction from
majorization to unit smoothings below, it is enough to compare two centers that differ in two
coordinates.  Suppose the old center has
entries \((a,b)\) with \(b-a\geq2\), and the new center has
\((a+1,b-1)\).  Fix the sum \(s\) of the corresponding two coordinates
of a target point.  Write that target pair as \((j,s-j)\) with
\(L_s\leq j\leq U_s\), where
\[
  L_s\eqdef\max\{0,s-B\},
  \qquad
  U_s\eqdef\min\{B,s\}.
\]
On this fiber, the set of target points with the same other coordinates and
the same pair sum \(s\), apply the cyclic permutation
\[
  \phi_s(j)
  \eqdef
  \begin{cases}
    j+1,&j<U_s,\\
    L_s,&j=U_s.
  \end{cases}
\]
Together over all fibers, this defines a bijection of
\(\cS_{q,N}^{(B)}\) to itself.

Call the \(\ell_1\) contribution of these two coordinates the local
distance.  When \(j<U_s\), the local distance from the new center to the
image is exactly the local distance from the old center to the original
target.
At the wrap point \(j=U_s\), define
\[
  f(j)\eqdef|j-a|+|s-j-b|.
\]
The old distance is \(f(U_s)\), while the new one is \(f(L_s-1)\).
The convex function \(f\) is symmetric about
\[
  \mu\eqdef\frac{a+s-b}{2}.
\]
Since \(L_s+U_s=s\) and \(b-a\geq2\),
\[
  \frac{L_s-1+U_s}{2}
  =
  \frac{s-1}{2}
  \geq
  \mu+\frac12.
\]
Thus \(L_s-1\) is no farther from the symmetry center than \(U_s\), and
\(f(L_s-1)\leq f(U_s)\).  The bijection therefore maps every ball about
the old center injectively into the ball of the same radius about the
new center.

Suppose that \(\underline{n}\) majorizes \(\underline{m}\), that both are
sorted in decreasing order, and that they differ.  Let \(i\) be the first
index with \(n_i\neq m_i\), so that \(n_i>m_i\), and let \(j>i\) be the first
index with \(n_j<m_j\).  Then \(n_i\geq m_i+1\geq m_j+1\geq n_j+2\), and
moving one unit from coordinate \(i\) to coordinate \(j\) keeps the cap,
gives a vector that still majorizes \(\underline{m}\) (its sorted partial
sums dominate its unsorted ones), and lowers the
\(\ell_1\) distance to \(\underline{m}\) by two.  Hence a sequence of these
unit Robin--Hood smoothings, interspersed with coordinate permutations,
transforms \(\underline{n}\) into \(\underline{m}\); this is the integer form
of the Hardy--Littlewood--P\'olya criterion \cite{MarshallOlkinArnold2011}.  Iterating the injection
proves \eqref{eq:ball-majorization}.  The saturated point
\(\underline{n}^{\mathrm{sat}}\) in \eqref{eq:extreme-center} majorizes every
capped vector of sum \(N\), while the balanced point
\(\underline{n}^{\mathrm{bal}}\) is
majorized by every such vector.  They therefore minimize and maximize,
respectively, the ball volume at every radius.

It remains to count the balls.  At a center coordinate of value \(j\), a
target coordinate may lose \(0,\ldots,j\) units or gain
\(0,\ldots,B-j\) units, contributing \(b_j(U,V)\).  Because the total
excitation is fixed, total loss and total gain are equal, say to \(m\);
by \eqref{eq:positive-part-distance}, \(m\) is the deletion distance.
Multiplying over coordinates, extracting the diagonal coefficient, and
summing \(m\leq r\) proves \eqref{eq:center-ball-volume}.  Substitution
of \(\underline{n}^{\mathrm{sat}}\) and \(\underline{n}^{\mathrm{bal}}\) gives
\eqref{eq:minimum-ball-volume} and \eqref{eq:maximum-ball-volume}.
Disjoint balls of radius
\(\lfloor(D-1)/2\rfloor\) around codewords then give
\eqref{eq:capped-sphere-packing}.

\end{appendices}

\phantomsection

\end{document}